\documentclass[11pt]{article} 
\pdfoutput=1

\usepackage{hyperref}
\usepackage[titletoc]{appendix}
\AtBeginEnvironment{appendices}{\crefalias{section}{appendix}}
\usepackage{amssymb}

\usepackage{amsmath}

\usepackage{amsthm}
\usepackage{dsfont}
\usepackage{libertine}
\usepackage[capitalise]{cleveref}
\usepackage{thm-restate}
\usepackage{subfig}
\usepackage{environ}
\usepackage{tikzsymbols}
\usepackage[normalem]{ulem}
\usepackage{multirow}
\usepackage{afterpage}

\makeatletter
\renewcommand{\section}{\@startsection {section}{1}{\z@}%
             {-3.5ex \@plus -1ex \@minus -.2ex}%
             {2.3ex \@plus .2ex}%
             {\normalfont\Large\scshape\bfseries}}
\renewcommand{\subsection}{\@startsection{subsection}{2}{\z@}%
             {-3.25ex\@plus -1ex \@minus -.2ex}%
             {1.5ex \@plus .2ex}%
             {\normalfont\large\scshape\bfseries}}
\renewcommand{\subsubsection}{\@startsection{subsubsection}{2}{\z@}%
             {-3.25ex\@plus -1ex \@minus -.2ex}%
             {1.5ex \@plus .2ex}%
             {\normalfont\normalsize\scshape\bfseries}}
\makeatother

\newcommand{\Description}[1] {}

\usepackage{mathtools}

\usepackage{array} 
\usepackage{booktabs}
\usepackage{tabularx,ragged2e}

\usepackage[ruled]{algorithm2e}
\usepackage{algpseudocode}
\usepackage{tikz}
\usepackage{mathrsfs}
\usepackage{enumerate}
\usepackage{bm}
\usetikzlibrary{arrows}
\usepackage{quantikz}

\def\ve{\varepsilon}

\theoremstyle{plain}
\newtheorem{theorem}{Theorem}[section]
\newtheorem{lemma}[theorem]{Lemma}

\newtheorem{prop}[theorem]{Proposition}

\theoremstyle{definition}
\newtheorem{definition}[theorem]{Definition}

\newtheorem{remark}[theorem]{Remark}
\newtheorem{example}[theorem]{Example}

\newcommand {\minusspace} {\: \! \!}

\newcommand {\Fn} [2] {\ensuremath{ #1 \minusspace \Br{ #2 } }}

\newcommand {\set} [1] {\ensuremath{ \left\lbrace #1 \right\rbrace }}

\newcommand{\normthree}[1]{{\left\vert\kern-0.25ex\left\vert\kern-0.25ex\left\vert #1 \right\vert\kern-0.25ex\right\vert\kern-0.25ex\right\vert}}
\newcommand {\br} [1] {\ensuremath{ \left( #1 \right) }}
\newcommand {\Br} [1] {\ensuremath{ \left[ #1 \right] }}

\newcommand {\norm} [1] {\ensuremath{ \left\| #1 \right\| }}
\newcommand {\normsub} [2] {\ensuremath{ \norm{#1}_{#2} }}

\newcommand {\abs} [1] {\ensuremath{ \left| #1 \right| }}

\renewcommand {\bra} [1] {\ensuremath{ \left\langle #1 \right| }}
\renewcommand {\ket} [1] {\ensuremath{ \left| #1 \right\rangle }}
\newcommand {\ketbratwo} [2] {\ensuremath{ \left| #1 \middle\rangle \middle\langle #2 \right| }}
\newcommand {\ketbra} [1] {\ketbratwo{#1}{#1}}
\renewcommand {\braket} [2] {\ensuremath{\left \langle #1 | #2 \right \rangle}}

\newcommand {\defeq} {\ensuremath{ = }}

\newcommand {\prob} [1] {\Fn{\Pr\,}{#1}}

\DeclareMathOperator*{\bigE}{\mathbb{E}}
\newcommand {\expec} [2] {\Fn{\bigE_{\substack{#1}}}{#2}}

\newcommand {\Tr} {\ensuremath{ \mathrm{Tr} }}

\newcommand {\id} {\ensuremath{\mathds{1}}}

\usetikzlibrary{calc}
\tikzset{meter/.append style={draw, inner sep=10, rectangle, font=\vphantom{A}, minimum width=30, line width=.8,
 path picture={\draw[black] ([shift={(.1,.3)}]path picture bounding box.south west) to[bend left=50] ([shift={(-.1,.3)}]path picture bounding box.south east);\draw[black,-latex] ([shift={(0,.1)}]path picture bounding box.south) -- ([shift={(.3,-.1)}]path picture bounding box.north);}}}
 \usetikzlibrary{decorations.pathreplacing}

\NewEnviron{Anonymous}{\ifx\AnonymousSwitch\undefined\BODY\fi}

\graphicspath{ {images/} }

\newcommand {\Wgt}[2] {\ensuremath{\mathrm{W}^{#1}\br{#2}}}

\newcommand{\Matrix} {\ensuremath{\mathcal{M}}}

\newcommand{\Holder} {H\"older}

\newcommand {\QAC}[1][] {\ensuremath{\mathsf{QAC}^{#1}}}
\newcommand {\QACz} {\QAC[0]}
\newcommand {\QTC}[1][] {\ensuremath{\mathsf{QTC}^{#1}}}
\newcommand {\QTCz} {\QTC[0]}
\newcommand{\TC}[1][] {\ensuremath{\mathsf{TC}^{#1}}}

\newcommand {\QNCz} {\ensuremath{\mathsf{QNC}^0}}
\newcommand {\NC}[1][] {\ensuremath{\mathsf{NC}^{#1}}}
\newcommand {\AC}[1][] {\ensuremath{\mathsf{AC}^{#1}}}
\newcommand {\ACz} {\AC[0]}

\newcommand{\ip}[1]{\ensuremath{\left\langle#1\right\rangle}}
\newcommand {\Hastad} {H\aa stad}

\newcommand {\CZGate} {CZ-gate}
\newcommand {\CZGr} {\ensuremath{\operatorname{CZ}}}
\newcommand {\CZG} {\ensuremath{\CZGr_n}}

\newcommand {\poly} {\ensuremath{\operatorname{poly}}}

\newcommand {\parity} {\text{Parity}}
\newcommand {\CParity}[1] {\ensuremath{\operatorname{Parity}_{#1}}}
\newcommand {\CParityn} {\CParity{n}}
\newcommand {\Parity}[1] {\ensuremath{{U_\oplus}_{#1}}}
\newcommand {\Parityn} {\Parity{n}}

\newcommand{\fanout}{\text{Fan-out}}
\newcommand{\threshold}{\text{Threshold}}

\newcommand{\majority}{\text{Majority}}
\newcommand {\Majority}[1] {\ensuremath{\operatorname{Majority}_{#1}}}
\newcommand {\Majorityn} {\ensuremath{\operatorname{Majority}_n}}

\newcommand {\DQAC} {\ensuremath{\mathsf{1D\text{-}QAC}}}
\newcommand {\DQACz} {\ensuremath{\mathsf{1D\text{-}QAC}^{0}}}

\newcommand {\DDQAC} {\ensuremath{\mathsf{2D\text{-}QAC}}}
\newcommand {\DDQACz} {\ensuremath{\mathsf{2D\text{-}QAC}^{0}}}

\newcommand {\E}[2] {\ensuremath{\mathbb{E}_{#1}\Br{#2}}}

\def\x{\textbf{x}}

\usepackage[english]{babel}

\usepackage[letterpaper,top=2cm,bottom=2cm,left=3cm,right=3cm,marginparwidth=1.75cm]{geometry}

\usepackage{amsmath}
\usepackage{graphicx}

\title{On the Computational Complexity of Geometrically Local $\mathsf{QAC}^0$ circuits}

\author{
 Yangjing Dong\thanks{\scriptsize  State Key Laboratory of Novel Software Technology, Nanjing University, Nanjing 210023, China. Email: dongmassimo@gmail.com.}
\and Fengning Ou\thanks{\scriptsize State Key Laboratory of Novel Software Technology, Nanjing University, Nanjing 210023, China. Email: reverymoon@gmail.com.}
\and Penghui Yao\thanks{\scriptsize  State Key Laboratory of Novel Software Technology, Nanjing University, Nanjing 210023, China.  Email: phyao1985@gmail.com.}~\thanks{\scriptsize Hefei National Laboratory, Hefei 230088, China.}
}

\begin{document}
\maketitle
\begin{abstract}

The computational complexity of $\QACz$,
which are constant-depth,
polynomial-size quantum circuit families consisting of arbitrary single-qubit unitaries and $n$-qubit generalized Toffoli gates,
has gained tremendous focus recently.

In this work, we initiate the study of the computational complexity of geometrically local $\QACz$ circuits, where all the generalized Toffoli gates act on nearest neighbor qubits.
We show that any $\QACz$ circuit can be exactly simulated by a two-dimensional geometrically local $\QACz$ circuit, i.e., a $\DDQACz$ circuit, with a quadratic size blow-up.
This implies that $\QACz = \DDQACz$. We further show that if there existed a $\QACz$ circuit that computes \parity\ with a bounded constant error, then for any $\ve > 0$,
there would exist a $\DDQACz$ circuit that \textit{exactly} computes \parity,
with a very ``thin'' width $n^\ve$.

We further study the computational power of \DQACz\ circuits, i.e., one-dimensional $\QACz$ circuits, which are the ``thinnest'' $\DDQACz$ circuits.
We prove a nearly logarithmic depth lower bound on \DQACz\ circuits to compute the \parity\ function,
even if allowing an unlimited number of ancilla.
Furthermore, if the inputs are encoded in contiguous qubits, 
we prove that it requires a nearly linear depth \DQACz\ circuit to compute the \parity\ function.
This lower bound is almost tight.
The results are proved via the combination of the restriction argument and the light-cone argument. These results may provide a new angle for studying the computational power of \QACz\ circuits and for resolving the long-standing open problem of whether \parity\ is in $\QACz$.

\end{abstract}

\newpage
\tableofcontents

\newpage

\section{Introduction}

Constant-depth quantum circuits with local quantum gates have demonstrated provable computational advantages in tasks such as sampling and searching~\cite{doi:10.1126/science.aar3106,10.1145/3313276.3316404,10.1145/3357713.3384332,watts2024unconditionalquantumadvantagesampling,grier2025quantumadvantagesamplingshallow}.
However, in terms of \textit{decision problems}, the language class $\QNCz$,
which consists of the languages computed by constant-depth local quantum circuits,
is severely restricted due to the light-cone constraints that prevent long-range correlations.
Thus, \QNCz only contains constant-size Boolean functions.

To obtain non-trivial computational power on computing Boolean functions,
Moore~\cite{moore1999quantum} introduced the class $\QACz$,
which contains the languages computed by constant-depth,
polynomial-size quantum circuits with multi-qubit generalized Toffoli gates.
Multi-qubit quantum gates effectively break the light-cone constraints, as they can spread the quantum information from one single qubit to any number of other qubits.
Hence, $\QACz$ is the minimal quantum computation complexity class that breaks the light-cone restrictions.
Since its introduction,
$\QACz$ has been studied extensively~\cite{moor,10.5555/646517.696323,10.5555/2011679.2011682,10.1016/j.ipl.2011.05.002,6597759,rosenthal:LIPIcs.ITCS.2021.32,NPVY24,vasconcelos2024learning,ADOY24,fenner2025tightboundsdepth2qaccircuits,joshi2025improvedlowerboundsqac0,foxman2025randomunitariesconstantquantum,dong2025linearsizeqac0channelslearning,vasconcelos2026constantdepthunitarypreparationdicke}.

One problem of particular interest is whether $\QACz$ contains the \parity\ function.
It is worthwhile to notice that for $\QACz$ circuits,
computing \parity\ as a unitary is equivalent to several other tasks, including quantum \fanout~\cite{moor}, $n$-qubit cat state synthesis~\cite{rosenthal:LIPIcs.ITCS.2021.32}, and computing \threshold\ as a unitary~\cite{hoyer2005quantum,Grier:2024xxt}.
It is widely conjectured that $\QACz$ does not include \parity\, for which there has been a long line of research ~\cite{10.5555/2011679.2011682,10.1016/j.ipl.2011.05.002,rosenthal:LIPIcs.ITCS.2021.32,NPVY24,ADOY24,fenner2025tightboundsdepth2qaccircuits,joshi2025improvedlowerboundsqac0}.
However, it is still unresolved. 
The current best circuit size lower bound is barely superlinear $n^{1+\exp\br{-d}}$,
where $d$ is the depth of the circuit~\cite{ADOY24,dong2025linearsizeqac0channelslearning}. Very recently,
Grier, Morris, and Wu~\cite{grier2026mathsfqac0containsmathsftc0with}
have shown that constant-depth $\QAC$ circuits can compute the \threshold\ function, if many copies of the inputs are available.
This includes the complexity class $\mathsf{TC}^0$,
which is standing at the forefront of classical circuit lower bounds.
This result implies that $\TC[0]\subseteq \QACz\circ\NC[0]$,
and further indicates the challenge of proving the lower bound for $\QAC$ circuits.

In this work, we initiate the study of the computational power of geometrically local $\QAC$ circuits, where the qubits are arranged on an underlying graph, and all the gates apply only to the qubits that are connected by an edge or a path.
To the best of our knowledge, there has previously been no research on geometrically local $\QAC$ circuits.
Geometrical locality arises naturally in near-term physical systems, such as the Heisenberg model on a square lattice~\cite{schuch2009computational,10.5555/3179553.3179559}, and almost all current quantum processors as well~\cite{WillowSpec}.
On the other hand, multi-qubit gates are becoming available on recent quantum hardware, including the generalized Toffoli gates~\cite{rasmussen2020single,goel2021native,nikolaeva2025scalable} and the quantum \fanout\ gates~\cite{Gokhale2020QuantumFC}.

We investigate the computational power of the geometrically local $\QAC$ circuits on a two-dimensional lattice (\DDQAC), a variant of $\QAC$ circuits where all the qubits are arranged in a 2D lattice with arbitrary single-qubit unitaries,
and the generalized Toffoli gates are allowed to act on a continuous interval of qubits in the same row or column.
See \cref{fig: 2D-QAC0 example} for an example of a layer of a $\DDQAC$ circuit. Surprisingly, we show that
\DDQAC\ circuits are as powerful as general \QAC\ circuits: they are able to simulate any $\QAC$ circuit with all-to-all connectivity,
with only a constant blow-up in circuit depth,
and a quadratic blow-up in circuit size.
Thus, to answer whether \parity\ is in $\QACz$,
it suffices to prove lower bounds on \DDQAC\ circuits that compute \parity.
Moreover, we show that a $\DDQAC$ circuit that computes the \parity\ function can be made very ``thin''.
More specifically, we show that for any small constant $\epsilon>0$, any \QAC\ circuit that computes \parity\ can be simulated by a constant-depth \DDQAC\ circuit with width $n^\epsilon$.

A particular class of $\DDQAC$ circuits investigated in this paper is the class of $\DQAC$ circuits,
which are the ``thinnest'' $\DDQAC$ circuits.
These circuits admit a much simpler structure:
All qubits are arranged on a line,
with arbitrary single qubit unitaries,
and multi-qubit Toffoli gates that act on a continuous interval of qubits.
In this model, we are able to prove circuit depth lower bounds for computing the \parity\ function.
These lower bounds hold even if we have an unlimited number of ancilla qubits.
These lower bounds are based on an
input-restriction approach, inspired by \Hastad's well-known random restriction method~\cite{10.1145/12130.12132}, and the recent work by Joshi, Tal, Vasconcelos, and Wright~\cite{joshi2025improvedlowerboundsqac0}.


\subsection{Our Results}
We investigate the computational power of geometrically local $\QAC$ circuits with 2D and 1D structure.

\paragraph{$\mathbf{\DDQAC}$ circuits}

First, we show that $\DDQAC$ is as powerful and general as $\QAC$ circuits with all-to-all connectivity in computational power.
A $\QAC$ circuit can be exactly simulated by a $\DDQAC$ circuit,
with a constant-depth blow-up,
and a quadratic size blow-up.

\begin{theorem}[informal of \cref{thm: 2D-QAC0 = QAC0}]
  Any depth-$d$ $\QAC$ circuit on $n$ qubits can be exactly simulated by a depth-$7d$ $\DDQAC$ circuit on an $(n+1)\times n$ two-dimensional lattice.
\end{theorem}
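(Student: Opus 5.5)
Each layer of a depth-$d$ $\QAC$ circuit on $n$ qubits will be simulated by a constant number (at most $7$) of $\DDQAC$ layers on an $(n+1)\times n$ lattice. The $n$ logical qubits sit in the first row, one per column, and the remaining $n$ rows serve as ancillas initialized to $\ket{0}$ and returned to $\ket{0}$ after every simulated layer. Single-qubit gates need no work, since they act in place. A layer of a $\QAC$ circuit is a set of disjoint generalized Toffoli gates, together with single-qubit gates, which we may assume are absorbed into neighbouring layers. By conjugating with single-qubit $X$ gates, every generalized Toffoli can be taken to be a multi-controlled $Z$ (or multi-controlled $X$) on some subset $S_j\subseteq[n]$, with the subsets $S_1,\dots,S_m$ pairwise disjoint and $m\le n$. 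We assign the $j$-th gate its own ancilla row $r_j=j+1$.

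The simulation of one layer proceeds in three stages.

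\textbf{Stage 1 (copy).} For each column $c\in S_j$, copy the logical qubit in column $c$ down to row $r_j$ of column $c$. This is a CNOT (a 2-qubit Toffoli) from row $1$ to row $r_j$ of the same column. It is not nearest-neighbour, but a generalized Toffoli on a contiguous vertical interval of column $c$ (rows $1,\dots,r_j$) realizes it. We control on row $1$, target row $r_j$, and make the intermediate rows, which hold $\ket{0}$, not act as controls: first apply $X$ to rows $2,\dots,r_j-1$ of column $c$, then apply the Toffoli on the whole interval, then undo the $X$'s. Each column is used by at most one gate, so all these vertical Toffolis are disjoint and fit in one layer, and the total is about $3$ layers.

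\textbf{Stage 2 (apply the gate).} Row $r_j$ now contains copies of the qubits in $S_j$ at their columns and $\ket{0}$ elsewhere. Let $[a_j,b_j]$ be the minimal interval spanning $S_j$. Apply $X$ to the positions of row $r_j$ in $[a_j,b_j]\setminus S_j$ so that they act as trivially satisfied controls, then apply a multi-controlled $Z$ on the horizontal interval $[a_j,b_j]$ of row $r_j$, then undo the $X$'s. Because the copies are in the computational basis and entangled via CNOT, a phase $(-1)^{\prod_{i\in S_j} x_i}$ kicked back on the copies is exactly the phase the original gate would apply. Each row is used by a single gate, so all gates of the layer run in parallel. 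The $X$ layers can be merged with Stage 1's $X$ layers.

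\textbf{Stage 3 (uncompute).} Repeat Stage 1 to reset the ancillas to $\ket{0}$.

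This gives roughly $3+1+3=7$ layers per original layer, and merging adjacent single-qubit layers keeps the count at $7d$. An $X$-type Toffoli is handled by conjugating its target with Hadamards to make it a $Z$-type gate.

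The main obstacle is the correctness bookkeeping for the padding trick and for the phase gate acting on copies. One point is that the padding $X$'s on ancilla positions must not interact with other gates' columns. This holds because the columns in $[a_j,b_j]\setminus S_j$ may belong to other $S_k$, but in row $r_j$ those positions still hold $\ket{0}$, since gate $k$ copies only into row $r_k$. Similarly, the vertical intervals in Stage 1 pass through rows holding $\ket{0}$ in that column, so the $X$-padding is valid. The remaining checks are that the final state equals the ideal one exactly, which follows from the computational-basis-linear argument, and the size count: $(n+1)n$ qubits per layer, giving the quadratic blow-up.
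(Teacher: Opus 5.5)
Your proposal is correct. It follows the paper's overall scheme: dedicate one ancilla row per gate, route data down its column with long-range vertical Toffolis whose intermediate controls are made transparent, act within that row, and route back. The difference lies in how the data is moved and in the resulting depth.

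\textbf{The paper's route.} It keeps all ancillas in $\ket{1}$, so intermediate qubits are transparent controls with no $X$-padding. It \emph{swaps} each data qubit of $S_i$ down into row $i$, realizing each SWAP as three long-range vertical CNOTs. It then applies the generalized Toffoli directly to the relocated data in row $i$ and swaps back. This is where its count of $3+1+3=7$ multi-qubit layers per simulated layer comes from.

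\textbf{Your route.} You first conjugate targets by Hadamards so every gate is a diagonal multi-controlled $Z$. You then \emph{copy} each data qubit with a single vertical CNOT, apply the phase on the copies, and uncopy. Correctness relies on diagonality in the computational basis, which you arrange explicitly.

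\textbf{What each buys.} All of your $X$-padding and Hadamard conjugations are single-qubit layers that merge. So each simulated layer costs only three multi-qubit layers. Under the paper's depth convention, which counts only the multi-qubit layers $M_k$ in $C=L_dM_d\cdots L_1M_1L_0$, you actually get depth $3d$, not $7d$. Your ``$3+1+3$'' tally undersells this because it counts the single-qubit $X$ layers in Stages 1 and 3. The paper's swap-based relocation does not need diagonality: any multi-qubit operation could be performed in the auxiliary row. That makes it slightly more general, at the cost of the larger constant.
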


Towards answering whether \parity\ is in $\QACz$, we show that any $\QAC$ circuit that computes $\parity$ can be compressed to a very thin $\DDQAC$ circuit with a constant-depth blow-up.

\begin{theorem}[informal of \cref{prop: poly n width DDQAC0}]
    Suppose there exists a family of depth-$d$ $\QAC$ circuits that approximates the parity gate $U_{\oplus}$.
    Then for any $\ve > 0$,
    there exists 
    a family of depth-$O(d)$ $\DDQAC$ circuits on an $O(n^{\ve})\times O(n^{1+\ve})$ 2D lattice,
    that exactly implements $U_{\oplus}$.
\end{theorem}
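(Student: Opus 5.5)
The plan is to go from an approximate parity circuit to an exact, small one, then embed it with the 2D simulation theorem (\cref{thm: 2D-QAC0 = QAC0}), and finally build the $n$-bit parity as a constant-depth tree of these small blocks laid out in a thin strip.

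\textbf{Step 1: exact parity on few bits.} Fix $\ve>0$ and let $k = n^{\delta}$ for a small constant $\delta$ to be chosen. From the depth-$d$ family that approximates $U_\oplus$, I would first obtain an \emph{exact} implementation of $U_{\oplus,k}$ (parity of $k$ bits) of constant depth $O(d)$ and size $\poly(k)$.
\begin{itemize}
\item Approximate parity is known to be equivalent to approximate fan-out and cat-state preparation for $\QACz$~\cite{moor,rosenthal:LIPIcs.ITCS.2021.32}.
\item Rosenthal's cat-state results give an exact fan-out, and hence an exact parity, in constant depth.
\item If that equivalence only gives approximate versions, I would instead use an approximate block whose error is polynomially small in the \emph{full} input size $n$. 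Amplification or standard error reduction makes the total error vanish, and the statement's ``exactly'' would then follow from the cited exactness-preserving reductions.
\end{itemize}
I expect this exactness step to be the main obstacle. I would try to rely directly on the known result that approximate cat states on $\poly(k)$ qubits in $\QACz$ imply an exact $\QACz$ fan-out.

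\textbf{Step 2: embed each block in 2D.} Apply \cref{thm: 2D-QAC0 = QAC0} to the size-$\poly(k)$ exact circuit. This gives a depth-$O(d)$ $\DDQAC$ circuit on a grid whose side length is polynomial in $k$, namely $k^{c}$ for a constant $c$ depending on the circuit size. Choose $\delta$ so that $c\delta\le \ve$. Each block then occupies an $n^{\ve}\times n^{\ve}$ square.

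\textbf{Step 3: tree composition inside a thin strip.} Compute the parity of $n$ bits with a $k$-ary tree of depth $\lceil 1/\delta\rceil = O(1)$.
\begin{itemize}
\item Place $n/k$ blocks side by side in a horizontal strip of height $O(n^{\ve})$ and total width $(n/k)\cdot n^{\ve}\le n^{1+\ve}$.
\item The level-one outputs sit in separate blocks, so before the next level they must be gathered into a single block. I would avoid moving qubits (SWAPs are not constant depth over long distances). Instead I would compute the next level in place, using the fact that parity is XOR-linear. For a group of $k$ adjacent blocks, feed their output qubits into a level-two parity block whose input qubits are spread along a row.
\item To make this work, I would design the $\DDQAC$ parity block to accept inputs spaced at arbitrary positions along a row. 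A row Toffoli acting on a contiguous interval can be conjugated by single-qubit gates on the non-participating qubits so that they act as controls fixed to $\ket{1}$ via ancilla rows prepared in $\ket{1}$. Equivalently, reserve an ancilla row set to $\ket{1}$ and use column gates to route.
\end{itemize}

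\textbf{Step 4: uncompute and count.} Uncompute the intermediate garbage by running the levels in reverse, so the overall map is exactly $U_\oplus$. The depth is $O(d/\delta)=O(d)$ for fixed $\ve$, and the lattice dimensions are $O(n^{\ve})\times O(n^{1+\ve})$.
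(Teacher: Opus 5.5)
The gap is Step~1, which carries the real content of the statement (approximate $\Rightarrow$ exact). None of your justifications establishes it.

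\emph{Your first two justifications.} The parity/fan-out/cat-state equivalences of Green--Homer--Moore--Pollett and Rosenthal (his Theorem~3.1) \emph{preserve} the fidelity parameter. They map exact to exact and approximate to approximate. Rosenthal's own cat-state circuits are only approximate, so none of this yields an exact fan-out.

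\emph{Your fallback.} It fails for a more basic reason. Driving a block's error down to $1/\poly(n)$, or even $2^{-n}$, still leaves a nonzero error. Composing with exactness-preserving reductions cannot then produce fidelity exactly $1$. Saying you would ``rely on the known result'' that approximate cat states give exact fan-out is assuming the key step, not proving it.

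\emph{The missing idea} is exact amplitude amplification, which the paper proves as \cref{thm: QAC0 B-PARITY to E-PARITY} (adapting Grier--Morris--Wu).
\begin{enumerate}
\item By Rosenthal's equivalence, the $p$-approximate clean parity circuit yields a same-topology circuit $C_1$ with $C_1\ket{0^{n+a}} = a\ket{0^{n+a}}+b\ket{1^n0^a}+c\ket{\omega}$ and $\abs{a+b}^2\ge 2p$. Hence $\abs{a},\abs{b}\ge\sqrt{2p}-1>0$; this is where $p>1/2$ is essential.
\item A two-qubit unitary $Q$ on two fresh qubits rescales and rephases the two amplitudes to a common value. The value is chosen so that the cat-state component has amplitude exactly $\sin\br{\pi/(4k+2)}$, with $k=O\br{1/(\sqrt{2p}-1)}$.
\item Generalized Toffolis flag exactly the basis states $\ket{0^{n+2}0^a}$ and $\ket{1^{n+2}0^a}$.
\item $k$ Grover iterations then rotate the state exactly onto the flagged cat state. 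Each iteration uses the circuit, its inverse, and multi-qubit reflections.
\item Rosenthal's reduction turns this into an exact clean $\Parityn$ circuit of depth $O(dk)=O(d)$ with $O(a)$ ancillas.
\end{enumerate}

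With that lemma replacing your Step~1, your Steps 2--4 are essentially the paper's recursion. The paper builds parity on $n^k$ bits from $n$ side-by-side copies of the $n^{k-1}$-bit circuit. It copies their outputs by a layer of CNOTs onto a fresh row padded with $\ket{1}$, and runs the stretched base block there. Its choice $k=2c/\ve$ corresponds to your block size $n^{\delta}$ with tree depth $1/\delta$.

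Two small corrections:
\begin{itemize}
\item The fresh $\ket{1}$-padded row is necessary, not an ``equivalent'' alternative to computing in place. The row carrying the level-one outputs also carries the inputs $x$. Single-qubit conjugation cannot make a qubit in an unknown state act as a transparent control.
\item Long-range CNOTs and SWAPs along a column of $\ket{1}$ qubits are single generalized Toffolis, hence constant depth. This is exactly what \cref{thm: 2D-QAC0 = QAC0} exploits.
\end{itemize}
Your explicit forward-then-reverse uncomputation of the intermediate parities, which keeps the depth at $O(d/\delta)$, is a detail the paper's proof leaves implicit.
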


Hence, we investigate the computational power of the "thinnest" \DDQAC\ circuit: \DQAC\ circuits, and their capacity for computing \parity.

\paragraph{$\mathbf{\DQAC}$ circuits.} 
In contrast to general \QACz\ circuits, the equivalence between cat state synthesis and parity in~\cite{rosenthal:LIPIcs.ITCS.2021.32} fails in \DQAC\ circuits. We first show that they are powerful enough to create the $n$ qubit cat state ($\ket{\Cat_n}$) state with depth $O(\log n)$. Notice that the {\em best-known} polynomial-size \QAC\ circuits that synthesize $\ket{\Cat_n}$ require $\Omega(\log n/\log\log n)$ depth.

\begin{theorem}[informal of \cref{thm: 2DQAC0 generates cat state}]\label{thm:informalcatstate}
    There exists a depth-$\log n$ $\DQAC$ circuit with no ancilla qubits,
    that creates the state $\ket{\Cat_n}$.
\end{theorem}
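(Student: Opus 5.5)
The plan is to build $\ket{\Cat_n}$ by repeated doubling. The key point is that on a line, a single generalized Toffoli gate acting on a contiguous interval can act as a \emph{long-range CNOT}, as long as every qubit strictly inside the interval is in a known computational basis state or is perfectly correlated with the control. After $\lceil\log n\rceil$ rounds the cat state will cover all $n$ qubits, and each round will cost $O(1)$ layers.

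\textbf{Key gadget.} Suppose qubit $a$ lies in the current cat support and qubit $b>a$ is fresh. Suppose also that every qubit strictly between $a$ and $b$ is fresh, i.e.\ still in $\ket{0}$. Apply $X$ to each intermediate qubit. Then apply the generalized Toffoli on the interval $[a,b]$, with controls $a,\dots,b-1$ and target $b$. Finally undo the $X$'s. Because the intermediate qubits equal $1$ after the $X$'s, the Toffoli fires exactly when qubit $a$ is $1$. Its action on the global state is therefore exactly $\mathrm{CNOT}_{a\to b}$, which maps $\ket{0\cdots0}+\ket{1\cdots1}$ on the old support to the same state on the support enlarged by $b$. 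The $X$ conjugations are single-qubit layers. Moreover, each qubit is intermediate for at most one gadget per round, so all gadgets of a round share one $X$-layer before and one after, in addition to the Toffoli layer.

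\textbf{Scheduling.} Assume for simplicity $n=2^k+1$ (otherwise pad or truncate the last round). Index the qubits $0,\dots,2^k$. Start with $H$ on qubit $0$, giving the cat state on $\{0\}$. Then apply one round that copies qubit $0$ to qubit $2^k$ via the gadget on $[0,2^k]$.

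Before round $r$ (for $r=1,\dots,k$), the cat support is the set of multiples of $s=2^{k-r+1}$ and every other qubit is $\ket{0}$. In round $r$, for every multiple $js$ with $js<2^k$, apply the gadget with control $js$ and target $js+s/2$, on the interval $[js,\,js+s/2]$. These intervals are pairwise disjoint, so all gadgets of the round fit in a single Toffoli layer. The qubits strictly inside each interval are non-multiples of $s$, hence still $\ket{0}$, so the gadget hypothesis holds. After the round the support is the multiples of $s/2$.

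\textbf{Cost.} After round $k$ every qubit is in the support, so the final state is exactly $\ket{\Cat_n}$. No ancilla is used, since every qubit becomes part of the cat. The depth is $O(1)$ per round, i.e.\ $O(\log n)$ in total. Merging the $X$-layers of consecutive rounds, together with the initial $H$, brings the constant down toward the stated $\log n$ Toffoli layers.

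\textbf{Main obstacle.} The step that needs care is checking that the gadget is exactly a CNOT on the reachable state, not merely on computational basis inputs. This holds because the intermediate qubits of each interval are, at that moment, unentangled $\ket{0}$'s. It also requires checking that the gadgets within a round act on disjoint intervals, which is what the interleaved (midpoint) ordering guarantees. A naive ``copy block $B$ onto the adjacent block $C$'' scheme would fail on exactly this point, because its intervals overlap.
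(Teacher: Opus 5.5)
Your proposal is correct and is essentially the paper's construction. Unrolling the paper's recursive restricted fan-out $F_k$ gives exactly your midpoint schedule: in each round, disjoint interval Toffolis on $[js, js+s/2]$ act as long-range CNOTs because their intermediate qubits are $\ket{1}$. The remaining differences are cosmetic. The paper sets all non-root qubits to $\ket{1}$ once and flips only each target before its Toffoli, rather than $X$-conjugating the intermediates, and it takes $n=2^k$ instead of $2^k+1$. For general $n$, the correct fix is to drop every gadget whose target exceeds $n-1$, in every round; padding would instead enlarge the cat state, which breaks the no-ancilla claim.
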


Then, we prove that any \DQAC\ circuit that computes parity with probability at least $2/3$ in the average case has depth at least $\Omega(\log n / \log \log n)$.
\begin{theorem}[informal of \cref{thm: 1D-QAC0 cannot compute PARITY}]
    Let $n,d \geq 1$ be integers.
    For any depth-$d$ \DQAC\ circuit $C$, it holds that
    \begin{align*}
        \Pr_{x,C}  [g_C(x) = \CParity{n}(x)] \leq \frac{1}{2} + d \cdot 2^{- O\br{n^{1/d}}}.
    \end{align*}
\end{theorem}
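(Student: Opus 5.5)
We prove the bound by induction on the depth $d$, combining a random restriction on the inputs with a light-cone argument along the line. The qubits of the \DQAC\ circuit $C$ sit on a line; some of them hold the $n$ input bits and the rest are ancillas, of which there may be arbitrarily many. The goal is a restriction that keeps many inputs alive and lets us delete one layer of Toffoli gates.

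\textbf{The single-layer step.} Consider the first layer of $C$. Each generalized Toffoli acts on a contiguous interval of the line, and the intervals in one layer are disjoint.

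A long Toffoli, acting on an interval with many input qubits, behaves like an AND of those (single-qubit-rotated) controls. If one of its input controls is fixed so the control fails, the gate becomes the identity. A random restriction with survival probability $p$, fixing the others to random values, does this except with probability about $2^{-\Omega(k)}$ when the interval holds $k$ inputs. More precisely, a gate with $k$ fixed controls acts as identity except on a branch of amplitude about $2^{-k/2}$. We then bound the error in operator norm or trace distance. We would follow the approach of Joshi--Tal--Vasconcelos--Wright for handling this approximation.

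A short gate, acting on an interval with fewer than $k$ inputs, only spreads information locally along the line. So we choose the restriction geometrically. We partition the line into blocks, each holding about $k = n^{1/d}$ consecutive input qubits. In each block we keep a single live input, placed away from the short gates that cross the block boundaries, and fix all the other inputs.

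After this, each surviving input interacts in layer one only within its own block. Long gates have vanished, up to error $2^{-\Omega(k)}$ each, and the surviving inputs are about $n/k$ in number. Merging the layer-one gates into the block structure produces a \DQAC\ instance on $n/k$ inputs with depth $d-1$, where each "site" is now a block. Averaging over the fixed bits turns the average-case success probability into an average over restricted instances.

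\textbf{The induction and the base case.} Iterating this $d$ times, with $k = n^{1/d}$ at each level, leaves a circuit with no long gates. At that point every output qubit has a light cone of bounded width along the line. Meanwhile at least 2 inputs survive and lie in different blocks, so the output qubit depends on at most one of them. Its correlation with \parity\ of the surviving inputs is therefore exactly $1/2$.

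Summing the errors over the $d$ levels by a union bound, with union bounds over gates in each layer absorbed into the exponent (each layer has at most $\mathrm{poly}$ intervals containing inputs), gives $\frac12 + d\cdot 2^{-O(n^{1/d})}$.

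\textbf{Main obstacle.} The hardest step is making "long Toffoli becomes identity" rigorous when the controls are preceded by arbitrary single-qubit unitaries and entangled ancillas, rather than computational-basis inputs. Ancillas are not restricted, so a gate controlled mostly by ancillas cannot be killed. The plan is to show that only intervals containing at least $k$ input qubits are problematic. For these, fixing inputs to random basis states gives projector weight at most $2^{-\Omega(k)}$ on the all-ones control pattern when averaged, since the pre-rotations are product and act on fixed classical bits. Gates that cross many inputs but contain few inputs per block are handled by the geometric choice of blocks.
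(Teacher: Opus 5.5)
Your overall plan has the same shape as the paper's proof. You erase, layer by layer, the gates that touch many surviving inputs. You keep a sparse set of surviving inputs whose forward light-cones are pairwise disjoint intervals. You then conclude that under every restriction of the remaining inputs the output depends on at most one survivor, so it has no correlation with parity.

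The gap is the erasure step beyond the first layer. Your reason that a long gate is nearly the identity is that, averaged over random input bits, the all-ones weight on its input controls is $\prod_i |\langle 1|U_i|x_i\rangle|^2$, which averages to $2^{-k}$. This uses that the pre-rotations are single-qubit and act on classical bits, so it holds only for the first layer of \CZGate s. From the second layer on, a gate's controls are outputs of earlier entangling gates, mixed with ancillas that may be arbitrarily entangled along the whole line. No product structure is left to exploit. For the same reason, the recursion ``merging layer one yields a \DQAC\ instance on $n/k$ inputs of depth $d-1$ whose sites are blocks'' is not legitimate. A site is now a multi-qubit block acted on by a block unitary that also touches ancillas and fixed inputs. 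Second-layer gates may also cover blocks only partially. So neither the induction hypothesis nor the first-layer computation applies to this object.

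The missing ingredient is the paper's light-cone lemma. Let $X$ be any set of $k$ inputs with forward light-cone $R$. Then for every value of the other inputs and every ancilla state, $\sum_{x_X}\Tr_{R^c}[\rho^x]\preceq\id$. To see this, write $C=G_XG_A$, where $G_A$ collects the gates outside the light-cone of $X$ (acting only on $X^c$) and $G_X$ acts only on $R$; then Schmidt-decompose $G_A\ket{\phi}$ across $R\setminus X$ and $R^c$. Consequently $\mathbb{E}_{x_X}\bra{1^R}\Tr_{R^c}[\rho^x]\ket{1^R}\le 2^{-k}$. So a gate whose support contains $R$ can be erased at any depth with average error $2^{-\Omega(k)}$, using only the randomness of the unfixed inputs in $X$; no fixing is needed.

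This is why ``long'' must be measured by the number of entire surviving light-cones inside the gate, not by input qubits. The 1D geometry then enters through one fact: an interval meeting $s$ pairwise disjoint interval light-cones contains at least $s-2$ of them.

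Two smaller issues also need fixing.
\begin{itemize}
\item Choosing in each block a live input untouched by both boundary-crossing short gates can fail. Two such gates, each with fewer than $k$ inputs, can together cover all $k$ inputs of a block. Use blocks of $3k$ inputs, or, as the paper does, take an independent set in the maximum-degree-$2$ graph of light-cones sharing a gate.
\item With $k=n^{1/d}$ exactly, only $n/k^d=1$ input survives, so $k$ must be taken slightly smaller.
\end{itemize}
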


Surprisingly, if the input qubits are arranged contiguously, the circuit requires linear depth, an even stronger lower bound.
\begin{theorem}[informal of \cref{thm: 1D-QAC0 PARITY LB contiguous and infinite ancilla}]
     Let $n,d \geq 1$ be integers.
    For any depth-$d$ \DQAC\ circuit $C$, if the inputs are encoded on contiguous qubits, then,
    \begin{align*}
        \Pr_{x,C}  [g_C(x) = \CParity{n}(x)] \leq \frac{1}{2} + dn \cdot 2^{-O\br{n/d}}.
    \end{align*}
\end{theorem}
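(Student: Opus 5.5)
The approach combines a restriction argument with a light-cone argument, applied one layer at a time, to strip away the generalized Toffoli gates of $C$.

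\textbf{Setup.} Fix a depth-$d$ \DQAC\ circuit $C$ on a line. The $n$ input qubits occupy a contiguous block, and there may be arbitrarily many ancillas on either side.

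\textbf{Key observation.} A generalized Toffoli gate $T$ acting on an interval $I$ is a controlled phase that acts nontrivially only on the single basis configuration of $I$ (after the adjacent single-qubit rotations). Its behaviour therefore splits into two regimes:
\begin{itemize}
\item If $I$ contains many input qubits, fixing those inputs appropriately forces the control pattern to be essentially "off". The gate then acts as the identity up to a small error.
\item If $I$ contains few input qubits, the gate lives in a region whose light cone touches only a small block of inputs.
\end{itemize}

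\textbf{Per-layer restriction.} I would process the layers one at a time, from the first to the last, maintaining a set of surviving input variables; all other inputs are fixed to random values. For a layer, call a Toffoli \emph{wide} if its interval meets more than $k \approx n/d$ surviving inputs.
\begin{enumerate}
\item Wide gates overlap no surviving input block in more than a bounded way, since gates in one layer are disjoint intervals.
\item For each wide gate, I restrict a random subset of its inputs. I would then argue, in the style of \cite{joshi2025improvedlowerboundsqac0}, that the gate acts as identity on the relevant state up to error $2^{-\Omega(k)}$. This requires an input that already sits in a product state with its neighbours; since the inputs are a contiguous block, I would instead apply the standard fact that a large multi-controlled Toffoli is close to identity on average over random basis values on $k$ of its controls.
\item Narrow gates are left in place.
\end{enumerate}
After all $d$ layers, each step costs error at most $n\cdot 2^{-\Omega(n/d)}$ per gate, with at most $n$ relevant gates, for a total of $dn\cdot 2^{-\Omega(n/d)}$. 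Only a constant fraction of the inputs is lost per layer.

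\textbf{Light-cone finish.} In the resulting circuit, every gate's interval contains at most $k$ inputs. On a line, the output qubit's backward light cone then expands by at most $O(k)$ inputs per layer, so it touches at most $O(dk)$ surviving inputs. Choosing $k=cn/d$ with small $c$ leaves some surviving input outside the light cone. The output is then independent of that input, so it agrees with \parity\ with probability exactly $1/2$ over that input.

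\textbf{Main obstacle.} The hardest step is making the "wide Toffoli $\approx$ identity" claim rigorous when the controls are entangled with ancillas and with earlier layers, rather than being fresh random bits. My first attempt would be to bound the operator-norm effect $\|(T-\id)\ket\psi\|$ by the amplitude of $\ket\psi$ on the all-one control pattern restricted to the fixed input qubits. Those qubits are still computational-basis random bits only if no earlier layer touched them. To arrange this, I would restrict inputs in reverse layer order, from the last layer backward, so that the fixed inputs have not been acted on by later gates. If that ordering fails, the fallback is to use the restriction merely to decouple the circuit, via the light cone, into independent segments.
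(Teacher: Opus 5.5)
\textbf{There is a genuine gap.} It sits at the step you flag as the main obstacle, and in your accounting of lost inputs.

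\emph{Fixing inputs makes the loss multiplicative.} Your way of neutralizing a wide gate is to fix (restrict) inputs underneath it. Gates in a layer are disjoint, and each wide gate covers at least $k$ surviving inputs. So paying $\Omega(k)$ fixed inputs per wide gate can consume a constant fraction, or even all, of the survivors in a single layer. You state the consequence yourself: ``a constant fraction of the inputs is lost per layer.'' That leaves about $n(1-\gamma)^d$ survivors. Your light-cone finish needs more than $O(dk)=O(cn)$ survivors, and $n(1-\gamma)^d$ drops below that once $d$ exceeds a constant. So the plan cannot reach $2^{-\Omega(n/d)}$.

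\emph{The premise of the ``standard fact'' is false in general.} After earlier layers, the qubits at input positions need not carry fresh random basis values. For instance, earlier gates crossing the boundary of the input block can swap deterministic ancilla $\ket{1}$'s onto input positions. A later wide gate over those positions then fires regardless of the values you fix. Restricting in reverse layer order does not help. What corrupts the controls are the gates applied \emph{before} the wide gate, and the order in which you choose the fixed values does not change the state that gate sees.

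\emph{The missing idea is that no restriction is needed.} Maintain a contiguous block $I_t\subseteq[n]$ of input positions that has never interacted with anything outside it.
\begin{itemize}
\item Since $x$ is uniform, the $x$-averaged state is $2^{-n}\id_I\otimes\phi_A$. Unitaries acting inside the block preserve the identity, so $\mathbb{E}_x\,\mathrm{Tr}_{I_t^c}[\rho^x]=2^{-|I_t|}\id$.
\item Take any gate whose support meets $S\subseteq I_t$ with $|S|\ge s$. The averaged probability $\mathbb{E}_x p_x$ of the all-ones pattern on $S$ is $2^{-|S|}$, in whatever local basis.
\item By Cauchy--Schwarz, $\mathbb{E}_x\sqrt{p_x}\le 2^{-|S|/2}$, so the gate can be erased at cost $O(2^{-s/2})$. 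The inputs under it are \emph{kept}, not fixed.
\item After erasure, at most two gates of the layer straddle the ends of the interval $I_t$. Each meets at most $s$ of its positions; a gate covering all of $I_t$ is wide and was already erased.
\item Trimming those two gates gives $|I_{t+1}|\ge|I_t|-2s$. The loss is additive, and the trimmed ends are exactly where ancilla influence could enter.
\item Hence $|I_d|\ge n-2ds$, while the output's backward light cone meets at most $2ds$ positions of $I_d$. Taking $s=n/(5d)$ finishes the proof.
\end{itemize}
This averaging-plus-contiguity invariant is what upgrades the logarithmic bound of the non-contiguous case to a near-linear one, and your plan lacks it.
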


Compared with~\cref{thm:informalcatstate}, this theorem suggests that computing parity is strictly more difficult than cat state synthesis. 

\subsection{Related Works}

Since Moore's introduction of $\QACz$~\cite{moore1999quantum},
there has been a long-standing line of research on the computational power of constant-depth $\QAC$ circuits.
However, the problem of whether $\QACz$ contains $\parity$ has remained open for two decades.
$\QAC$ circuits behave very differently from their classical counterpart $\AC$ circuits.
The \fanout\ is free for any classical circuit.
However, because of the non-cloning theorem,
for $\QAC$ circuits, implementing \fanout\ gates is non-trivial.
Indeed, Green, Homer, Moore, and Pollett~\cite{moor}
showed that the \parity\ is equivalent to \fanout\ for $\QAC$ circuits with a constant-depth reduction.
Moreover, Peter and Robert \cite{hoyer2005quantum} demonstrated that with \fanout\ gates, a constant-depth \QAC\ circuit is powerful enough to compute \majority.
Rosenthal~\cite{rosenthal:LIPIcs.ITCS.2021.32} also proved that \parity\ and \fanout\ are equivalent to constructing the $n$-qubit cat state.
Recently, Grier, and Jackson~\cite{Grier:2024xxt}
have proved that \QAC\ circuits with \threshold\ gates,
which are equivalent to \majority,
can compute parity with constant depth.
Thus,

\[\QACz[\threshold]=\QACz[\majority]=\QACz[\oplus]=\QACz[\ket{\Cat}]=\QTCz,\]
where $\QACz[\ket{\Cat}]$ are languages decided by constant-depth $\QAC$ circuits with the ability to construct (and reverse) the $n$-qubit cat state.


Two decades ago,
Fang, Fenner, Green, Homer, and Zhan~\cite{10.5555/2011679.2011682} demonstrated the first lower bound for \parity\ in \QAC\ circuits: Constant-depth \QAC\ circuits with sublinear ancilla cannot compute parity in the worst case.
Bera~\cite{10.1016/j.ipl.2011.05.002} gives another proof for ancilla-free \QACz\ in the worst case.
Pad\'e, Fenner, Grier, and Thierauf \cite{DBLP:journals/corr/abs-2005-12169, fenner2025tightboundsdepth2qaccircuits} demonstrated that depth-$2$ \QAC\ circuits cannot compute \parity\ in the worst case even for $n = 4$.
Rosenthal~\cite{rosenthal:LIPIcs.ITCS.2021.32} demonstrated that depth-$2$ \QAC\ circuits cannot compute \parity\ in the average case. They also construct a constant-depth circuit that approximately computes \parity\ with an exponential ancilla by preparing an approximate nekomata state.
Nadimpalli, Parham, Vasconcelos, and Yuen \cite{NPVY24} give the first \parity\ lower bound in the average case for any \QAC\ circuit.
They prove that \QAC\ circuits require $\Omega(n^{1/d})$ ancilla to compute \parity,
where $d$ is the depth of the circuit.
Their method uses Pauli analysis,
with a key observation that erasing large gates in the circuit imposes a small error in the sense of Channel $2$-norm distance.
Anshu, Dong, Ou, and Yao \cite{ADOY24} demonstrate that \QAC\ circuits require $\Omega\br{n^{1+3^{-d}}}$ ancilla to compute \parity\ in the average case.
They approximate \QAC\ circuits with a method combining the light-cone argument and the $l_{\infty}$-approximation. 
Also, they show that further improving this lower bound to $\Omega\br{n^{1+\exp\br{-o(d)}}}$ implies that \parity\ is not in $\QACz$.
Dong, Ou, and Yao \cite{dong2025linearsizeqac0channelslearning} further improve the ancilla lower bound to $\Omega\br{n^{1+2^{-d}}}$.
Recently,
Joshi, Tal, Vasconcelos, and Wright~\cite{joshi2025improvedlowerboundsqac0} prove that depth-$3$ \QAC\ circuits cannot exactly compute the \parity\ function, even with infinitely many ancilla.

Foxman, Parham, Vasconcelos and Yuen~\cite{foxman2025randomunitariesconstantquantum} are able to construct pseudorandom unitaries ($\operatorname{PRU}$) with reverse polynomial error using constant-depth $\QAC$ circuits.
Vasconcelos and Joshi~\cite{vasconcelos2026constantdepthunitarypreparationdicke} are also able to construct exact Dicke states using constant-depth \QAC\ circuits.
A recent result of Grier, Morris, and Wu~\cite{grier2026mathsfqac0containsmathsftc0with} shows that with many copies of inputs, $\QAC$ circuits are very powerful in the sense that they can compute the \threshold\ function in constant depth,
hence $\mathsf{TC}^0\subseteq \QACz\circ\NC[0]$.
This implies that $\QACz\not\subseteq\ACz$.
These results give further reasons why proving circuit lower bounds for the $\QACz$ circuits is hard.





%
%
%

\subsection{Summary and Open Problems}
In this work, we initiate the study of geometrically local $\QAC$ circuits and explore their computational power for cat-state synthesis and for computing the \parity\ function.
We show that $\DDQAC$ circuits are able to exactly simulate general $\QAC$ circuits with all-to-all connectivity.
This motivates the study of $\DDQAC$ and $\DQAC$.
We prove that constant-depth $\DQAC$ cannot compute the \parity\ function, even with unlimited ancilla.
The following problems are left open:
\begin{itemize}
    \item The \DQAC\ lower bound for the \parity\ function relies on restriction techniques.
    However, the restriction techniques fail for state synthesis, where the inputs are always fixed. Moreover, it is known that \parity\ unitary and $\ket{\Cat}$ state synthesis are equivalent with constant-depth \QAC\ circuits~\cite{rosenthal:LIPIcs.ITCS.2021.32}.
    It seems that the equivalence fails for \DQAC\ circuits. Could we prove a super-constant lower bound on the depth of \DQAC\ circuits that synthesize $\ket{\Cat}$ states? 
    
    
    \item Researchers have discovered efficient learning algorithms for $\QNCz$ circuits~\cite{9719811,10.1145/3618260.3649722}. Can we design efficient algorithms for \DDQACz\ circuits or \DQACz\ circuits?
    
\end{itemize}

\subsection*{Acknowledgment} 
This work was supported by National Natural Science Foundation of China (Grant No. 62332009, 12347104), Innovation Program for Quantum Science and Technology (Grant No. 2021ZD0302901), NSFC/RGC Joint Research Scheme (Grant no. 12461160276), Fundamental and Interdisciplinary Disciplines Breakthrough Plan of the Ministry of Education of China (No. JYB2025XDXM118), Natural Science Foundation of Jiangsu Province (No. BK20243060).

\section{Preliminaries}

In this work we use bold letters to indicate random variables.
We use $[n]$ to denote the set $\set{1,2,\cdots,n}$.
For any $x\in\mathcal{X}^n$, and $T\subseteq[n]$, let $x_T$ be the substring obtained by restricting $x$ to $T$. Given a finite set $S$, let $\Delta(S)$ denote the set of all probability distributions over $S$.
For any two distributions $p,q \in \Delta(S)$,
their total variation distance is
\begin{align*}
 D_{\operatorname{TV}}(p,q) = \frac{1}{2} \sum_{x \in S}  \abs{p(x) -q(x)}.
\end{align*}
We use $x \sim S$ to denote that $x$ is a random variable that is uniformly distributed on $S$.

\subsection{Analysis of Boolean Functions}

For a Boolean function $f: \set{0,1}^n\to\mathbb{R}$,
for $p\ge 1$,
its $p$-norm is defined as
\begin{equation*}
\normsub{f}{p} = \br{\expec{x\sim\set{0,1}^n}{\abs{f(x)}^p}}^{1/p}.
\end{equation*}
The infinity norm is defined as $\normsub{f}{\infty} = \lim_{p\to\infty}\normsub{f}{p} = \max_{x}\abs{f(x)}$.
We let $\norm{f} = \normsub{f}{\infty}$.
For two Boolean functions $f,g:\set{0,1}^n\to\mathbb{R}$,
the inner product of $f$ and $g$ is
\begin{equation*}
\ip{f, g}=\expec{x \sim \set{0,1}^n}{f(x)g(x)},
\end{equation*}
where $x$ is uniformly distributed over $\set{0,1}^n$. 
For any $S\subseteq[n]$, we define the Fourier basis $\chi_S$ as
$\chi_S(x) = (-1)^{\sum_{i\in S}x_i}.$ It is well-known that $\set{\chi_S}_{S\subseteq[n]}$ forms an orthonormal basis. Consequently, the Fourier expansion of $f$ is given by $f = \sum_{S\subseteq[n]}\widehat{f}(S)\chi_S$,
where $\widehat{f}(S)$ are the Fourier coefficients of $f$.
The Parseval theorem relates the $2$-norm and Fourier coefficients of a Boolean function.
\begin{theorem}[{\cite[Section 1.4]{ODonnell2014}}, Parseval's theorem]\label{thm:Parseval}
    Let $f: \set{0,1}^n\to\mathbb{R}$ be a Boolean function.
    Then
    \begin{equation*}
        \normsub{f}{2}^2 = \sum_{S\subseteq[n]}\widehat{f}(S)^2.
    \end{equation*}
\end{theorem}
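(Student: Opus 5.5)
The plan is to prove Parseval's identity directly from the orthonormality of the characters $\chi_S$, using only that $\langle\cdot,\cdot\rangle$ is bilinear. First, I would confirm that $\{\chi_S\}_{S\subseteq[n]}$ is orthonormal with respect to $\ip{f,g}=\expec{x\sim\set{0,1}^n}{f(x)g(x)}$. We have $\chi_S(x)\chi_T(x) = (-1)^{\sum_{i\in S\triangle T}x_i} = \chi_{S\triangle T}(x)$, since coordinates in $S\cap T$ contribute $(-1)^{2x_i}=1$. The expectation factorizes over coordinates because $x$ is uniform: $\expec{x}{\chi_U(x)} = \prod_{i\in U}\expec{x_i}{(-1)^{x_i}}$. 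This equals $1$ if $U=\emptyset$ and $0$ otherwise, since each factor $\expec{x_i}{(-1)^{x_i}}=\tfrac12(1-1)=0$. Hence $\ip{\chi_S,\chi_T}=\mathds{1}[S=T]$.

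Second, the $2^n$ orthonormal vectors $\chi_S$ live in the $2^n$-dimensional real vector space of functions $\set{0,1}^n\to\mathbb{R}$. Orthonormal vectors are linearly independent, so they form a basis. Therefore every $f$ has a unique expansion $f=\sum_S \widehat{f}(S)\chi_S$, and taking the inner product with $\chi_T$ gives $\widehat{f}(T)=\ip{f,\chi_T}$.

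Third, I would compute $\normsub{f}{2}^2=\expec{x}{f(x)^2}=\ip{f,f}$. Substituting the expansion twice and using bilinearity gives $\ip{f,f}=\sum_{S,T}\widehat{f}(S)\widehat{f}(T)\ip{\chi_S,\chi_T}$. Orthonormality kills the off-diagonal terms, which leaves $\sum_S\widehat{f}(S)^2$.

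The only step that needs any care is the orthonormality computation, specifically the product structure of the uniform distribution. Everything after that is linear algebra, so I do not expect a real obstacle.
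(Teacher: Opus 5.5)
Your proof is correct and is essentially the standard argument from the source the paper cites (O'Donnell, Section 1.4): the characters are orthonormal under the uniform product measure, they form a basis by the dimension count, and then $\langle f,f\rangle$ is expanded bilinearly so that the off-diagonal terms vanish. The paper itself gives no proof and only cites this result, so there is nothing further to compare.
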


Let $f: \set{0,1}^n\to\mathbb{R}$ be a Boolean function with Fourier expansion $f = \sum_{S\subseteq[n]}\widehat{f}(S)\chi_S.$ The degree of $f$ is defined as
$\deg\br{f} = \max_{S: \widehat{f}(S) \neq 0} \abs{S}.$

The Fourier weight of a Boolean function $f$ is defined as $\Wgt{=k}{f} = \sum_{|S| = k} \abs{\widehat{f}(S)}^2$. We similarly define the weights $\Wgt{<k}{f},\Wgt{\leq k}{f}, \Wgt{>k}{f}, \Wgt{\geq k}{f}$.

\begin{example} We introduce two important classes of Boolean functions:
    \begin{itemize}
    \item For any $n$, define the function $\CParityn: \set{0,1}^n\to\set{0,1}$ as
        $$\CParityn(x) = \bigoplus_ix_i.$$

    \item
        For any odd $n$, define the function $\Majorityn:\set{0,1}^n\to\set{0,1}$ as
        $$
        \Majorityn(x) = \begin{cases}
            1 &\text{ if } \sum_ix_i\ge n/2\\
            0 &\text{ if } \sum_ix_i< n/2
        \end{cases}.
        $$
  \end{itemize}
With a slight abuse of notation, we may also view $\CParityn$ and $\Majorityn$ as functions mapping $\set{1,-1}^n$ to $\set{-1, 1}$:
 $$\CParityn(x) = \prod_ix_i,$$ 
 and
 $$\Majorityn(x)=\begin{cases}
            -1 &\text{ if } \sum_ix_i\le 0\\
            1 &\text{ if } \sum_ix_i>0
        \end{cases}.$$

\end{example}

We now examine the Fourier weights of these functions at low degrees.
\begin{prop} \label{prop: PARITY, MAJ deg 1}
     Given integers $n\geq 2$,  for $\CParityn:\set{\pm 1}^n\rightarrow \set{\pm 1}$ and $n\ge 3$ being odd integers, for $\Majorityn:\set{\pm 1}^n\rightarrow \set{\pm 1}$, we have
    \begin{align*}
        &\Wgt{\leq 1}{\CParityn} = 0, \\
        &\Wgt{\leq 1}{\Majorityn} \leq \frac{3}{4}.
    \end{align*}
\end{prop}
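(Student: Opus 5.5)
Both claims follow from computing the low-degree Fourier coefficients directly.

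\textbf{Parity.} In the $\set{\pm1}$ convention, $\CParityn(x)=\prod_{i}x_i=\chi_{[n]}(x)$. So its Fourier expansion has a single nonzero coefficient, $\widehat{\CParityn}([n])=1$. Since $n\ge 2$, no set of size at most $1$ equals $[n]$. Hence every coefficient $\widehat{\CParityn}(S)$ with $|S|\le1$ vanishes, and $\Wgt{\leq 1}{\CParityn}=0$.

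\textbf{Majority, degree $0$.} For odd $n$, $\Majorityn$ is an odd function, since $\Majorityn(-x)=-\Majorityn(x)$. Therefore $\widehat{\Majorityn}(\emptyset)=\expec{x}{\Majorityn(x)}=0$.

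\textbf{Majority, degree $1$.} By symmetry all singleton coefficients are equal, so $\Wgt{=1}{\Majorityn}=n\,\widehat{\Majorityn}(\{1\})^2$. I compute $\widehat{\Majorityn}(\{1\})=\expec{x}{x_1\Majorityn(x)}$ by conditioning on $y=\sum_{i\ge2}x_i$. Because $n-1$ is even, $y$ is even.
\begin{itemize}
\item If $|y|\ge2$, then $\Majorityn(x)=\operatorname{sgn}(y)$ does not depend on $x_1$, so the contribution averages to $0$ over $x_1$.
\item If $y=0$, then $\Majorityn(x)=x_1$ and the contribution is $1$.
\end{itemize}
Thus $\widehat{\Majorityn}(\{1\})=\Pr[y=0]=\binom{n-1}{(n-1)/2}2^{-(n-1)}$. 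Writing $m=(n-1)/2\ge1$, this gives
\[
\Wgt{\leq1}{\Majorityn}=(2m+1)\Br{\binom{2m}{m}4^{-m}}^2 .
\]

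\textbf{Bounding the expression.} It remains to show that
\[
a_m \defeq (2m+1)\Br{\binom{2m}{m}4^{-m}}^2
\]
is at most $3/4$ for all $m\ge1$. This is the only step needing any care. I will compute the ratio of consecutive terms, using $\binom{2m+2}{m+1}4^{-(m+1)}=\binom{2m}{m}4^{-m}\cdot\frac{2m+1}{2m+2}$:
\[
\frac{a_{m+1}}{a_m}=\frac{2m+3}{2m+1}\cdot\frac{(2m+1)^2}{(2m+2)^2}=\frac{(2m+3)(2m+1)}{(2m+2)^2}<1 .
\]
So $a_m$ is decreasing in $m$, and $a_m\le a_1=3\cdot(1/2)^2=3/4$. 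This is the case $n=3$; the bound is tight there and decreases toward $2/\pi$.
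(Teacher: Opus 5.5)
Your proof is correct and follows the paper's route. For parity you use $\chi_{[n]}$, and for majority you reach the same closed form $\frac{4n}{4^n}\binom{n-1}{(n-1)/2}^2$; the paper cites this from O'Donnell's Theorem 5.19, whereas you derive it by conditioning on $\sum_{i\ge 2}x_i$ and use oddness for the degree-$0$ term. You then use the same consecutive-ratio monotonicity argument anchored at $n=3$.
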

\begin{proof}
  For the parity function, we have
  $\CParityn(x) = \prod_ix_i = \chi_{[n]}(x)$.
  Hence the weight at degree $1$ is $0$.
  For the majority function, by \cite[Theorem 5.19]{ODonnell2014}, we have $\Wgt{=0}{\Majorityn} = 0$ and
  \begin{equation*}
      \Wgt{=1}{\Majorityn} = \frac{4n}{4^{n}}\binom{n-1}{\frac{n-1}{2}}^2.
  \end{equation*}
  We can verify that for $n=3$,
  we have $\Wgt{=1}{\Majority{3}} = 3/4$.
  Now we prove that $\Wgt{=1}{\Majorityn}$ is strictly decreasing over odd $n$:
  Indeed, for odd $n \ge 3$,
  we can verify that
  \begin{align*}
      \frac{\Wgt{=1}{\Majorityn}}{\Wgt{=1}{\Majority{n+2}}} = \frac{(n+1)^2}{n(n+2)} > 1.
  \end{align*}
  This completes the proof.
\end{proof}

Let $f : \set{0,1}^n \to \set{0,1}$ be a Boolean function. Given a subset of indices $S \subseteq [n]$ and an assignment $x_S \in \set{0,1}^{S}$ to the variables in $S$, the \emph{restriction} of $f$ to $x_S$, denoted by $f |_{S,x_S} : \set{0,1}^{S^c} \to \set{0,1}$, is defined as $f|_{S,x_S}(y) = f(x_S,y)$.

\subsection{Quantum Information Theory}

For any integer $n\geq 2$, let $\mathcal{M}_n$ be the set of $n\times n$ matrices.
For $X \in \mathcal{M}_n$, we define its trace as $\Tr [X] \defeq \sum_{i = 1}^n X_{ii}$. 
$X \in \mathcal{M}_n$ is a positive semi-definite (PSD) matrix if $X$ is Hermitian and $\mathbf{x}^{\dagger} X \mathbf{x} \geq 0$ holds for all vectors $\mathbf{x} \in \mathbb{C}^n$, where $\mathbf{x}^{\dagger}$ is the complex conjugate transpose of $\mathbf{x}$. We write $X \succeq 0$ to indicate that $X$ is a PSD matrix. If $X \succeq 0$ satisfies $\Tr [X] = 1$, we call $X$ a density operator. 

A quantum system $A$ is associated with a finite-dimensional Hilbert space, which we also denote by $A$.
The quantum registers in the quantum system $A$ are represented by density operators in the Hilbert space $A$.
When $\varphi$ is a pure state, i.e. $\Tr [\varphi^2] = 1$, or equivalently $\varphi$ is a rank-one density operator, we use the Dirac notation and write $\varphi = \ketbra{\varphi}$.

For two independent quantum registers $\varphi$ and $\sigma$ from quantum systems $A$ and $B$,
the compound register is the Kronecker product $\varphi\otimes\sigma$.

A positive operator-valued measure (POVM) $\set{P_a}_{a}$  is a quantum measurement described by a set of positive semidefinite operators such that $\sum P_a = \id$.
If a POVM $\set{P_a}_{a}$ is applied to a quantum register in state $\varphi$,
then the probability that the measurement outcome is $a$ is $\Tr\Br{P_a\varphi}$.

For any matrix $M\in\mathcal{M}_n$, let $\abs{M} = \sqrt{M^\dagger M}$. For any $M,N\in\mathcal{M}_n$, the normalized inner product of $M, N$ is $\langle M,N\rangle=\Tr\Br{M^{\dagger}N}/n$. 

For a vector $\mathbf{x} = (x_1,\cdots,x_n)^T \in \mathbb{C}^n$ and $1 \leq p < \infty$, we use $\norm{\mathbf{x}}_p = (\sum_{i=1}^n |x_i|^p)^{1/p}$ to denote its $p$-norm.
For $p\ge 1$, the {\it normalized} Schatten $p$-norm of $M$ is defined to be
\begin{equation*}
    \normsub{M}{p} = \br{\frac{1}{n}\Tr\Br{\abs{M}^p}}^{1/p}.
\end{equation*}
It is not hard to see that $\langle M,M\rangle=\normsub{M}{2}^2$. Moreover, $\normsub{\cdot}{p}$ is monotone non-decreasing with respect to $p$ and $\normsub{\cdot}{\infty}=\lim_{p\rightarrow\infty}\normsub{\cdot}{p}$ is the spectral norm.

Let $N = 2^n$ for positive integer $n$ and $X \in \Matrix_{N}$. The partial trace of $X$ with respect to $S \subseteq [n]$ is defined as 
\begin{equation*}
  \Tr_S [X] = \sum_{s \in \set{0,1}^S} \br{\id_{S^c} \otimes \bra{s}} X \br{\id_{S^c} \otimes \ket{s}}.
\end{equation*}

The normalized Schatten $p$-norms satisfy \Holder's inequality.
\begin{prop}[{\cite[Eq. 1.174]{watrous2018theory}}, \Holder's inequality] \label{prop: Holder}
    Let $A,B\in\Matrix_{n}$ and $p, p^*$ be positive real numbers satisfying $1/p+1/p^*=1$. We have
    \begin{equation*}\label{preliminary-Holder-2}
   \normsub{A}{p} = \max\set{\abs{\langle C, A \rangle}: C\in\mathcal{M}_{n}, \normsub{C}{p^*}\le 1},
    \end{equation*}
    which implies
    \begin{equation*}\label{preliminary-Holder-1}
    \abs{\langle B, A \rangle} \leq \normsub{A}{p} \cdot \normsub{B}{p^*}.
    \end{equation*}
    $p^*$ above is called the \emph{\Holder\ conjugate} of $p$.
\end{prop}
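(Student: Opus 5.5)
The statement just above is H\"older's inequality for the normalized Schatten norms, in the form of \cref{prop: Holder}. The plan is to reduce it to the unnormalized Schatten-norm H\"older inequality, which holds because the normalizations cancel. For $A\in\Matrix_n$ write $\|A\|_{S_p}=(\Tr\Br{\abs{A}^p})^{1/p}$ for the usual Schatten norm, so that $\normsub{A}{p}=n^{-1/p}\|A\|_{S_p}$. Likewise $\langle C,A\rangle=\frac1n\Tr\Br{C^\dagger A}$. If $1/p+1/p^*=1$, then $n^{-1/p}n^{-1/p^*}=n^{-1}$. Hence the claimed inequality $\abs{\langle B,A\rangle}\le\normsub{A}{p}\normsub{B}{p^*}$ is exactly the standard $\abs{\Tr\Br{B^\dagger A}}\le\|A\|_{S_p}\|B\|_{S_{p^*}}$ multiplied by $1/n$.

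To prove the unnormalized inequality I take singular value decompositions. By von Neumann's trace inequality, $\abs{\Tr\Br{B^\dagger A}}\le\sum_i \sigma_i(A)\sigma_i(B)$, with both singular value sequences sorted in decreasing order. Applying the vector H\"older inequality to these sequences gives $\sum_i\sigma_i(A)\sigma_i(B)\le\norm{\sigma(A)}_p\norm{\sigma(B)}_{p^*}$, and the right-hand side equals $\|A\|_{S_p}\|B\|_{S_{p^*}}$.

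For the variational identity, the bound just proved gives ``$\ge$'', i.e.\ $\abs{\langle C,A\rangle}\le\normsub{A}{p}$ whenever $\normsub{C}{p^*}\le1$. For equality I exhibit a maximizer. Write the polar decomposition $A=U\abs{A}$ and set $C=U\abs{A}^{p-1}/\normsub{A}{p}^{p-1}$ when $A\ne0$; if $A=0$ both sides are $0$. Then
\[
\langle C,A\rangle=\frac{\frac1n\Tr\Br{\abs{A}^{p}}}{\normsub{A}{p}^{p-1}}=\normsub{A}{p}.
\]
Using $(p-1)p^*=p$, we also get $\normsub{C}{p^*}^{p^*}=\frac1n\Tr\Br{\abs{A}^{p}}/\normsub{A}{p}^{p}=1$. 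The endpoint cases $p=1$ and $p=\infty$ need separate maximizers. For $p=1$, take $C=U$, which has spectral norm at most $1$ and gives $\langle C,A\rangle=\normsub{A}{1}$. For $p=\infty$, take $C=n\,u v^\dagger$, where $v,u$ are the top right and left singular vectors of $A$. Then $\normsub{C}{1}=1$ and $\langle C,A\rangle=\sigma_1(A)=\normsub{A}{\infty}$.

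The main obstacle is von Neumann's trace inequality. Everything else is bookkeeping of the normalization factors. If I do not cite it, I would prove it by writing $\Tr\Br{B^\dagger A}=\sum_{i,j}\sigma_i(A)\sigma_j(B)W_{ij}$, where $\abs{W_{ij}}$ is dominated entrywise by a doubly substochastic matrix built from products of the unitary factors. I would then invoke Birkhoff's theorem together with the rearrangement inequality. Alternatively, the whole statement can simply be cited from \cite{watrous2018theory} after the rescaling above.
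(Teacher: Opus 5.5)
Your proof is correct. The paper gives no argument of its own. It simply cites Watrous, whose H\"older inequality and duality formula are stated for the \emph{unnormalized} Schatten norms and the unnormalized trace inner product, and it leaves implicit why the normalized versions hold with the same constant.

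Your proposal makes that step explicit. The factors $n^{-1/p}n^{-1/p^*}=n^{-1}$ exactly cancel the $1/n$ in $\langle\cdot,\cdot\rangle$. Equivalently, the rescaling $C=n^{1/p^*}C'$ carries maximizers between the two settings. You then supply the standard self-contained argument:
\begin{itemize}
\item von Neumann's trace inequality plus vector H\"older gives the inequality;
\item the polar-decomposition maximizer $C=U\abs{A}^{p-1}/\normsub{A}{p}^{p-1}$, together with $(p-1)p^*=p$, shows the maximum is attained.
\end{itemize}
The citation is shorter. Your route is self-contained, and it actually checks that the normalization is compatible with the condition $1/p+1/p^*=1$, which the paper takes for granted.

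Two small remarks. First, under the stated hypothesis ($p,p^*$ positive reals with $1/p+1/p^*=1$) one has $1<p<\infty$. So your separate treatment of $p=1$ and $p=\infty$ is not needed, though it is correct. Second, in your sketch of von Neumann's inequality, the dominating matrix $\tfrac12\br{\abs{P_{ij}}^2+\abs{Q_{ji}}^2}$ is in fact doubly stochastic, which is exactly what Birkhoff's theorem requires.
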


The fidelity between two quantum states $\rho$ and $\varphi$ is defined as
\begin{equation*}
    F(\rho, \sigma) = \br{\Tr\Br{\sqrt{\sqrt{\rho}\sigma\sqrt{\rho}}}}^2.
\end{equation*}
The above definition is symmetric: $F(\rho, \sigma) = F(\sigma, \rho)$.
When both of the inputs are pure, say $\rho = \ketbra{\rho}, \sigma = \ketbra{\sigma}$, then
\begin{align*}
    F(\rho, \sigma) = \abs{\braket{\rho}{\sigma}}^2.
\end{align*}
We then define the phase-dependent fidelity of two pure states $\ket{\rho}, \ket{\sigma}$ as 
$1 - \norm{\ket{\rho} - \ket{\sigma}}^2$.  
The phase-dependent fidelity is upper bounded by the fidelity:
\begin{align*}
    1 - \norm{\ket{\rho} - \ket{\sigma}}^2 \leq \abs{\braket{\rho}{\sigma}}^2 = F(\rho, \sigma).
\end{align*}

The Fuchs–van de Graaf inequalities give a relation between the norms and fidelity:
\begin{prop}[{\cite[Theorem 3.33]{watrous2018theory}}]\label{prop:fuchs-vandegraaf}
    Let $\rho, \sigma$ be positive semi-definite operators of size $2^n\times 2^n$.
    Let $\normsub{\rho}{\text{TD}} = 2^n\normsub{\rho}{1}$ denote the unnormalized trace norm of an operator.
    It holds that
    \begin{equation*}
        1 - \frac{1}{2}\normsub{\rho - \sigma}{\text{TD}} \le F(\rho, \sigma) \le \sqrt{1 - \frac{1}{4}\normsub{\rho - \sigma}{\text{TD}}^2}.
    \end{equation*}
    Equivalently,
    \begin{equation*}
        2 - 2F(\rho, \sigma) \le \normsub{\rho - \sigma}{\text{TD}} \le 2\sqrt{1 - F(\rho, \sigma)^2}.
    \end{equation*}
    Also, for any operator $\rho$, we have $\normsub{\rho}{\text{TD}} \ge \normsub{\rho}{p}$ for any $p\ge 1$ or $p=\infty$.
\end{prop}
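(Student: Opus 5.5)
The plan is to reduce both inequalities to statements about classical distributions or about pure states, where the fidelity and the trace distance are explicit. Throughout, write $T = \normsub{\rho-\sigma}{\text{TD}}$ and $c = \Tr\Br{\sqrt{\sqrt{\rho}\sigma\sqrt{\rho}}}$, so that $F(\rho,\sigma)=c^2$. I will treat $\rho,\sigma$ as density operators (unit trace), since that is how the statement is used. I am relying on two standard facts about quantum states that the paper does not state, and they should be cited:
\begin{itemize}
\item Uhlmann's theorem: there are purifications $\ket{\psi},\ket{\phi}$ of $\rho,\sigma$ with $\abs{\braket{\psi}{\phi}}=c$.
\item The Fuchs--Caves characterization: $c=\min_{\set{P_a}}\sum_a\sqrt{p_a q_a}$ over POVMs, where $p_a=\Tr\Br{P_a\rho}$ and $q_a=\Tr\Br{P_a\sigma}$. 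Moreover, any POVM gives $D_{\operatorname{TV}}(p,q)\le T/2$.
\end{itemize}

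\textbf{Upper bound.} By monotonicity of the trace distance under partial trace, $T\le \normsub{\ketbra{\psi}-\ketbra{\phi}}{\text{TD}}$. For pure states this equals $2\sqrt{1-c^2}$, which follows from a direct $2\times 2$ eigenvalue computation on $\mathrm{span}\set{\ket{\psi},\ket{\phi}}$. This gives
\begin{equation*}
F(\rho,\sigma)=c^2\le 1-\frac{T^2}{4}\le\sqrt{1-\frac{T^2}{4}},
\end{equation*}
and also the equivalent form $T\le 2\sqrt{1-F}$. If $F$ is instead read as the root fidelity $c$, the same computation gives $c\le\sqrt{1-T^2/4}$ directly.

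\textbf{Lower bound.} Take the POVM attaining the Fuchs--Caves minimum. Then $1-T/2\le 1-D_{\operatorname{TV}}(p,q)=\sum_a\min(p_a,q_a)$, while $c=\sum_a\sqrt{p_aq_a}$. Since $\min(p_a,q_a)\le\sqrt{p_aq_a}$, this immediately gives $1-T/2\le c$, which is Watrous' statement for the root fidelity.

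With the squared convention $F=c^2$ used in this paper, I additionally need the purely classical inequality
\begin{equation*}
\sum_a\min(p_a,q_a)\le\Big(\sum_a\sqrt{p_aq_a}\Big)^2 .
\end{equation*}
This is the step I expect to be the main obstacle. I have checked it on several families, for example $q$ a point mass or $p,q$ uniform on nested supports, and it holds with equality there, so it is tight and no slack is available.

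To attack it, I would first try a two-point reduction. Merge the outcomes into $A=\set{a:p_a\le q_a}$ and its complement. Coarse-graining preserves $\sum\min$, and I would want it not to increase $\sum\sqrt{pq}$ beyond what the two-outcome bound needs. After that, verify the resulting two-variable inequality by calculus. If the inequality turns out to fail in general, the fallback is to state the lower bound with the root fidelity as in \cite[Theorem 3.33]{watrous2018theory}, or in the weaker form $1-T/2\le\sqrt{F}$.

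\textbf{Equivalent form and the final claim.} The rearranged chain $2-2F\le T\le 2\sqrt{1-F^2}$ is algebraically equivalent to the two bounds above. For the last sentence of the statement, use monotonicity of the normalized Schatten norms in $p$:
\begin{equation*}
\normsub{\rho}{p}\le\normsub{\rho}{\infty}\le\sum_i\abs{\lambda_i}=\normsub{\rho}{\text{TD}},
\end{equation*}
because the largest singular value is at most the sum of all singular values.
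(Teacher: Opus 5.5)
Your upper bound (Uhlmann plus monotonicity of the trace distance), your Fuchs--Caves argument for $1-\frac12\normsub{\rho-\sigma}{\text{TD}}\le\Tr\Br{\sqrt{\sqrt{\rho}\sigma\sqrt{\rho}}}$, and your final norm comparison are all correct. Restricting to unit trace is genuinely necessary, not just convenient: for $\rho=\sigma=\frac12\ketbra{0}$ the lower bound would read $1\le\frac14$. The paper gives no proof of its own, only the citation to Watrous's Theorem~3.33. There, however, $F$ denotes the root fidelity $\normsub{\sqrt{\rho}\sqrt{\sigma}}{\text{TD}}=\Tr\Br{\sqrt{\sqrt{\rho}\sigma\sqrt{\rho}}}$, whereas the paper defines $F$ as its square.

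The gap is exactly the step you flagged, and it cannot be closed, because the inequality $\sum_a\min(p_a,q_a)\le\br{\sum_a\sqrt{p_aq_a}}^2$ is false. For $p=(3/8,1/2,1/8)$ and $q=(0,1/2,1/2)$, the left side is $5/8$ and the right side is $(1/2+1/4)^2=9/16$. More generally, whenever $q_a\ge p_a$ on the support of $q$, Cauchy--Schwarz gives the reverse inequality $\br{\sum_a\sqrt{p_aq_a}}^2\le\sum_{a:\,q_a>0}p_a=\sum_a\min(p_a,q_a)$. Equality holds only when $p$ is proportional to $q$ on that support, and your point-mass and nested-uniform checks sit exactly on this equality boundary. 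The two-point merge cannot help either, because coarse-graining only increases $\sum_a\sqrt{p_aq_a}$.

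Since commuting states reproduce the classical quantities, take $\rho=\operatorname{diag}(3/8,1/2,1/8,0)$ and $\sigma=\operatorname{diag}(0,1/2,1/2,0)$. These give $\normsub{\rho-\sigma}{\text{TD}}=3/4$ and $F(\rho,\sigma)=9/16<5/8=1-\frac12\normsub{\rho-\sigma}{\text{TD}}$. So the stated lower bound, and with it $2-2F\le\normsub{\rho-\sigma}{\text{TD}}$, is false under the paper's squared convention. Your fallback is the correct resolution. What holds, and what your argument actually establishes, is $1-\frac12\normsub{\rho-\sigma}{\text{TD}}\le\sqrt{F(\rho,\sigma)}$ together with $F(\rho,\sigma)\le 1-\frac14\normsub{\rho-\sigma}{\text{TD}}^2$, and the latter is stronger than the stated upper bound. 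Commit to that version rather than trying to prove the squared one.
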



\subsection{Miscellaneous}

\begin{prop}\label{prop: degree 2 independent set}
    For an acyclic graph $G = (V,E)$ with the maximum degree at most $2$, there exists an independent set of size at least $|V|/2$.
\end{prop}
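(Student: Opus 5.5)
An acyclic graph whose maximum degree is at most $2$ is a disjoint union of paths, and isolated vertices count as paths with one vertex. The plan is therefore to find a large independent set in each path separately and take the union of these sets. Vertices in different connected components are never adjacent, so the union of independent sets chosen in different components is again independent. It is then enough to show that every path on $m$ vertices has an independent set of size at least $\lceil m/2\rceil \ge m/2$, and to add these bounds over the components.

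\textbf{Step 1: structure of the graph.} Take a connected component $H$ with $m$ vertices. Since $H$ is connected and acyclic, it is a tree with $m-1$ edges. Every vertex has degree at most $2$, so $H$ contains no vertex of degree $3$ or more. A tree of this kind is a path $v_1 v_2 \cdots v_m$. To see this, take a longest path in $H$; because no vertex can branch and there is no cycle, the longest path already contains every vertex of $H$.

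\textbf{Step 2: independent set in a path.} In the path $v_1 v_2 \cdots v_m$, take $I_H = \set{v_i : i \text{ odd}}$. The only edges of the path join $v_i$ and $v_{i+1}$, and these two indices have different parity, so no edge has both endpoints in $I_H$. The set $I_H$ has $\lceil m/2 \rceil \ge m/2$ vertices.

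\textbf{Step 3: combine.} Let $I = \bigcup_H I_H$, where $H$ runs over the connected components. This set is independent, and $|I| = \sum_H |I_H| \ge \sum_H |V(H)|/2 = |V|/2$.

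None of these steps is a real obstacle. The only point that needs care is Step 1, the claim that a connected acyclic graph with maximum degree at most $2$ is a path, and the longest-path argument above handles it. An alternative that avoids classifying the components is to note that a forest is bipartite: give each vertex a $2$-colouring by the parity of its distance to a root of its component, and the larger of the two colour classes is an independent set of size at least $|V|/2$. With this argument the degree bound is not needed at all.
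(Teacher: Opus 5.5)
Your proof is correct and takes the same route as the paper: each component is a path, you pick alternate vertices in each component, and you take the union. The paper's one-line proof also mentions cycles, which the acyclicity hypothesis rules out (on an odd cycle alternate selection would give only $\lfloor m/2\rfloor < m/2$), so restricting to paths is the right treatment; your closing remark that a forest is bipartite, so the degree bound is not needed, is also correct and matches the two-colouring argument the paper itself uses in the width-2 structure lemma in the appendix.
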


\begin{proof}
    Every connected component in $G$ is either a path or a cycle. We can construct an independent set of size at least $|V|/2$  by selecting vertices along the path (cycle) alternately for each connected component.
\end{proof}

\section{\texorpdfstring{$\QAC$}{QAC0} Circuits}
In this section we give the formal definition of $\QAC$ circuits, along with some properties that are useful in this work.
$\QAC$ is the quantum generalization of classical $\AC$ circuits, where classical AND gates are replaced by multi-qubit \CZGate s\footnote{An equivalent definition uses generalized Toffoli gates.}, and NOT gates are replaced by arbitrary single-qubit unitaries.
Here, an $n$ qubit \CZGate\ is the unitary defined as
\begin{equation*}
    \CZG \ket{x} = \begin{cases}
        -\ket{1^n} &\text{if } x = 1^n. \\
        \ket{x} &\text{otherwise.}
    \end{cases}
\end{equation*}
That is, if the input is $\ket{1^n}$, then it applies a phase flip. Otherwise, it does nothing.
A $\QAC$ circuit with depth $d$ acting on $n$ qubits can be expressed as $C = L_dM_d\cdots L_1M_1L_0$,
where each $L_k = L_{k_1}\otimes\cdots\otimes L_{k_n}$ is a layer of single-qubit unitaries,
and each $M_k$ is a layer of \CZGate s,
each acting on a disjoint set of qubits. The {\em support} of a \CZGate\ in a \QAC\ circuit is the set of qubits that the gate acts on.
Notice that up to single qubit unitaries,
\CZGate s are equivalent to generalized Toffoli gates.
The $n$-qubit generalized Toffoli gate applies the following unitary for an $n+1$ qubit state,
where the first $n$ qubits are the controls,
and the last qubit is the target:
\begin{equation*}
    \text{For } x\in\set{0,1}^n\text{ and }b\in\set{0,1},\quad\ket{x, b} \mapsto \begin{cases}
        \ket{x, b\oplus 1} &\text{if } x=1^n.\\
        \ket{x, b} &\text{otherwise}.
    \end{cases}
\end{equation*}

When implementing quantum unitaries with $\QAC$ circuits,
it is almost always useful to use ancilla qubits.
Intuitively, ancilla qubits play the role of classical memory in classical computation.
In this case, the qubits acted on by a $\QAC$ circuit may be split into two parts: $n$ input qubits and $a$ ancilla qubits.
The input qubits may vary upon execution, and the ancilla qubits are fixed and can be assumed to be initialized prior to the execution.
Given input $\ket{\phi} \in \mathbb{C}^{n}$,
and the fixed ancilla state $\ket{a} \in \mathbb{C}^a$,
the output state after a circuit $C$ can then be expressed as
\begin{equation*}
    \ket{\rho_C^{\phi}} = C\br{\ket{\phi}\otimes\ket{a}}.
\end{equation*}
When we focus on classical inputs,
we can regard the circuit $C$ as a map from classical strings to quantum states $C:\set{0,1}^n \to \mathbb{C}^{2^{n+a}}$,
expressed as
\begin{equation*}
    \ket{\rho_C^x} = C\br{\ket{x}\otimes\ket{a}}.
\end{equation*}
We use $\rho_C^x = \ketbra{\rho_C^x}$ to denote the output state of $C$ given the input $\ket{x}$.
Analogous to the restriction of Boolean functions, we define the restriction of a quantum circuit as follows:
\begin{definition} \label{def: circuit state restriction}
    Let $S \subseteq [n]$ be a subset of input qubits, and $x_S\in\set{0,1}^S$ be an assignment on $S$.
    We define the restricted map $\rho_C^y|_{S,x_S} :\set{0,1}^{S^c} \to \mathbb{C}^{2^{n+a} \times 2^{n+a}}$ such that
    \begin{align*}
        \rho_C^y|_{S,x_S} = \rho_C^{x_S,y}.
    \end{align*}
\end{definition}

\begin{definition}[compute a Boolean function]\label{def:computingf}
Given an integer $n \geq 1$, to compute a Boolean function $f : \set{0,1}^n \to \set{0,1}$ with a \QAC\ circuit, we first apply the \QAC\ circuit $C$ on $\ket{x,a}$, where $x \in \set{0,1}^n$, $\ket{x}$ is the input state and $\ket{a}$ is a fixed ancilla state independent of $x$.
Then we measure the first qubit in the computational basis and denote the outcome as $g_C(x)$.
We define $f_C : \set{0,1}^n \to [0,1]$ as the probability that the measurement outcome is $1$, equivalently, $f_C(x) = \Pr_{C}[g_C(x) = 1]$.
We say that a quantum circuit $C$ $p$-approximates a Boolean function $f: \set{0,1}^n\to\set{0,1}$ if
for every input $x\in\set{0,1}^n$,
\begin{equation*}
    \prob{g_C(x) = f(x)}\geq p.
\end{equation*}
\end{definition}
For a Boolean function $f: \set{0,1}^n\to\set{0,1}$,
in many scenarios, we are interested in synthesizing the unitary $U_f$ that is associated with $f$:
\begin{equation*}
    U_f\ket{x,b} = \ket{x, b\oplus f(x)}.
\end{equation*}
Synthesizing the unitary $U_f$ allows coherent inputs, and thus is stronger than computing $f$ as in~\cref{def:computingf}.

\begin{definition}[$p$-approximate clean $U$]\footnote{This definition is slightly different from the one in~\cite{rosenthal:LIPIcs.ITCS.2021.32}, where Rosenthal adopted the phase-independent fidelity. Here, we use the fidelity for simplicity. All the results in this paper also hold for the phase-independent fidelity.}\label{problem:approximateCleanU}
  Let $U$ be a unitary acting on $n$ qubits.
  The $p$-approximate clean $U$ problem is to
  construct a circuit $C$ on $n + a$ qubits,
  such that for all $n$-qubit input states $\ket{\phi}$,
  the fidelity  of $C\ket{\phi, 0^a}$ and $\br{U\ket{\phi}} \otimes \ket{0^a}$ is at least $p$.
\end{definition}

\begin{remark}
    For a Boolean function $f: \set{0,1}^n\to\set{0,1}$,
    a quantum circuit that solves the $p$-approximate clean $U_f$ problem
    trivially $p$-approximates the function $f$.
    However, the reverse is generally not true.
    For example, consider the simplest Boolean function $f: \set{0,1}\to\set{0,1}$ such that $f(x) = 0$,
    and the corresponding unitary $U_f\ket{x,b} = \ket{x, b}$.
    Clearly $U_f = \id$.
    However, the circuit $\id\otimes Z$ also computes the Boolean function $f$ exactly,
    because a single $Z$ gate does not change the amplitude of $\ket{0}$ or $\ket{1}$.
    But $\id\otimes Z$ is far away from the identity map $\id$,
    when we have inputs in superposition.
\end{remark}

\begin{definition}[\QACz]
\QACz\ is a family of constant-depth polynomial-size \QAC\ circuits $\set{C_n}_{n\in\mathbb{N}}$, where each $C_n$ takes inputs $x\in\set{0,1}^n$.
With a slight abuse of notation, we also use \QACz\ to represent the languages that are decided by \QACz\ circuits.
Specifically, a language $L\in\QACz$ if there exist a family of \QACz\ circuits $\set{C_n}_{n\in\mathbb{N}}$ satisfying that for any $x\in\set{0,1}^n$, if $x\in L$ ($x\notin L$), then $C(x)$ outputs $1$ with probability at least $2/3$ (at most $1/3$).
\end{definition}

\subsection{From Approximate to Exact for Parity Function}

In this work we are particularly interested in the parity function $\CParityn(x) = \bigoplus_ix_i$,
as well as its associated unitary defined as
\begin{equation*}
    \Parityn\ket{x, b} = \ket{x, b\oplus\CParityn(x)}.
\end{equation*}



Up to $\QACz$ reductions,
the parity unitary $U_{\oplus_n}$ is equivalent to the task of generating the $n$-qubit cat state $\ket{\Cat_n}$,
which is defined as
\begin{equation*}
    \ket{\Cat_n} = \frac{1}{\sqrt{2}}\ket{0^n} + \frac{1}{\sqrt{2}}\ket{1^n}.
\end{equation*}
\begin{definition}[$p$-approximate Clean $\ket{\phi}$]\label{problem:cleancat}
  Let $\ket{\phi}$ be a pure state on $n$ qubits.
  The $p$-approximate clean $\ket{\phi}$ is the problem to
  construct a circuit $C$ on $n + a$ qubits such that the fidelity between $C\ket{0^{n+a}}$ and $\ket{\phi, 0^a}$
  is at least $p$.
  We say a circuit $C$ \textit{exactly} synthesizes $\ket{\phi}$ if it solves the $p$-approximate clean $\ket{\phi}$ for $p=1$.
\end{definition}

Rosenthal~\cite{rosenthal:LIPIcs.ITCS.2021.32}
constructed a family of exponential size $\QACz$ circuits with depth $7$ that $p$-approximate $\ket{\Cat_n}$ with vanishing error.
Recently, Grier, Morris, and Wu~\cite{grier2026mathsfqac0containsmathsftc0with} applied \textit{exact amplitude amplification} to the circuit of Rosenthal,
and proved that the $n$-qubit cat state could be constructed exactly by exponential-size $\QACz$ circuits.
Here we show that any $\QAC$ circuit that computes the parity unitary with a bounded error implies a $\QAC$ circuit that computes the parity unitary exactly, with only a constant growth in depth and size.
The proof is inspired by 
Grier, Morris, and Wu~\cite{grier2026mathsfqac0containsmathsftc0with}, which we defer to \cref{sec:exact-amplitude-amplification-of-parity}.
\begin{restatable}{theorem}{ExactAmplitudeAmplify}
\label{thm: QAC0 B-PARITY to E-PARITY}
    Given integers $d,a,n \geq 1$.
    Let $C$ be a $\QAC$ circuit
    with depth $d$, ancilla size $a$, and input qubits $n$.
    Suppose $C$ solves the $p$-approximate clean $\Parityn$ problem for some $p > 1/2$.
    Then there exists a depth-$O\br{d/(\sqrt{2p}-1)}$ $\QAC$ circuit 
    with ancilla of size $O(a)$,
    that exactly synthesizes $\Parityn$.
\end{restatable}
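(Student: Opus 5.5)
The plan is to pass through cat state synthesis, where exact amplitude amplification works cleanly, and then return to the parity unitary using the constant-depth equivalence of Rosenthal~\cite{rosenthal:LIPIcs.ITCS.2021.32} between $\Parityn$ and $\ket{\Cat_n}$.

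\textbf{Step 1: an approximate cat-state preparer.} From $C$, build a circuit $A$ of depth $O(d)$ with $A\ket{0^{n+a'}} = \alpha\ket{\Cat_n,0^{a'}} + \beta\ket{\perp}$, where $\ket{\perp}$ is orthogonal to $\ket{\Cat_n,0^{a'}}$. Rosenthal's reduction does this: prepare $\ket{+}$ on a control qubit, then apply fan-out realized as $H^{\otimes}\,\Parityn\,H^{\otimes}$ (conjugating parity by Hadamards gives fan-out). From the fidelity guarantee $p$ of $C$ on every input, in particular on the relevant superposition, we expect $|\alpha|^2 \ge p$ up to the loss incurred by the conjugation. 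The target exponent $\sqrt{2p}-1$ suggests the intended route is a reflection-style construction in which $|\alpha| \ge \sqrt{p}$ and the amplitude that matters is $\sin\theta$ with $\sin\theta$ related to $\sqrt{2p}-1$. I expect this kind of loss to come from uncomputing the control qubit, i.e.\ from the cat state being $\frac{1}{\sqrt2}$-weighted sums of two branches, each approximated to amplitude $\sqrt{p}$. I would pin down this exact relation first.

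\textbf{Step 2: exact amplitude amplification in $\QAC$.} Both reflections are implementable in constant depth. The reflection about $A\ket{0}$ is $A(2\ketbra{0}-\id)A^\dagger$, and $2\ketbra{0^m}-\id$ is a single \CZGate\ conjugated by $X$ gates. The reflection about the good subspace $\ket{\Cat_n,0^{a'}}$ is obtained by first applying a depth-$O(d)$ disentangling map $D$ that sends $\ket{\Cat_n}$ to $\ket{0^n}$; $D$ is itself built from the approximate parity circuit. Because $D$ is only approximate, it must be used so that it is applied and then undone, making the reflection exact. The idea is to reflect about ``all ancillas zero and the first qubit equal to $\ket{+}$'' after a CNOT-fan-in, with that fan-in implemented exactly. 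I expect this to be the main obstacle: every gadget inside the reflection has to be exact, not approximate.

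\textbf{Step 3: making the amplification land exactly.} Standard amplification after $k$ rounds yields $\sin((2k+1)\theta)$. To hit exactly $1$, take $k=\lceil \pi/(4\theta)-1/2\rceil$ and first shrink $\theta$ by adding one ancilla rotated by an arbitrary single-qubit unitary, as in Grier--Morris--Wu. Since $\theta$ is bounded below by a function of $\sqrt{2p}-1$, the number of rounds is $O(1/(\sqrt{2p}-1))$, each of depth $O(d)$. This gives the claimed depth, and the ancilla count stays $O(a)$ because the rounds reuse the same registers.

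\textbf{Step 4: converting back to the unitary.} Rosenthal's constant-depth reduction turns the exact cat-state preparer into an exact $\Parityn$, with constant overhead in depth and ancilla. Alternatively, and more directly, Steps 2--3 can be run with $A$ replaced by the controlled parity circuit acting on an arbitrary input. This works because the amplitude is input-independent as long as a fixed reference state is used on the ancilla.
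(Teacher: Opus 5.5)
\textbf{Gap in Step 2.} This is the step you flagged yourself. Amplifying toward $\ket{\Cat_n,0^{a'}}$ needs an \emph{exact} reflection about that vector, and neither of your suggestions gives one.

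Conjugating by an approximate disentangler yields $D^\dagger(2\ketbra{0}-\id)D$. This is an exact reflection, but about $D^\dagger\ket{0\cdots0}$ rather than about the cat state. Applying and undoing $D$ makes the operator unitary but does not move its axis, so the iteration lands exactly on the wrong vector.

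Your other suggestion, an exact CNOT fan-in, is fan-out, which is the primitive you are trying to build. This is not a technicality. The map $\id-2\ketbra{\Cat_n}$ sends $\ket{0^n}\mapsto-\ket{1^n}$ and fixes $\ket{10^{n-1}}$, so on $\ket{+}\ket{0^{n-1}}$ it outputs $\ket{1}\otimes\frac{1}{\sqrt2}(\ket{0^{n-1}}-\ket{1^{n-1}})$. An exact constant-depth reflection about the cat state is therefore already an exact cat preparer.

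The shortcut at the end of Step 4 fails for a related reason. For the parity circuit on a general input $\ket{\phi}$, both the target $\Parityn\ket{\phi}\ket{0^a}$ and the overlap with it depend on $\phi$. So there is neither an input-independent exact reflection nor a single round count.

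\textbf{Missing idea (the paper's route).} Amplify onto a good subspace spanned by two \emph{computational basis} states. Write Rosenthal's approximate preparer as $C_1\ket{0^{n+a}}=a\ket{0^{n+a}}+b\ket{1^n0^a}+c\ket{\omega}$. The reflection about $\mathrm{span}\{\ket{0^{n+a}},\ket{1^n0^a}\}$ is just two multi-qubit phase flips: a CZ gate on all qubits conjugated by $X$ gates on the appropriate positions. This is exact and of constant depth.

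Amplitude amplification keeps the good component pointing in a fixed direction, so it preserves the ratio $a:b$. That ratio must therefore be made exactly $1:1$ beforehand. The paper tensors in two fresh qubits prepared as $Q\ket{00}=\alpha\ket{00}+\beta\ket{11}+\gamma\ket{01}$ and chooses $a\alpha=b\beta$ equal to a prescribed value. The good component is then exactly proportional to a cat state on $n+2$ qubits. The same choice sets the initial success angle to $\pi/(4k+2)$, which does the job of your Step 3.

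Exact amplitude amplification (Grier--Morris--Wu) then gives an exact cat preparer of depth $O(dk)$. Rosenthal's reduction returns exact $\Parityn$, as in your Step 4.

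This is also where the $\sqrt{2p}-1$ comes from, which you could not pin down. Fidelity $p$ means $\abs{a+b}^2\ge 2p$, and since $\abs{a},\abs{b}\le 1$, each branch has amplitude at least $\sqrt{2p}-1$. This bounds how large the balanced amplitude can be chosen, and hence gives $k=O(1/(\sqrt{2p}-1))$.
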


\section{\texorpdfstring{\DDQAC}{2D-QAC} Circuit Upper Bounds}



To our knowledge, prior research only focused on the most general form of \QAC\ circuits where qubits have all-to-all connectivity.
In this work, we primarily focus on geometrically local $\QAC$ circuits, especially two-dimensional $\QAC$ circuits on a lattice, and one-dimensional $\QAC$ circuits on a line, aiming to explore the computational power of geometrically local \QAC\ circuits and establish stronger bounds.

\begin{definition}[$\DDQAC,\DQAC,\DDQACz,\DQACz$]
  We use $\DDQAC$ to denote the class of $\QAC$ circuits that have a two-dimensional lattice configuration.
  The size of a $\DDQAC$ circuit can be described by a pair $(w, n) \in \mathbb{N}^2$,
  where $w$ refers to the width of the circuit,
  and the qubits are arranged within $w$ rows, each of length $n$.
  The qubits are indexed by $(i, j)$ for $i\in[w]$ and $j\in[n]$.
  We allow arbitrary single-qubit unitaries on all qubits.
 The multi-qubit \CZGate s are geometrically local.
  Each multi-qubit \CZGate\ is only allowed to act on a continuous interval of qubits in one row or column in this circuit.
  We use $(w, n)$-$\DDQAC$ to denote a $\DDQAC$ circuit whose size is described by $(w, n)$. The class of \DDQAC\ circuits with width $w=1$ is denoted by $\DQAC$.

  \DDQACz\ is a family of constant-depth polynomial-size \DDQAC\ circuits $\set{C_n}_{n\in\mathbb{N}}$, where each $C_n$ takes inputs $x\in\set{0,1}^n$. With a slight abuse of notation, we also use \DDQACz\ to represent the languages that are decided by \DDQACz\ circuits. Specifically, a language $L\in\DDQACz$ if there exist \DDQACz\ circuits $\set{C_n}_{n\in\mathbb{N}}$ satisfying that for any $x\in\set{0,1}^n$, if $x\in L$ ($x\notin L$), then $C(x)$ outputs $1$ with probability at least $2/3$ (at most $1/3$). \DQACz\ is defined analogously.
\end{definition}

We choose \DDQAC\ circuits as a representative model for geometric locality.
This is not only because they capture the nature of realistic quantum circuits,
but also because, surprisingly, \DDQAC\ circuits are powerful enough to exactly simulate any general \QAC\ circuit with all-to-all connectivity,
with only a constant increase in the circuit depth,
and a quadratic increase in the circuit size.

\begin{figure}[htp]
\centering

\begin{tikzpicture}[
    wire/.style={black},
    control/.style={
        circle, fill=black, inner sep=0pt, minimum size=4.5pt
    },
    target/.style={
        circle, draw=black, thick, fill=white, inner sep=0pt, minimum size=8.5pt,
        path picture={
            \draw[black, thick] 
            (path picture bounding box.north) -- (path picture bounding box.south)
            (path picture bounding box.west) -- (path picture bounding box.east);
        }
    },
    grid line/.style={draw=black, thick}
]

\begin{scope}[scale = 0.8]
    
    \draw[grid line] (0, 0) grid (8, 4);

    
    \draw[wire] (0.5, 3.5) -- (0.5, 1.5);
    \node[control] at (0.5, 3.5) {};
    \node[control] at (0.5, 2.5) {};
    \node[target]  at (0.5, 1.5) {};

    \draw[wire] (0.5, 0.5) -- (3.5, 0.5);
    \node[control] at (0.5, 0.5) {};
    \node[control] at (1.5, 0.5) {};
    \node[control] at (2.5, 0.5) {};
    \node[target]  at (3.5, 0.5) {};

    \draw[wire] (2.5, 3.5) -- (3.5, 3.5);
    \node[control] at (2.5, 3.5) {};
    \node[target]  at (3.5, 3.5) {};

    \draw[wire] (2.5, 2.5) -- (5.5, 2.5);
    \node[target]  at (2.5, 2.5) {};
    \node[control] at (3.5, 2.5) {};
    \node[control] at (4.5, 2.5) {};
    \node[control] at (5.5, 2.5) {};

    \draw[wire] (1.5, 1.5) -- (4.5, 1.5);
    \node[control] at (1.5, 1.5) {};
    \node[control] at (2.5, 1.5) {};
    \node[control] at (3.5, 1.5) {};
    \node[target]  at (4.5, 1.5) {};

    \draw[wire] (6.5, 3.5) -- (6.5, 0.5);
    \node[target]  at (6.5, 3.5) {};
    \node[control] at (6.5, 2.5) {};
    \node[control] at (6.5, 1.5) {};
    \node[control] at (6.5, 0.5) {};
\end{scope}
   
\end{tikzpicture}

    \caption{An example of one layer of a \DDQAC\ circuit. Each square represents a qubit. A $\DDQAC$ circuit is composed of several such layers.}
    \label{fig: 2D-QAC0 example}
\end{figure}
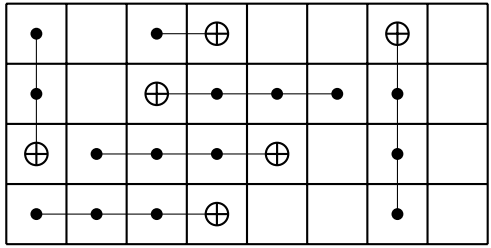



In the remainder of this section, we investigate the computational power of the aforementioned models, centering our discussion on the function $\CParity{n}$ and the corresponding unitary $\Parity{n}$.

\subsection{Exactly Simulating General \texorpdfstring{\QAC\;}{QAC}Circuits with \texorpdfstring{\DDQAC}{2D-QAC} Circuits}

We demonstrate how to simulate general $\QACz$ circuits with all-to-all connectivity using $\DDQACz$ circuits.
The core idea of the simulation is that,
for each non-local \CZGate, we swap out its target qubits onto a new same line.
This allows us to perform the \CZGate\ without interfering with other qubits in a geometrically local manner.
After the \CZGate, we can swap back the target qubits, and move on to the next \CZGate.
We apply the above process so that \CZGate s in the same layer can be simulated in parallel in the $\DDQACz$ circuit, thus preserving the depth to be constant.
Note that there are at most $n$ gates in a layer for a $\QACz$ circuit.
So in the worst case, the number of qubits would go from $n$ for the $\QACz$ circuit to a $(n+1)\times n$ lattice for the $\DDQACz$ circuit.

\begin{theorem} \label{thm: 2D-QAC0 = QAC0}
    Let $U$ be a unitary implemented by a depth-$d$ $\QAC$ circuit on $n$ qubits.
    There exists a depth-$7d$ $(n+1,n)$-$\DDQAC$ circuit $V$
    that exactly simulates $U$,
    in the sense that for any $n$-qubit input state $\ket{\phi}$,
    we have
    \begin{equation*}
       V\br{\ket{\phi}\otimes\ket{1^{n^2}}} = \br{U\ket{\phi}}\otimes\ket{1^{n^2}}.
    \end{equation*}
    As a corollary, $\DDQACz = \QACz$.
\end{theorem}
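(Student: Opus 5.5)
We simulate the circuit one layer at a time on an $(n+1)\times n$ lattice. Row $1$ holds the $n$ data qubits of $U$, placed so that data qubit $j$ sits at position $(1,j)$. The remaining $n$ rows, $n^2$ qubits in all, are ancillas initialized to $\ket{1}$.

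Single-qubit layers $L_k$ of $U$ are applied directly on row $1$. Each \CZGate\ layer $M_k=\bigotimes_{t} \CZGr_{S_t}$, with disjoint supports $S_1,\dots,S_m$ and $m\le n$, is simulated in constant depth using the following gadget.

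\textbf{Step 1: swap the data into columns.} Assign gate $t$ to ancilla row $t+1$. For every $j\in S_t$, move the state of $(1,j)$ down column $j$ to $(t+1,j)$, and put a $\ket{1}$ in its place. A naive swap along a column is not local in one step. Instead we use the observation that swaps between $(1,j)$ and $(t+1,j)$ can be built from multi-qubit controlled gates on the column interval $\{1,\dots,t+1\}$, because all the intermediate ancillas in that interval are $\ket{1}$. Concretely, a generalized Toffoli on the interval, with controls on the $\ket{1}$ intermediates and on one endpoint and target the other endpoint, acts as a CNOT between the two endpoints. Three such gates, alternating the target, implement the SWAP.

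The column intervals for different $j$ lie in different columns and are disjoint, so all the swaps in one layer run in parallel. Each CNOT is a \CZGate\ conjugated by Hadamards on its target. This gives $3$ \CZGate\ layers, with the Hadamard layers merged into the adjacent single-qubit layers. The one care point is that intermediate ancillas must remain $\ket{1}$ throughout, which holds because the swaps act only on the endpoints.

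\textbf{Step 2: apply the gate within its row.} Row $t+1$ now holds the qubits of $S_t$ at their columns. Every other position in that row is an ancilla in $\ket{1}$. We apply $\CZGr$ to the row interval from $\min S_t$ to $\max S_t$. On the $\ket{1}$ positions it acts as a control that is always satisfied, so the effect is exactly $\CZGr_{S_t}$ on the data. The rows differ for different $t$, so all gates of the layer run simultaneously in $1$ layer.

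\textbf{Step 3: undo the swaps.} Reversing Step 1 takes another $3$ layers and returns the data to row $1$ and all ancillas to $\ket{1}$.

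Altogether each layer of $U$ costs $3+1+3=7$ \CZGate\ layers, so $V$ has depth $7d$. The identity $V(\ket{\phi}\otimes\ket{1^{n^2}})=(U\ket{\phi})\otimes\ket{1^{n^2}}$ holds exactly because every gadget acts as the intended unitary on the subspace where the ancillas are all $\ket{1}$, and each gadget restores the ancillas to that subspace.

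For the corollary, the inclusion $\DDQACz\subseteq\QACz$ is trivial. In the other direction, a $\QACz$ circuit with $n'=\poly(n)$ qubits, including its ancillas, becomes an $(n'+1)\times n'$ lattice of polynomial size and constant depth. Single-qubit unitaries prepare the $\ket{1}$ ancillas from $\ket{0}$, and the output qubit stays in row $1$.

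The main obstacle is making the long-range swap in Step 1 geometrically local in constant depth. It relies on the ancillas being $\ket{1}$, so we must check carefully that the intermediate ancillas, and the ancillas inside the Step 2 intervals, are never disturbed. This holds because the supports are disjoint and each column and each row is used by at most one gadget per layer.
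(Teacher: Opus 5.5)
Your proposal is correct and follows the paper's proof essentially step for step. The ancillas are initialized to $\ket{1}$, and each gate's support is moved down its own columns to a dedicated row by three column-interval generalized Toffolis that act as CNOTs because the intermediate qubits are $\ket{1}$. The gate is then applied along that row, with the $\ket{1}$ positions acting as always-satisfied controls, and the swaps are undone, for $3+1+3=7$ layers per layer of $U$. You also check explicitly that the intermediate and row-filler ancillas are never disturbed, and you spell out the corollary; the paper leaves both implicit.
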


\begin{proof}[Proof of \cref{thm: 2D-QAC0 = QAC0}]

We begin by using a depth-7 \DDQAC\ circuit, denoted as $\tilde{C}$, to simulate a depth-1 \QAC\ circuit $C$. Suppose the circuit $C$ contains $k$ gates with supports $S_1, S_2, \cdots, S_k$. 

In our construction, all ancilla qubits in $\tilde{C}$ are initialized to the state $\ket{1}$.
Also, since generalized Toffoli gates are equivalent to \CZGate s up to local unitaries,
we use generalized Toffoli gates in our construction,
and also assume the multi-qubit gates in $C$ are generalized Toffoli gates.

In the first step, $\tilde{C}$ applies $\operatorname{SWAP}$ gates to swap the $(0, x)$ qubit with the $(i, x)$ qubit for indices $x \in S_i$. Note that a $\operatorname{SWAP}$ gate can be decomposed into three $\operatorname{CNOT}$ gates. Furthermore, a $\operatorname{CNOT}$ gate between the $(0,  x)$ and $(i,  x)$ qubits is equivalent to a large generalized Toffoli gate controlling on the sequence $(0, x), (1, x), \cdots, (i-1, x)$ and targeting $(i,x)$, provided that all intermediate qubits are in the state $\ket{1}$.

In the second step, $\tilde{C}$ performs the gate corresponding to $S_i$ in the $i$-th row.
Since all other irrelevant qubits in the row are now in the state $\ket{1}$, this can be implemented by controlling all other qubits instead of the target qubit.
In the final step, $\tilde{C}$ reverses the operation by swapping the $(0,  x)$ qubit and the $(i,  x)$ qubit back. An example of $\tilde{C}$ is shown in \cref{fig: 2DQAC0 = QAC0}.

\begin{figure}[htp]
\begin{tikzpicture}[
    basic box/.style={
        draw, thick, rounded corners=3pt,
        align=center, font=\large, inner sep=0pt
    },
    s1 box/.style={basic box, fill=blue!20},
    s2 box/.style={basic box, fill=green!20},
    grid line/.style={draw=black, thick},
    myarrow/.style={->, >=stealth, thick}
]

\newcommand{\drawswap}[3]{
    \draw[black] (#1, #2) -- (#1, #3);
    \node[scale=1.5] at (#1, #2) {$\times$};
    \node[scale=1.5] at (#1, #3) {$\times$};
}

\begin{scope}[xshift=0cm, scale = 0.7, transform shape]

    \draw[thick] (0,0) rectangle (6,3);
    \foreach \x in {1,2,3,4,5} \draw (\x, 0) -- (\x, 3);
    \foreach \y in {1,2} \draw (0, \y) -- (6, \y);

    \drawswap{0.5}{0.5}{1.5}
    \drawswap{1.5}{0.5}{1.5}
    \drawswap{2.5}{0.5}{2.5}
    \drawswap{3.5}{0.5}{2.5}
    \drawswap{4.5}{0.5}{2.5}
    \drawswap{5.5}{0.5}{1.5}

\end{scope}

\begin{scope}[xshift=5cm, scale = 0.7, transform shape] 

    
    \node[s1 box, minimum width=1.8cm, minimum height=0.9cm] (L_S1a) at (1, 5.5) {$S_1$};
    \draw[thick] (0.5, 5.05) -- (0.5, 4.8); 
    \draw[thick] (1.5, 5.05) -- (1.5, 4.8); 
    \draw[thick] (0.5, 5.95) -- (0.5, 6.2); 
    \draw[thick] (1.5, 5.95) -- (1.5, 6.2); 
    
    \node[s2 box, minimum width=2.8cm, minimum height=0.9cm] (L_S2) at (3.5, 5.5) {$S_2$};
    \foreach \x in {2.5, 3.5, 4.5} {
        \draw[thick] (\x, 5.05) -- (\x, 4.8); 
        \draw[thick] (\x, 5.95) -- (\x, 6.2); 
    }

    \node[s1 box, minimum width=0.9cm, minimum height=0.9cm] (L_S1b) at (5.5, 5.5) {$S_1$};
    \draw[thick] (5.5, 5.05) -- (5.5, 4.8); 
    \draw[thick] (5.5, 5.95) -- (5.5, 6.2); 

    \node[thick, scale=2.5] at (3, 3.9) {$\Downarrow$};

    \fill[green!20] (2,2) rectangle (5,3);
    \fill[blue!20] (0,1) rectangle (6,2);

    \draw[thick] (0,0) rectangle (6,3);
    \foreach \x in {1,2,3,4,5} \draw (\x, 0) -- (\x, 3);
    \foreach \y in {1,2} \draw (0, \y) -- (6, \y);

    \node at (3.5, 2.5) {\large $S_2$};
    \node at (3.0, 1.5) {\large $S_1$};

\end{scope}

\begin{scope}[xshift=10cm, scale = 0.7, transform shape]

    \draw[thick] (0,0) rectangle (6,3);
    \foreach \x in {1,2,3,4,5} \draw (\x, 0) -- (\x, 3);
    \foreach \y in {1,2} \draw (0, \y) -- (6, \y);

    \drawswap{0.5}{0.5}{1.5}
    \drawswap{1.5}{0.5}{1.5}
    \drawswap{2.5}{0.5}{2.5}
    \drawswap{3.5}{0.5}{2.5}
    \drawswap{4.5}{0.5}{2.5}
    \drawswap{5.5}{0.5}{1.5}

\end{scope}

\end{tikzpicture}
\caption{Simulation for a depth 1 \QACz circuit}
\label{fig: 2DQAC0 = QAC0}
\end{figure}

Assuming $C$ operates on $n$ inputs,
and has $k$ multi-qubit gates.
$\tilde{C}$ requires $k \times n$ ancilla on $k$ rows.
Including the input qubits this is $(k+1)\times n$.
Note that $k\le n$,
so the total number of qubits is upper bounded by $(n+1)\times n$.
Ultimately, $\tilde{C}$ stores the computational result of $C$ in the first row, while maintaining all remaining qubits in the state $\ket{1}$.
The circuit $U$ may have multiple layers,
which are simulated layer by layer using the same strategy by a depth-$7$ $\DDQAC$ circuit.
Hence, we conclude the result.

\end{proof}

\subsection{Error, Width, and Ancilla Reduction for Parity in \texorpdfstring{$\DDQAC$}{2D-QAC}}

In this section, we prove that if there exists a \QACz\ circuit that approximately synthesizes \Parityn, then it can also be exactly synthesized by a "thin" \DDQACz\ circuit, with a width as small as $n^\epsilon$ for any $\epsilon > 0$. To prove it, we exploit that parity can be computed recursively, as in~\cite{ADOY24}. Moreover, the error can be eliminated by~\cref{thm: QAC0 B-PARITY to E-PARITY}.


\begin{theorem}\label{prop: poly n width DDQAC0}
    Let $n\in\mathbb{N}$ be the input size and $p > \frac{1}{2}$ be a constant.
    Suppose there is a \QACz\ circuit family that solves the $p$-approximate clean \Parityn\ problem.
    Then for any constant $\ve > 0$,
    there exists a \DDQACz\ circuit family
    of dimension $n^\epsilon\times n^{1+\ve}$
    that exactly computes $\Parity{n}$.
\end{theorem}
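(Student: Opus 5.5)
We plan to combine three ingredients: the error reduction of \cref{thm: QAC0 B-PARITY to E-PARITY}, the recursive structure of parity, and the row-swapping simulation used in the proof of \cref{thm: 2D-QAC0 = QAC0}. First, we apply \cref{thm: QAC0 B-PARITY to E-PARITY} to the given family, which yields, for every $m$, a depth-$d'$ $\QAC$ circuit $E_m$ with $d' = O(d)$ and polynomially many ancillae $a(m)\le m^c$ that exactly synthesizes $\Parity{m}$. We will only ever invoke $E_m$ on small blocks: set $m = n^{\delta}$ with $\delta = \ve/(c+1)$, so that $E_m$ acts on at most $m + a(m) \le 2n^{\ve}$ qubits, and in particular has at most $2n^{\ve}$ gates per layer.

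Second, we compute $\Parity{n}$ recursively, as in~\cite{ADOY24}. Partition the $n$ input bits into $n/m$ blocks of size $m$. On each block, compute its parity into a fresh qubit by applying $E_{m}$, which exactly writes $b\oplus\bigoplus_{i\in\text{block}}x_i$ into a target initialized to $\ket{0}$. This yields $n/m$ partial parities, and repeating the procedure on these gives $n/m^2$ partial parities, and so on. After $k=\lceil 1/\delta\rceil = O(1)$ levels, a single bit remains, and we XOR it into the output qubit $b$. Afterwards we uncompute all intermediate levels by running the same circuits in reverse order. Because every $E_m$ is exact and clean, restoring its ancillae to $\ket{0^{a}}$, the composed circuit exactly implements $\Parity{n}$ on arbitrary superpositions, and its total depth is $O(k d') = O(d)$.

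Third, we lay this out on a $w\times L$ lattice with $w = O(n^{\ve})$ and $L = O(n^{1+\ve})$. Each block instance of $E_m$ is placed in its own vertical strip of $O(n^{\ve})$ columns, holding its $m+a(m)$ qubits in the first row. Inside the strip we simulate $E_m$ using the construction of \cref{thm: 2D-QAC0 = QAC0}: this requires $(g+1)$ rows, where $g\le 2n^{\ve}$ is the number of gates per layer, and all ancillae are initialized to $\ket{1}$. The strips occupy disjoint columns, so instances at the same recursion level run in parallel. The total width is therefore $\sum_j (n/m^j)\cdot O(n^{\ve}) = O(n^{1+\ve})$, and the height is $O(n^{\ve})$.

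The main obstacle is routing the data. The output of level $j$ sits in scattered strips, but it must be fed as contiguous input to the strips of level $j+1$. We intend to give the level-$(j+1)$ strips their own columns and move each partial-parity bit to the correct location using geometrically local operations in constant depth. Our first attempt will transfer a bit along a row in constant depth with long generalized Toffoli gates: a CNOT between two qubits in the same row equals a Toffoli controlled on all intermediate qubits, provided those qubits are set to $\ket{1}$. This is the same trick used in \cref{thm: 2D-QAC0 = QAC0}. To satisfy that condition, we will reserve a dedicated routing row of $\ket{1}$ ancillae, one per transferred bit; we can afford $O(n^{\ve})$ routing rows because at most $n/m$ bits move at each level. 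Before routing, each bit is swapped vertically into its assigned row. In that row the path between source and destination is all $\ket{1}$, so the transfer is exact, and at the end we swap the bit back and reverse the routing for uncomputation. If per-bit rows turn out to be too many, we will instead group transfers into $O(m)$ rows whose horizontal intervals are disjoint.
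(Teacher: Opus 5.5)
Your architecture is the paper's, read top-down instead of bottom-up: make the block circuit exact and clean via \cref{thm: QAC0 B-PARITY to E-PARITY}, recurse on blocks of size $m=n^{\delta}$ as in~\cite{ADOY24}, and embed each block with the construction of \cref{thm: 2D-QAC0 = QAC0}. The paper instead builds $\Parity{n^k}$ from circuits for $\Parity{n}$ with $k=2c/\ve$. Your explicit uncomputation is more careful than the paper, which leaves the intermediate block parities as garbage.

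Where you differ is the layout. The paper stacks levels: $D_k$ consists of $n$ copies of $D_{k-1}$ side by side plus a fresh band of $n^c$ rows. One layer of CNOTs copies the children's outputs into a new line of that band. The block circuit $D$ then runs there with its columns stretched apart, each row gate extended across the intermediate $\ket{1}$ fillers. No horizontal transport is ever needed, at the price of height $k n^c$. You put all levels in the same rows and transport bits horizontally, and that step, as written, does not work.

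``One routing row per transferred bit'' requires $n/m=n^{1-\ve/(c+1)}$ rows at the first level, which is far more than $O(n^{\ve})$. The justification ``because at most $n/m$ bits move'' does not follow. Your fallback, packing transfers into $O(m)$ rows with disjoint intervals, is correct only for a layout you never specify. Suppose the level-$(j+1)$ strips just get their own block of columns, say to the right of all level-$j$ strips. Then every transfer interval crosses that common boundary, the intervals pairwise overlap, and you are back to $n/m^{j}$ rows.

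You need a hierarchical placement, e.g.\ post-order, with each parent strip immediately after the contiguous column range of its $m$ children's subtrees. Then transfers of different parents live in disjoint column ranges, and the $m$ mutually overlapping transfers of one parent use $m$ rows. The same $O(m)\le n^{\ve}$ routing rows can then be reused at every level. The vertical moves through the child strips are fine: those columns are all $\ket{1}$ below their first row by the guarantee of \cref{thm: 2D-QAC0 = QAC0}. However, you must also initialize the parent strips' input slots to $\ket{1}$, so that after the SWAPs the routing rows are again all $\ket{1}$ for later moves. Alternatively, adopt the paper's stacking and drop horizontal routing altogether.
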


\begin{proof}[Proof of \cref{prop: poly n width DDQAC0}]
  Fix any $n\in\mathbb{N}$.
  Let $C$ be $\QAC$ circuit with depth $d$,
  that uses $a = \poly(n)$ ancilla qubits,
  and solves the $p$-approximate clean $\Parityn$ problem.
  By \cref{thm: QAC0 B-PARITY to E-PARITY},
  there exists a $\QAC$ circuit that exactly solves the $\Parityn$ problem,
  with depth $O\br{\frac{d}{\sqrt{2p}-1}}$
  and ancilla size $O(a)$.
  Furthermore, by \cref{thm: 2D-QAC0 = QAC0},
  this $\QAC$ circuit can be exactly simulated by a $\DDQAC$ circuit $D$ with depth $O\br{\frac{d}{\sqrt{2p}-1}}$ and a $O(n+a)\times O(n+a)$ lattice layout.
  Recall $a=\poly(n)$,
  so we assume the lattice layout of this $\DDQAC$ circuit is $n^c\times n^c$,
  for some constant $c > 0$.
  Also, we let $d^\prime = O\br{\frac{d}{\sqrt{2p}-1}}$ be the depth of this circuit $D$.

  Now for each $k\ge 1$,
  we construct a $\DDQAC$ circuit $D_k$ with a $kn^c\times n^{k-1+c}$ lattice layout,
  with depth $d^\prime k+k-1$,
  such that exactly computes $\Parity{n^k}$.
  For $k=1$, the circuit $D_1$ is the original circuit $D$.
  We now proceed with induction.
  For any $k\ge 2$,
  we divide the input $x\in\set{0,1}^{n^k}$ as
  $x=x_1\dots x_n$, where each $x_i\in\set{0,1}^{n^{k-1}}$.
  For each $x_i$, we apply the circuit $D_{k-1}$ to input $x_i$ independently.
  We can arrange these circuits adjacently,
  so that they form a $(k-1)n^c\times n^{k-1+c}$ lattice layout.
  The depth is currently $d^\prime (k-1)+k-2$,
  which is the depth of the circuit $D_{k-1}$.
  Let $y_i = \CParity{n^{k-1}}(x_i)$ be the register containing the parity of $x_i\in\set{0,1}^{n^{k-1}}$.
  Now we use a layer of CNOT gates, to copy these $y_i$ to a new line of width $n^{k-1+c}$, with the other qubits initialized to the state $\ket{1}$.
  The depth is now $d^\prime (k-1)+k-1$.
  After that, we apply the circuit $D$ witch computes $\Parityn$,
  to calculate the final parity $\CParityn(y_1, \dots, y_n)$.
  Note that although $y_i$ are not adjacent to each other,
  they are on the same line, with the other qubits initialized to all $\ket{1}$.
  So we can nevertheless apply the circuit $D$,
  but with multi-qubit \CZGate s extended on these intermediate $\ket{1}$ states.
  This requires a circuit with a $n^c\times n^{k-1+c}$ lattice layout.
  Combining this with the previous circuit,
  we get a $\DDQAC$ circuit with a $kn^c\times n^{k-1+c}$ lattice layout,
  and depth $d^\prime k + k-1$.
  
  Finally, choosing $k = 2c/\ve$, which is a constant large enough such that for $n$ large enough we have $kn^c \le n^{k\ve}$ and $n^{k-1+c} \le n^{k(1+\ve)}$,
  and the circuit $D_k$ computes $\Parity{n^k}$,
  we conclude the proof.
\end{proof}

\subsection{Upper Bounds of Parity in \texorpdfstring{\DDQAC}{2D-QAC} Circuit}

Next, we discuss the circuits that synthesize the unitary \Parityn\ in \DDQACz\ circuit families.
Due to their equivalence within the \QACz\ framework~\cite{rosenthal:LIPIcs.ITCS.2021.32}, we are also interested in the preparation of the $\ket{\Cat_n} = \frac{1}{\sqrt{2}} (\ket{0^n} + \ket{1^n})$ state in addition to \Parityn\ itself.

\cref{thm: 2D-QAC0 = QAC0} provides a construction for both \Parityn\ and the $\ket{\Cat}$ state with a width of $O(\log n)$. We now demonstrate that in $\DDQAC$ circuits, a careful arrangement of the gate layout allows us to significantly reduce the required width.

\begin{theorem} \label{prop: 2DQAC0 computes parity}
    Let $n \geq 1$ be an integer.
    \begin{itemize}
        \item There exists a depth-$n$ $\DQAC$ circuit $C$ with no ancilla that synthesizes $\Parityn$.
        \item There exists a width-$2$ $\DDQAC$ circuit $C$ of depth $O(\log n)$ that synthesizes $\Parityn$.
        \item There exists a width-$\operatorname{poly(n)}$ $\DDQAC$ circuit $C$ of depth $O(\frac{\log n}{\log \log n})$ that synthesizes $\Parityn$.
    \end{itemize}
    
\end{theorem}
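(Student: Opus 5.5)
The plan is to treat the three items separately. Items two and three use the same recursive/tree strategy as in the proof of \cref{prop: poly n width DDQAC0}, instantiated at different fan-ins. Throughout, a $\operatorname{CNOT}$ is a $2$-qubit generalized Toffoli, so it counts as one layer of multi-qubit gates up to single-qubit unitaries. A generalized Toffoli whose controls are the target's neighbours lying in a row of $\ket{1}$'s behaves like a long-range $\operatorname{CNOT}$, exactly as in the proof of \cref{thm: 2D-QAC0 = QAC0}.

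\emph{First item (depth $n$, no ancilla, \DQAC).} Put the target $b$ in the middle of the line: $x_1,\dots,x_m,b,x_{m+1},\dots,x_n$ with $m=\lceil n/2\rceil$. Only nearest-neighbour $\operatorname{CNOT}$s are used, in four phases.
\begin{enumerate}
\item \emph{Left forward sweep.} At step $t\le m$, apply $\operatorname{CNOT}(x_t\to x_{t+1})$, where $x_{m+1}$ means $b$ at step $m$. After step $t$, $x_{t+1}$ holds the prefix parity $x_1\oplus\cdots\oplus x_{t+1}$, and at step $m$ the left half's parity enters $b$.
\item \emph{Right forward sweep.} In parallel, sweep from $x_n$ toward $b$, delayed by one step so the last gate $\operatorname{CNOT}(x_{m+1}\to b)$ lands at step $m+1$ and does not collide with the left half's final gate on $b$.
\item \emph{Left uncompute.} In steps $m+1,\dots,2m-1$, undo the left half from right to left. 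The gate $x_j\mathrel{\oplus}=x_{j-1}$ requires $x_{j-1}$ to still hold its prefix parity, which is the case when proceeding right to left. Each gate of this phase touches only left-half qubits other than $b$, so it is disjoint from whatever acts on $b$ in the same layer.
\item \emph{Right uncompute.} Analogously, starting at step $m+2$, and ending at step $(m+2)+(n-m-1)-1=n$.
\end{enumerate}
The total depth is $\max(2m-1,n)=n$. The remaining checks are routine: disjointness of the gates within each layer, and the invariant that each $x_j$ is restored, which is the same telescoping identity as in the standard ladder. The output is $b\oplus\CParityn(x)$ with all inputs restored, which is exactly $\Parityn$. The main obstacle is precisely this scheduling. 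A one-sided ladder costs $2n-1$, because the uncompute must run right to left after the forward pass finishes. Splitting the line around $b$ and offsetting the two halves by one step is what brings the depth down to exactly $n$.

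\emph{Second item (width $2$, depth $O(\log n)$).} Row $1$ holds the inputs and the target; row $2$ is ancilla in $\ket{1}$. Compute parities along a binary tree with $\log n$ levels. At level $\ell$, the partial parities sit at positions that are multiples of $2^\ell$. Each pair is merged by a long-range $\operatorname{CNOT}$ implemented in three constant-depth stages, as in \cref{thm: 2D-QAC0 = QAC0}: swap the source down into row $2$ (vertical $\operatorname{CNOT}$s); apply a horizontal Toffoli in row $2$ whose intermediate controls are $\ket{1}$, targeting the column of the destination; then move the result up and swap back. The intervals used at a given level are disjoint, so every level has constant depth. After the root result is $\operatorname{CNOT}$'d into $b$, the tree is reversed to restore the inputs, giving total depth $O(\log n)$. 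The delicate point is keeping row $2$ back at $\ket{1}$ after every level; I would check this with the same bookkeeping as in \cref{thm: 2D-QAC0 = QAC0}.

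\emph{Third item (poly width, depth $O(\log n/\log\log n)$).} Use a tree of fan-in $k=\log n$. A parity of $k$ bits can be computed exactly by a constant-depth \QAC\ circuit of size $\poly(2^k)=\poly(n)$, for example by brute force over all $2^k$ assignments with generalized Toffolis. By \cref{thm: 2D-QAC0 = QAC0}, each such block becomes a constant-depth \DDQAC\ circuit of polynomial width. The blocks at one level are placed side by side, with outputs routed onto a common row of $\ket{1}$'s, exactly as in the induction in the proof of \cref{prop: poly n width DDQAC0}. Running the $\log n/\log\log n$ levels forward and then uncomputing gives the claimed depth with polynomial width.
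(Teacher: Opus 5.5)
Your first two items are correct. In the first, your split-around-$b$ schedule is right, and it is more careful than the paper, which only says the parity is computed ``bit by bit'' in depth $n$; as you note, a one-sided ladder with uncomputation costs $2n-1$. The second follows the paper's idea: a binary tree as in the cat-state construction, with row $2$ acting as a transparent $\ket{1}$ channel for the long-range $\operatorname{CNOT}$s. Make sure the value written into row $2$ at the destination column is uncomputed (a second Toffoli) before you swap back. Alternatively, swap both endpoints into row $2$, as the paper does.

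The gap is in the third item, at the step ``a parity of $k=\log n$ bits can be computed exactly by a constant-depth \QAC\ circuit of size $\poly(2^k)$, for example by brute force.'' Brute force does not give constant depth in \QAC.
\begin{itemize}
\item Gates in a layer have disjoint supports. The $2^{k-1}$ generalized Toffolis of the DNF all share the controls $x_1,\dots,x_k$ and the target, so they need either $2^{k-1}$ layers or about $n/2$ copies of every $x_i$.
\item Producing those copies in constant depth is a fan-out onto $\Theta(n)$ targets. In \QACz, fan-out is equivalent to parity (Green--Homer--Moore--Pollett; Rosenthal for the cat-state version).
\item So your step presupposes the very capability whose existence is the open problem. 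Unlike \ACz, exponential size does not trivially give parity in \QACz. This is why even Rosenthal needed nekomata states to get an approximate exponential-size circuit.
\end{itemize}

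The missing ingredient is the nontrivial result the paper cites: Grier, Morris, and Wu (Corollary 10) show that parity on $\log n$ bits is exactly computable in constant depth. Their route is Rosenthal's exponential-size cat-state construction followed by exact amplitude amplification, the same mechanism as \cref{thm: QAC0 B-PARITY to E-PARITY}. With that block substituted, your fan-in-$\log n$ tree and the conversion via \cref{thm: 2D-QAC0 = QAC0} go through. The paper applies \cref{thm: 2D-QAC0 = QAC0} once to the whole depth-$O(\log n/\log\log n)$ \QAC\ circuit; your level-by-level arrangement is an equivalent alternative.
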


\begin{proof}[Proof of \cref{prop: 2DQAC0 computes parity}]
    The \DQAC\ circuit is trivial. 
    We simply compute the parity bit by bit,
    with a depth-$n$ circuit.
    
    The construction for the width-2 circuit is analogous to the preparation of $\ket{\Cat}$. The key difference, however, lies in the presence of input qubits: we can no longer assume that the intermediate qubits are initialized to $\ket{1}$, and consequently, they cannot be treated as transparent.
    To address this, we utilize the second row of the width-2 circuit. For each required $\operatorname{CNOT}$ operation, we first $\operatorname{SWAP}$ the relating qubits into the second row, apply the gate, and subsequently reverse the $\operatorname{SWAP}$ operations.

    Grier, Morris, and Wu~\cite[Corollary 10]{grier2026mathsfqac0containsmathsftc0with} have proved that the parity of $\log n$ bits can be computed exactly in constant depth. This implies the existence of a circuit that synthesizes $\Parityn$ in $O(\frac{\log n}{\log \log n})$ depth. By applying \cref{thm: 2D-QAC0 = QAC0}, we obtain a $\DDQAC$ circuit for $\Parityn$ with the same $O(\frac{\log n}{\log \log n})$ depth complexity. We note, however, that this construction requires polynomial width.
\end{proof}

\section{PARITY is not in \texorpdfstring{\DQACz}{1D-QAC0}}

In a \DQAC\ circuit, all qubits are arranged on a line.
Each quantum gate is allowed to act only on a contiguous set of qubits.
The circuit may contain arbitrary single-qubit unitaries and \CZGate s of any size. Since two consecutive single-qubit unitaries on the same qubit can be merged,
we may assume that layers of single-qubit unitaries and layers of \CZGate s alternate. We define the depth of the circuit to be the number of \CZGate\ layers.

When considering the computational power of \DQAC\ circuits,
the way the inputs are placed may affect the power of computation. 
For instance, given input $x \in \set{0,1}^n$, consider computing the $\mathrm{OR}_n$ function on a line of length $n^2$.
If $x_1,\cdots,x_n$ are placed contiguously on qubits $1,\cdots,n$, one can perform this computation using a single Toffoli gate of size $n$.
In contrast, if $x_i$ is placed interleaving with other qubits,
e.g., on even indices,
then the other qubits in the middle may interfere when we apply a large \CZGate,
which affects the computation.
We will prove lower bounds on \parity\ for both cases. 
In \cref{subsec: 1D-QAC0 1} and \cref{subsec: 1D-QAC0 2}, we consider the strongest model:
the circuits are allowed to place the input qubits arbitrarily. 
In \cref{subsec: 1D-QAC0 3}, we assume the inputs are placed in a contiguous interval.

We first show that we can synthesize an $n$-qubit cat state with a depth-$\log(n)$ $\DQAC$ circuit,
just as is the case for general $\QAC$ circuits.
It is worth noting that the best-known polynomial-size \QAC circuit that synthesizes $\Cat$ requires $\Omega(\log n/\log\log n)$ depth. Thus, this result implies that $\DQAC$ circuits are almost as powerful as general $\QAC$ circuits in synthesizing cat states.
\begin{theorem} \label{thm: 2DQAC0 generates cat state}
    Let $n \geq 1$ be an integer.
    There exists a depth-$\log(n)$ $\DQAC$ circuit $C$ with no ancilla, such that $C\ket{0^n} = \ket{\Cat_n}$.
\end{theorem}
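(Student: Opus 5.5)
We will build the cat state by a doubling construction on the line, reusing the trick from the proof of \cref{thm: 2D-QAC0 = QAC0}: qubits sitting in the state $\ket{1}$ can be added to the controls of a generalized Toffoli gate without changing its action, so they are \emph{transparent}. Since we want $\ket{0^n}$-type branches, we work instead with the locally equivalent target $\frac{1}{\sqrt2}(\ket{1^n}+\ket{0^n})$. More precisely, we aim at a state in which every qubit not yet involved is $\ket{1}$, and then apply a final layer of single-qubit unitaries (or note that $\ket{\Cat_n}$ is symmetric under global $X^{\otimes n}$, so nothing is needed).

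\textbf{Initialization.} Assume $n=2^m$; the general case is handled by truncating the last round. First apply $X$ to every qubit, so the state is $\ket{1^n}$. Then apply a Hadamard-type rotation on qubit $1$, giving $\frac{1}{\sqrt2}(\ket{0}+\ket{1})\otimes\ket{1^{n-1}}$.

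\textbf{Inductive invariant.} After round $t$, the state is
\begin{equation*}
\tfrac{1}{\sqrt2}\br{\ket{z^{(0)}}+\ket{z^{(1)}}}.
\end{equation*}
Here the ``active'' qubits are positions $j\equiv 1 \pmod{n/2^t}$, i.e.\ $2^t$ evenly spaced qubits. In branch $z^{(0)}$ all active qubits are $0$, and in branch $z^{(1)}$ they are all $1$. All inactive qubits are $\ket{1}$ in both branches.

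\textbf{Round $t+1$.} For each active position $j$, let $j'=j+n/2^{t+1}$ be the newly activated position. We want a CNOT from $j$ to $j'$, preceded by an $X$ on $j'$, so that $j'$ copies the active value. The CNOT is implemented as a generalized Toffoli with controls on the contiguous interval $j,\dots,j'-1$ and target $j'$. This works because the intermediate qubits are inactive and hence $\ket{1}$, which makes them transparent. The intervals $[j,j']$ for different active $j$ are pairwise disjoint, since consecutive active positions are $n/2^t$ apart. Therefore all these gates fit in a single layer of geometrically local gates.

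Concretely, we first apply $X$ to $j'$, taking $\ket1$ to $\ket0$, and then apply the Toffoli, which gives $\ket{0\oplus a}$ where $a$ is the value at $j$. This copies $a$ into $j'$, and the invariant is preserved. After $m=\log n$ rounds every qubit is active, so we hold $\frac{1}{\sqrt2}(\ket{0^n}+\ket{1^n})=\ket{\Cat_n}$ with no ancilla. The depth is $\log n$ layers of multi-qubit gates, plus the single-qubit layers interleaved between them, which the paper does not count.

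\textbf{Main subtlety.} The delicate point is the transparency argument. A Toffoli with extra controls equals CNOT only when those controls are $\ket{1}$ \emph{in every branch of the superposition}. This is exactly why the invariant insists that inactive qubits are $\ket1$ in both branches, and why $X$ is applied to the new target only right before its copy. A second point is making the gate count per round exact when $n$ is not a power of two. There we would activate $\lceil n/2^{t+1}\rceil$-spaced positions and clip intervals at the boundary, obtaining depth $\lceil\log n\rceil$.
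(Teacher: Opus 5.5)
Your construction is correct and is the same as the paper's. The paper builds a restricted fan-out $F_k$ recursively: an $X$ on qubit $2^{k-1}$, then a Toffoli controlled on qubits $0,\dots,2^{k-1}-1$ that acts as a CNOT because the intermediate qubits are $\ket{1}$, then $F_{k-1}$ on both halves in parallel. Unrolling that recursion gives exactly your rounds of evenly spaced active qubits, as in the paper's 8-qubit figure.

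One small caveat concerns $n$ not a power of two. Your ``$\lceil n/2^{t+1}\rceil$-spaced'' variant does not keep the active sets nested: for $n=5$ it asks for $\{1,4\}$ and then $\{1,3,5\}$. What does work is running the $2^{\lceil\log n\rceil}$ construction and dropping every gate whose target exceeds $n$, in every round rather than only the last. The paper itself simply assumes $n=2^k$.
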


\begin{proof}[Proof of \cref{thm: 2DQAC0 generates cat state}]
Since generalized Toffoli gates are equivalent to \CZGate s up to single qubit unitaries,
in this construction we assume the availability of generalized Toffoli gates.
We first show by induction that
for any integer $k\ge 1$,
there exists a depth-$k$ $\DQAC$ circuit that performs a restricted \fanout\ gate $F_k$ on $2^k$ qubits:  for each $b\in\set{0,1}$,
\begin{equation*}
    F_k\ket{b, 1^{2^{k}-1}} = \ket{b^{2^k}}.
\end{equation*}
We will only use $F_k$ with the last $2^k-1$ input qubits fixed to the state $\ket{1^{2^k-1}}$.
Hence, for a general input state
$\ket{b, x}$ where $x\in\set{0,1}^{2^k-1}$ and $x\neq 1^{2^k-1}$,
the output state $F_k\ket{b, x}$ is not necessarily the fan-out result.
\footnote{In our construction below in particular,
$F_k$ will perform fan-out on the longest prefix of $x$ which is of the form $1^t0$ for some $t\ge 0$.}

For the base case $k=1$,
we can implement $F_k$ by an $X$ gate on the second qubit, rendering it the $\ket{0}$ state, and then applying a CNOT gate,
which is a generalized Toffoli gate acting on $2$ qubits.
This construction has depth $1$,
since there is only one layer of generalized Toffoli gates.
Now fix any $k\ge 2$.
We implement $F_k$ with a depth-$k$ $\DQAC$ circuit as follows:
Given input state
\begin{equation*}
\ket{b, 1^{2^k-1}}.
\end{equation*}
We first apply an $X$ gate to qubit $2^{k-1}$, followed by a long-range generalized Toffoli gate controlled on qubits $0,\cdots,2^{k-1}-1$ and targeted at qubit $2^{k-1}$.
Since the qubits at indices $1, 2, \dots, 2^{k-1}-1$ are all assumed to be in the state $\ket{1}$,
the generalized Toffoli gate is equivalent to a CNOT gate from qubit $0$ to qubit $2^{k-1}$.
The state is now transformed into
\begin{equation*}
    \ket{b, 1^{2^{k-1}-1}, b, 1^{2^{k-1}-1}}.
\end{equation*}
We can now recursively apply $F_{k-1}$ on qubits $0,\dots, 2^{k-1}-1$ and also $F_{k-1}$ on qubits $2^{k-1},\dots, 2^k-1$.
Since these two parts are disjoint,
they can be applied in parallel.
The depth of implementing $F_k$ is exactly $(k-1)+1 = k$.

%
%
%
%
Now we can generate $\ket{\Cat_n}$ in a $\DQAC$ circuit with $O(\log n)$ depth.
Without loss of generality,
we assume $n = 2^k$ for some $k\ge 1$ and index the qubits from $0$ to $2^k - 1$.
Using an $H$ gate on qubit $0$ and $X$ gates on the other qubits, we initialize qubit $0$ in the state $\ket{+}$ and all other qubits in the state $\ket{1}$.
The qubits are now initialized to $\ket{+,1^{2^{k}-1}}$.
Then we apply a $F_k$ using a depth-$k$ $\DQAC$ circuit, directly generating the $\ket{\Cat_n}$ state.
See \cref{fig:cat state} for a concrete example of a $\DQAC$ circuit on $8$ qubits generating the state $\ket{\Cat_8}$.

\begin{figure}[htp]  
    \centering
    
\begin{quantikz}[row sep=0.3cm, column sep=0.4cm]
\lstick{\ket{0}} & \gate{H} &\qw & \ctrl{1}  &           &\ctrl{1}    & &\ctrl{1}         & \qw & \rstick[8]{\ket{\Cat_8}} \\
\lstick{\ket{0}} & \gate{X} &\qw & \ctrl{1}  &           &\ctrl{1}    &\gate{X} &\targ{}  & \qw & \\
\lstick{\ket{0}} & \gate{X} &\qw  & \ctrl{1}  &\gate{X}   &\targ{}     & &\ctrl{1}         & \qw & \\
\lstick{\ket{0}} & \gate{X} &\qw & \ctrl{1}  &           &            &\gate{X} &\targ{}  & \qw & \\
\lstick{\ket{0}} & \gate{X} & \gate{X}     & \targ{}   &           &\ctrl{1}    & & \ctrl{1}        & \qw & \\
\lstick{\ket{0}} & \gate{X} &\qw & \qw       &           &\ctrl{1}    &\gate{X}& \targ{}  & \qw & \\
\lstick{\ket{0}} & \gate{X}&\qw  & \qw       & \gate{X}  &\targ{}     & &\ctrl{1}         & \qw & \\
\lstick{\ket{0}} & \gate{X} &\qw & \qw       &           &            &\gate{X} &\targ{}  & \qw & 
\end{quantikz}

    \caption{A \DQAC\ circuit for generating the $8$-qubit cat state}
    \label{fig:cat state}
\end{figure}
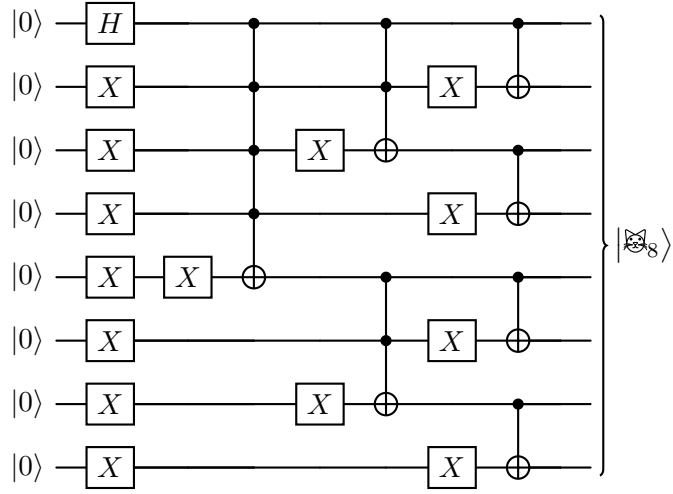

\end{proof}

\subsection{Local Approximation of \texorpdfstring{\DQAC}{1D-QAC}} \label{subsec: 1D-QAC0 1}

In this subsection, we present a local approximation circuit for a \DQAC\ circuit by erasing all gates in the circuit that are entangled with a large number of input qubits.
Then every remaining gate in a local circuit acts on a bounded number of input qubits.
By a light-cone argument, we obtain a lower bound for the \parity\ function.

\begin{theorem} \label{lemma: 1D-QAC0 subset lemma}
    Let $n,d \geq 1$ be integers and $0 < \ve < 1$.
    Let $C$ be a depth-$d$ \DQAC\ circuit with the set of input qubits indexed by $I$ where $|I|=n$,
    and the set of ancilla qubits indexed by $A$. 
    There exists a set $S \subseteq I$ such that 
    $|S^c| \geq n / \br{\log(n/\ve)}^d$
    and a function $f$ approximating $f_C$ such that
    \begin{itemize}
        \item $\norm{f - f_{C}}_2\leq 4\sqrt{2d\ve}$;
        \item For any $z \in \set{0,1}^S$,
        after restricting the input set $S$ to $z$,
        the function $f|_{S,z}$ depends on at most one index.
        I.e., there exists an index $i  \in S^c$ and a function $g : \set{0,1} \to [0,1]$ such that
        \begin{align*}
            f|_{S,z}(x) = g(x_i).
        \end{align*}
    \end{itemize}
\end{theorem}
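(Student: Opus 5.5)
We will prove the statement by induction on the depth, processing the circuit one \CZGate\ layer at a time. At each layer we fix some input qubits to constants (adding them to $S$) and erase some gates. Every gate that survives should touch at most one live (unrestricted) input qubit in its current light cone. The tool for erasing large gates is the standard observation used in \cite{NPVY24,ADOY24}. Write a \CZGate\ on support $T$ as $\id - 2\ketbra{1^{T}}$. If the state entering the gate puts weight at most $\ve$ on $\ket{1^T}$, then replacing the gate by $\id$ moves the state by at most $2\sqrt{\ve}$ in Euclidean norm. By \cref{prop:fuchs-vandegraaf} this changes the measurement probability $f_C(x)$ by $O(\sqrt{\ve})$. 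Summing over $d$ layers (after averaging over the inputs) gives the $4\sqrt{2d\ve}$ bound on $\norm{f-f_C}_2$. The approximating function $f$ is then the acceptance probability of the erased circuit.

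The key step, layer $k$. Suppose that after handling layers $1,\dots,k-1$, each live input qubit has a light cone that is an interval, and the light cones of distinct live inputs are pairwise disjoint. Consider a gate $G$ in layer $k$ whose support intersects the light cones of $m$ live inputs.

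If $m\le 1$, keep $G$.

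If $m\ge 2$, build a graph on live inputs with an edge between inputs whose light cones are consecutive inside the support of some large gate. Since supports are intervals on a line and each live input's cone meets at most two gate boundaries, this graph has maximum degree at most $2$ and is acyclic. By \cref{prop: degree 2 independent set} it has an independent set of size at least half. Then, inside a gate touching $m\ge \log(1/\ve)$ live inputs, we fix all but the chosen ones so that the gate can be erased. For a gate touching $m<\log(1/\ve)$ live cones we keep one live input and fix the rest. This costs at most a factor $\log(n/\ve)$ per layer, giving $|S^c|\ge n/(\log(n/\ve))^d$ after $d$ layers.

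For gates with $m\ge\log(1/\ve)$ live inputs we want erasure. We would argue that the amplitude on $\ket{1^T}$ factorizes across the disjoint light cones. After previous layers the state is a product over the disjoint intervals, each depending on one live input. Choosing, within each cone, the single-input value (or leaving it random and averaging over $x$) gives $\Pr[1^T]\le\prod_j p_j$. One must ensure each factor is bounded by something like $1/2$. I expect this to be the main obstacle: a cone's state might be $\ket{1}$ with probability near $1$ for both values of its input. The fix I would try first is this: such a cone's contribution is essentially constant, so the gate acts on it as a control that is almost fixed. We can merge it into a restricted (non-live) region at a small error cost, or simply treat it as not ``touching'' a live input. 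The threshold then counts only cones with nontrivial $1$-probability. The averaging over random inputs and the resulting $\ve$ bookkeeping yields the $\sqrt{2d\ve}$ form.

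Finally, after $d$ layers, every surviving gate's support meets at most one live light cone. The output qubit (the first qubit) lies in the light cone of at most one live input $i$, or of none. So for each fixing $z$ of $S$, the restricted function $f|_{S,z}$ depends only on $x_i$, which gives the function $g$ in the statement.
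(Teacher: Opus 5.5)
Your overall architecture is the paper's, namely the induction behind \cref{lemma: 1D-QAC0 separable lemma}. Layer by layer you erase gates that touch many live light cones, thin the live set so every surviving gate touches at most one live cone, and conclude from the light cone of the output qubit. The gap is the step you flag yourself, the erasure bound, and the argument you sketch for it fails. It rests on the claim that after earlier layers the state is a product over the disjoint live cones, so that $\Pr[1^T]\le\prod_j p_j$. That claim is false. The ancilla may start in an arbitrary entangled state, e.g.\ an EPR pair split between two future cones. Even from a product start, a kept gate touching no live cone, say on $[a,b]$ with live inputs $i<a<b<j$, entangles $a$ and $b$. Layer-two gates on $[i,a]$ and $[b,j]$ then place $a$ and $b$ into the disjoint cones of $i$ and $j$. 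For correlated cones the all-ones probability can be as large as a single marginal, so no product bound is available. Your repair is also unjustified: declaring nearly deterministic cones ``not touching'', or moving them to the restricted set ``at small error cost'', ignores that the gate still acts on those qubits. Either separability breaks or you incur an error you have not bounded.

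The missing idea is \cref{cor: separable leads to small sepectral}. Fix all qubits outside a set $X$ of inputs to any (possibly entangled) state $\ket{\phi}$. Reorder the circuit as $G_XG_A$ with $G_X$ supported on the light cone $S_X$, and Schmidt-decompose $G_A\ket{\phi}$ across $S_X\setminus X$ versus $S_X^c$. Since the vectors $G_X\ket{x,u_i}$ are orthonormal, $\sum_{x\in\{0,1\}^X}\Tr_{S_X^c}[\rho^{x}]\preceq\id$.

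Take $X$ to be the $k\ge m-2$ live inputs whose cones lie entirely inside $T$ (in 1D only the two boundary cones can be partial). This gives $\mathbb{E}_{x_X}\langle 1^R|\Tr_{R^c}[\rho^x]|1^R\rangle\le 2^{-k}$ with $R=\bigcup_{i\in X}S_i\subseteq T$, for every fixing of the other inputs; this is \cref{lemma: Erasure lemma}. No product structure or per-cone bias condition is needed. Your near-constant cone cannot occur for a cone fully inside $T$: the all-ones probabilities on that cone for its two input values sum to at most $1$. It can only happen for the two boundary cones, which are simply not counted.

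This also shows the inputs inside a large gate must stay random and be averaged over, not fixed. Fixing them discards exactly the randomness that makes erasure cheap, and values chosen depending on $z$ would not give a single $f$ with a uniform $L_2$ bound.

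Two smaller repairs are needed. First, up to $n$ gates per layer may be erased, so the threshold must absorb a union bound and be of order $\log(n/\ve)$, not $\log(1/\ve)$. Your square root is the correct dependence, since $\|\ketbra{1^T}\rho\|_1=\sqrt{\Tr[\ketbra{1^T}\rho]}$ for pure $\rho$; the paper's \cref{lemma: Erasure lemma} as written drops it. A corrected count needs a threshold of about $2\log(n/\ve)$, which only changes constants. Second, in the thinning step, cones inside an erased gate need no thinning at all. For kept gates, ``keep one live input'' must be coordinated as in \cref{lemma: 1D-QAC0 structure lemma}. Keep a cone fully contained in the gate. Gates containing no full cone meet at most two cones; handle them via an independent set in a max-degree-$2$ path graph. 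Otherwise a kept cone straddling into a neighbouring gate can collide with that gate's choice.
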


\begin{remark}
We have no restrictions on the size of the ancilla nor its initial state. In other words, the results hold even for the \DQAC\ circuits with an arbitrarily large number of ancilla prepared in arbitrary states.
\end{remark}

We now introduce the necessary definitions and results to prove \cref{lemma: 1D-QAC0 subset lemma}.

\begin{definition}[Light-Cone] \label{def: lightcone}
Let $d \geq 1$ be an integer.
Consider a depth-$d$ circuit $C$ with input qubits indexed by $I$ and ancilla qubits indexed by $A$. 
For a qubit $i$ where $i\in I\cup A$, its forward light-cone $S_i \subseteq I \cup A$ is defined as the set of all indices $j$ such that there exists a path $(i_0, i_1, \cdots, i_d)$ where $i_0 = i$, $i_d = j$, and for each layer $t$, the index pair $(i_{t-1}, i_t)$ is in the support of the same gate in $C$.
For a qubit $i$ where $i \in I\cup A$, its backward light-cone $T_i \subseteq I \cup A$ is defined as the set of all indices $j$ such that $i \in S_j$.
For a set of qubit indices $K$, we denote its forward light-cone as $\cup_{k \in K} S_k$.
See \cref{fig:light-cone} for an illustration.
\end{definition}

\begin{figure}
    \centering
    \begin{quantikz}[row sep=0.3cm, column sep=0.4cm]
        \lstick{\ket{x_1}} & \gate[style={fill=green!20, draw=green!50!black}]{H} &\qw & \ctrl[style={fill=green!20, draw=green!50!black}]{1} & &\ctrl[style={fill=green!20, draw=green!50!black}]{1} & \qw & \rstick[3]{\text{Forward light-cone}}\\
        \lstick{\ket{x_2}} & \ctrl{1} &\qw & \targ[style={fill=green!20, draw=green!50!black}]{}  & &\ctrl[style={fill=green!20, draw=green!50!black}]{1} &\qw & \\
        \lstick{\ket{x_3}} & \targ{} &\qw  & \ctrl{1}  &\gate{X} &\targ[style={fill=green!20, draw=green!50!black}]{} & \qw & \\
        \lstick{\ket{x_4}} & \gate{X} &\qw & \targ{} & & \gate{X} & \qw & \\
    \end{quantikz}
    
    \begin{quantikz}[row sep=0.3cm, column sep=0.4cm]
         & \gate{H} &\qw & \ctrl{1} & &\ctrl{1} & \qw & \\
        \lstick[3]{\text{Backward light-cone}} & \ctrl[style={fill=blue!20, draw=blue}]{1} &\qw & \targ{}  & &\ctrl{1} &\qw & \\
         & \targ[style={fill=blue!20, draw=blue}]{} &\qw  & \ctrl[style={fill=blue!20, draw=blue}]{1}  &\gate{X} &\targ{} & \qw & \\
         & \gate[style={fill=blue!20, draw=blue}]{X} &\qw & \targ[style={fill=blue!20, draw=blue}]{} & & \gate[style={fill=blue!20, draw=blue}]{X} & \qw & \\
    \end{quantikz}
    \caption{Forward light-cone of the first qubit, and backward light-cone of the last qubit}
    \label{fig:light-cone}
\end{figure}
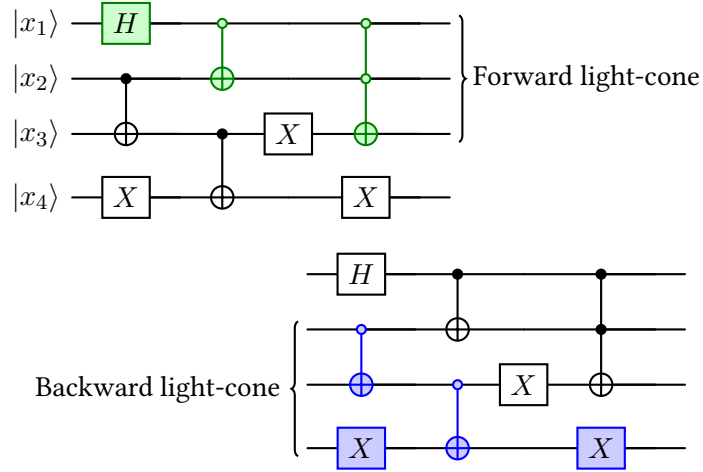

\begin{definition}[$I$-separable] \label{def: input-separable}
    Let $C$ be a circuit with input qubits indexed by $\tilde{I}$ and ancilla qubits indexed by $A$. 
    For a subset of input qubits $I \subseteq \tilde{I}$, we say that $C$ is $I$-separable if for any $i \neq j \in I$,  the forward light-cones $S_i$ and $S_j$ of input qubits $i$ and $j$ are disjoint.
\end{definition}

The $I$-separability actually states that there exists structural isolation in the final state: When we measure $t$ qubits on the final state, the result  relates to at most $t$ input qubits from $I$.
This is because the forward light-cones are disjoint, and each output qubit can stay only in one forward light-cone $S_i$ of an input qubit  $i\in I$.
The following lemma demonstrates that a \DQAC\ circuit 
can be approximated by a separable circuit.

\begin{lemma} \label{lemma: 1D-QAC0 separable lemma}
    Let $n,d \geq 1$ be integers and $0 < \ve < 1$.
    Let $C$ be a depth-$d$ \DQAC\ circuit with input qubits indexed by $I$ where $|I|=n$,
    and ancilla qubits indexed by $A$. 
    Recall that for a quantum circuit $C$, we use $\rho^x_C$ to denote the output state with input $x$.
     There exists a set $S \subseteq I$ such that 
     $\abs{S} \geq n / \br{\log(n/\ve)}^d$,
     and
     an $S$-separable \DQAC\ circuit $\tilde{C}$ satisfying
      \begin{align*}
        \expec{x}{\norm{\rho_{\tilde{C}}^x - \rho_{C}^x}_1} \leq 16 d\ve.
      \end{align*}
\end{lemma}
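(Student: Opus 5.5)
We prove the lemma by induction on the layers of $C$, maintaining at each stage $t\in\{0,\dots,d\}$ a set of ``alive'' input qubits $S_t\subseteq I$ and a modified circuit $C_t$ that agrees with $C$ except that some large \CZGate s in the first $t$ layers have been erased. The invariant is that the forward light-cones, truncated after layer $t$, of distinct qubits in $S_t$ are disjoint, and moreover each is a contiguous interval of the line. We also require $|S_t|\ge |S_{t-1}|/\log(n/\ve)$ and that the erasures made at stage $t$ cost at most $16\ve$ in expected trace distance. We start with $S_0=I$, whose light-cones are single points. The final circuit $\tilde C=C_d$ is then $S$-separable with $S=S_d$, $|S|\ge n/(\log(n/\ve))^d$, and the total error is at most $16d\ve$ by the triangle inequality. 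Erasing a gate in an early layer propagates through later layers only unitarily, so it does not increase the trace distance.

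For the inductive step at layer $t$, the $|S_{t-1}|$ light-cone intervals are disjoint and ordered along the line. A \CZGate\ in layer $t$ acts on a contiguous interval, so it intersects a contiguous run of these light-cone intervals. We classify the gates of layer $t$ by how many alive light-cones they touch.
\begin{itemize}
\item A gate touching at most one alive light-cone is kept. It merely enlarges that interval, and disjoint gates keep the intervals disjoint.
\item A gate touching $m\ge 2$ alive light-cones must be handled.
\end{itemize}
Because the gates are disjoint intervals, there are at most $|S_{t-1}|$ light-cone incidences in total. Two options are available for a gate of the second kind. The first is to \emph{keep} it and kill all but one of the touched input qubits. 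This amounts to removing them from $S$, and their values will later be fixed by a restriction, so no error is incurred. The second is to \emph{erase} it, incurring an error.

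For erasure we use the standard fact from \cite{NPVY24,ADOY24}: a \CZGate\ on $m$ qubits differs from the identity only on $\ket{1^m}$. Averaged over uniformly random inputs, the touched light-cones carry roughly independent randomness, so the weight of the output state on $\ket{1^m}$ in that gate's support should decay like $2^{-\Omega(m)}$. The erasure error for such a gate is then about $2\sqrt{\Pr[1^m]}$.

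This suggests a threshold $k=\log(n/\ve)$. Gates touching more than $k$ alive cones are erased, and the union bound over at most $n$ such gates gives a total error of $O(n\cdot 2^{-k/2})$, which we tune to at most $16\ve$. Gates touching between $2$ and $k$ cones keep only one alive qubit. For these, and for the bookkeeping as a whole, we select a subfamily with a $1/k$ fraction of survivors, keeping one representative per gate and using \cref{prop: degree 2 independent set} on the path-like adjacency graph of intervals to avoid newly merged neighbours. This yields $|S_t|\ge|S_{t-1}|/k$.

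The main obstacle is justifying the erasure bound. Measuring $\ket{1^m}$ on qubits lying in $m$ distinct light-cones of alive inputs does not automatically give probability $2^{-m}$. The ancillas are arbitrary, and the cones share ancilla regions, so correlations and entanglement through ancillas are possible. We would try first to average over $x_{S_{t-1}}$ with the killed inputs fixed. Each alive input enters only its own cone, which suggests a hybrid or product-like argument in which flipping one input acts on one factor. If this fails, we fall back to the Pauli/$2$-norm gate-erasure bound of \cite{NPVY24}: there the error of erasing a gate touching many random inputs is bounded via the Fourier weight, at the cost of the constant, and this is where the factor $16$ would come from.
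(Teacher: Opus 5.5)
Your combinatorial skeleton matches the paper's induction together with \cref{lemma: 1D-QAC0 structure lemma}: disjoint interval light-cones, a threshold on how many alive cones a gate meets, one survivor per kept gate plus an independent set in a path-like graph, and a triangle inequality over layers. The gap is the erasure bound. You flag it as the main obstacle and leave it open, yet it is the heart of the lemma, and two ingredients are missing.

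First, only cones lying \emph{entirely} inside the gate's support $T$ can be charged. The part of a partially covered cone that the gate sees may consist of qubits that are deterministically $\ket{1}$, and \cref{ex: large gate is nb} shows that partial coverage really can make erasure costly. In one dimension this is harmless: the cones meeting $T$ form a contiguous run of disjoint intervals, so all but the two extreme ones lie inside $T$.

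Second, let $X$ be the set of fully covered alive inputs (other inputs fixed) and $R\subseteq T$ the union of their forward light-cones. You need the operator inequality $\sum_{x_X}\Tr_{R^c}[\rho^{x}]\preceq\id$ of \cref{cor: separable leads to small sepectral}. The paper obtains it by moving every gate outside the light-cone of $X$ to the front. Those gates act only on $X^c$ and do not depend on $x_X$, so the reduced state on $R$ is $G_X(\ketbra{x_X}\otimes\sigma)G_X^\dagger$ for a fixed state $\sigma$ on $R\setminus X$. Summing over $x_X$ gives $G_X(\id\otimes\sigma)G_X^\dagger\preceq\id$. Hence $\mathbb{E}_{x_X}\bra{1^R}\Tr_{R^c}[\rho^x]\ket{1^R}\le 2^{-|X|}$ with no independence between cones at all. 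This is exactly what disposes of your worry about correlations through shared ancillas.

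Your remark that flipping $x_i$ acts by the unitary $CX_iC^\dagger$ supported on $S_i$ can be completed into a proof of the same inequality, but the proposal stops at ``we would try''. The fallback to \cite{NPVY24} does not rescue this. That erasure estimate is in a normalized $2$-norm, effectively treating all qubits as maximally mixed, and its guarantees degrade with the number of ancillas. This lemma allows unboundedly many ancillas in arbitrary states.

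There is also a quantitative inconsistency. A per-gate error of order $\sqrt{\Pr[1^T]}$ is the correct order: for a pure state $\rho$ and $\Pi=\ketbra{1^T}\otimes\id$ one has $\norm{\Pi\rho}_1=\sqrt{\Tr[\Pi\rho]}$. But with threshold $k=\log(n/\ve)$ and up to $n$ erased gates, the errors sum to order $n\cdot 2^{-k/2}=\sqrt{n\ve}$, not $16\ve$. You need a threshold of about $2\log(n/\ve)+O(1)$, which gives $\abs{S}\ge n/(2\log(n/\ve)+O(1))^d$. The paper's \cref{lemma: Erasure lemma} states $4\cdot 2^{-k}$ by bounding $\norm{\Pi\rho}_1$ by a probability rather than its square root, which drops exactly this square root. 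So neither argument actually delivers the constants as stated.
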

\begin{remark}
\cref{lemma: 1D-QAC0 separable lemma} naturally implies some approximation results about distribution sampling and unitary construction within \DQAC\ circuits. As these details are tangential to the main discussion, we place them in \cref{app: 1D-QAC0 unitary & distribution}.
\end{remark}


\begin{proof}[Proof of \cref{lemma: 1D-QAC0 subset lemma}]
    We apply \cref{lemma: 1D-QAC0 separable lemma} to get an $S$-separable circuit $\tilde{C}$.
    Suppose the final measurement to obtain the function output is $\Pi$. Then
    \begin{align*} 
        \norm{f_{\tilde{C}}- f_{C}}_2^2 
        &= \expec{x \sim \set{0,1}^n}{\Tr\Br{\Pi \rho_C^x - \Pi  \rho_{\tilde{C}}^x }^2} \\
        &\leq \expec{x \sim \set{0,1}^n}{\norm{\rho_C^x - \rho_{\tilde{C}}^x }_1^2} \\
        &\leq 2\expec{x \sim \set{0,1}^n}{\norm{\rho_C^x - \rho_{\tilde{C}}^x }_1} \\
        &\le 32d\ve.
    \end{align*}
    The first inequality follows since $\norm{\Pi}\le1$.
    The second inequality follows since $\rho_C^x$ and $\rho_{\tilde{C}}^x$
    are both quantum states,
    hence $\norm{\rho_C^x - \rho_{\tilde{C}}^x}_1 \le 2$.
    
    Then fix the inputs in $I\backslash S$ to be any string $z\in\set{0,1}^{I\backslash S}$.
    The $S$-separable property gives that
    the output of $f_{\tilde{C}}|_{I\backslash S,z}$ is related to at most $1$ input qubit in 
    $S$.
    Hence $f_{\tilde{C}}$ fulfills the requirement of \cref{lemma: 1D-QAC0 subset lemma}.
\end{proof}
The rest of this subsection is devoted to proving \cref{lemma: 1D-QAC0 separable lemma}.

\begin{lemma} \label{lemma: 1D-QAC0 structure lemma}
    Let $C$ be an $I$-separable \DQAC\ circuit where $I$ is a subset of input qubits.
    Let $L$ be a one-layer \DQAC\ circuit. 
    Let $s \geq 3$ be an integer.
    If every \CZGate\ in $L$ intersects with at most $s$ forward light-cones from qubits in $I$ of the circuit $C$,
    then there exists a subset $S \subseteq I$ such that the composed circuit $D=L\cdot C$ is $S$-separable and $|S| \geq |I| / s$.
\end{lemma}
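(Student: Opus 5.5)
We would prove \cref{lemma: 1D-QAC0 structure lemma} by building a conflict graph on $I$ and then choosing a large independent set in it. The forward light-cones in $D = L\cdot C$ are those of $C$, enlarged by one layer. For $i\in I$, let $S_i$ be its forward light-cone in $C$. Its forward light-cone in $D$ is then $S_i' = S_i \cup \bigcup\{\operatorname{supp}(G) : G\in L,\ \operatorname{supp}(G)\cap S_i\neq\emptyset\}$; this is the standard one-layer extension, where single-qubit gates contribute nothing new. Since $C$ is $I$-separable, the sets $S_i$ are pairwise disjoint. Two extended cones $S_i'$ and $S_j'$ with $i\neq j$ can therefore intersect only if some \CZGate\ $G$ in $L$ meets both $S_i$ and $S_j$: a qubit lies in at most one $S_i$ and in at most one gate support, because the gates of a layer act on disjoint qubit sets. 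So define a graph $H$ on vertex set $I$, with an edge between $i$ and $j$ whenever some gate of $L$ meets both $S_i$ and $S_j$. Any independent set $S$ of $H$ makes $D$ an $S$-separable circuit.

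It remains to find an independent set of size at least $|I|/s$. By hypothesis, each gate $G$ meets at most $s$ of the cones; call this set of cones $N(G)$. The edges of $H$ are exactly the cliques on the sets $N(G)$. Taking a generic independent set would only give $|I|/(\Delta+1)$, which is too weak, so we exploit one-dimensional geometry. Each gate's support is a contiguous interval, so we can order the gates of $L$ left to right as $G_1, G_2, \dots$. The hardest step is to show that $H$ is nearly ``interval-like''. The light-cones $S_i$ need not be intervals, since a forward light-cone in a 1D circuit may be fragmented, although it is a union of intervals. So a single cone may be hit by many gates, and degrees in $H$ are unbounded.

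We would first try the following selection. For each gate $G$ with $|N(G)|\ge 2$, keep at most one element of $N(G)$, chosen so that every discarded cone is charged to a gate. Cleanest is a greedy or probabilistic choice. Assign each $i\in I$ a uniformly random label in $\{1,\dots,s\}$ and keep the cones with label $1$. This fails because a cone meeting many gates is likely to conflict with another kept cone. Instead we would use a fractional or charging argument. Consider the hypergraph whose hyperedges are the sets $N(G)$, each of size at most $s$. We want a set $S$ meeting each hyperedge in at most one vertex (a ``strong independent set'') with $|S|\ge |I|/s$. Using the linear order of gates, process $I$ sorted by the leftmost position of each cone. Repeatedly pick the cone whose rightmost gate-contact is furthest left, then delete all cones sharing a gate with it. Using the disjointness of the cones together with the contiguity of the gates, we would argue that each pick deletes at most $s-1$ other cones, namely only those sharing its first contacting gate. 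If this charging works, the greedy yields $|S|\ge |I|/s$. The hypothesis $s\ge3$ presumably absorbs boundary or degenerate cases, for instance by using \cref{prop: degree 2 independent set} when the graph turns out to have maximum degree $2$.

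Verifying this charging claim is the main obstacle, because cones can interleave in 1D. If it fails, the fallback would be to strengthen the induction so that the cones in $S$ are kept as intervals. Then each cone touches at most two gates partially plus gates fully inside it, and the conflict graph has the structure of a path-like degree-bounded graph.
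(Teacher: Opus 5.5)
Your reduction is sound as far as it goes. The $S_i$ are disjoint and the supports of gates in one layer are disjoint, so two extended cones can meet only if a single gate of $L$ hits both original cones. It therefore suffices to find $S\subseteq I$ meeting every $N(G)$ in at most one cone.

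\textbf{The gap.} The gap is the step you yourself call the main obstacle: the charging claim that each greedy pick removes at most $s-1$ other cones. It is left unproved, and the geometric picture you adopt would actually make it false. You assert that a forward light-cone in a one-dimensional circuit may be fragmented. In fact it is always a contiguous interval. Write $S_i^{(t)}$ for the cone after $t$ layers. Then $S_i^{(t)}=S_i^{(t-1)}\cup\bigcup\{\operatorname{supp}(G): G\in M_t,\ \operatorname{supp}(G)\cap S_i^{(t-1)}\neq\emptyset\}$, and the union of an interval with intervals that each meet it is again an interval.

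This contiguity carries the whole content of the lemma. In your abstract hypergraph setting, with possibly non-contiguous disjoint cones, the statement fails. There, $m$ cones could pairwise share their own two-qubit gate (weight $2\le s$), making $H$ complete, so no admissible $S$ has more than one element.

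Your fallback (``strengthen the induction so that cones stay intervals'') is therefore unnecessary, since contiguity holds automatically. As phrased, it also does not finish the job. A cone can conflict, through the gates at its two endpoints, with up to $2(s-1)$ other cones. So a degree bound on a path-like graph alone gives only about $|I|/(2s-1)$.

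\textbf{How the gap closes, and comparison with the paper.} With contiguity in hand, your greedy does work, and it is a genuinely different route from the paper's. Order the cones left to right. Each gate then meets a run of at most $s$ consecutive cones. A gate that meets a cone without containing it must contain one of that cone's endpoints.

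Repeatedly keep the leftmost surviving cone and delete every cone sharing a gate with it. For the kept cone:
\begin{itemize}
\item cones to its left that share a gate with it are already gone, since otherwise the earlier kept cone would have deleted it;
\item gates strictly inside it touch no other cone;
\item so only the gate through its \emph{right} endpoint (not its first contacting gate) deletes new cones, at most $s-1$ of them.
\end{itemize}
Untouched cones are never deleted, so $|S|\ge |I|/s$.

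The paper instead splits $I$ into three classes:
\begin{itemize}
\item $A_1$: cones meeting a ``good'' gate, i.e.\ one that fully contains some cone. It keeps one fully contained cone per good gate, giving $|A_1|/s$.
\item $A_2$: cones meeting only non-good gates. Each such gate must straddle a cone boundary and so meets at most two cones. The conflict graph then has maximum degree $2$, and half of $A_2$ can be kept.
\item $A_3$: untouched cones, all kept.
\end{itemize}
Both arguments rest on the same interval fact. Your sweep reaches the bound without case analysis. The paper's split makes explicit the distinction between fully covered and boundary-straddling cones, which also drives its erasure step: there, all but two of the cones hit by a wide gate are fully covered.
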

\begin{proof}[Proof of \cref{lemma: 1D-QAC0 structure lemma}]

Let $S_i$ denote the forward light-cone of qubit $i\in I$ in the circuit $C$.
We partition the input qubits in $I$ into three disjoint sets $A_1$, $A_2$, and $A_3$ based on the gates in $L$.

Consider a gate \CZGate\ $\CZGr$ in $L$ with support set $S_{\CZGr}$. A $\CZGr$ gate in $L$ is a good gate if there exists at least one qubit in $I$ whose forward light-cone is a subset of $S_{\CZGr}$. 
We partition the input qubits as follows:
\begin{enumerate}[(1)]
    \item For a qubit $i \in I$, if  $S_i$ intersects with any good gate, we assign $i$ to $A_1$.
    \item If $S_i$ does not intersect with any gate in $L$, we assign $i$ to $A_3$.
    \item Finally, we set $A_2 = I \setminus (A_1 \cup A_3)$.
\end{enumerate}

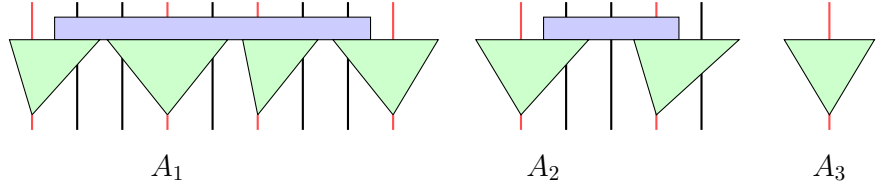
\begin{figure}[htp]
    \centering
\begin{tikzpicture}[
    funnel/.style={fill=green!20 }, 
    gatebox/.style={fill=blue!20, inner sep=0pt},
    redline/.style={draw=red!70,  thick}, 
    blackline/.style={draw=black, thick},
    labeltext/.style={font=\Large\bfseries, align=center}
]

    \begin{scope}[xshift=0cm]
        

        \draw[redline]   (0.3, 2.0) -- (0.3, 1.5); 
        \draw[blackline] (0.9, 2.0) -- (0.9, 0.3);
        \draw[blackline] (1.5, 2.0) -- (1.5, 0.3);
        \draw[redline]   (2.1, 2.0) -- (2.1, 1.8);
        \draw[blackline]   (2.7, 2.0) -- (2.7, 0.3);
        \draw[redline] (3.3, 2.0) -- (3.3, 1.8);
        \draw[blackline] (3.9, 2.0) -- (3.9, 0.3);
        \draw[blackline]   (4.5, 2.0) -- (4.5, 0.3);
        \draw[redline]   (5.1, 2.0) -- (5.1, 1.5); 

        \draw[gatebox] (0.6, 1.5) rectangle (4.8, 1.8);
        
        
        \filldraw[funnel] (0, 1.5) -- (1.2, 1.5) -- (0.3, 0.5) -- cycle;
        \draw[redline] (0.3, 0.5) -- (0.3, 0.3);
        
        \filldraw[funnel] (1.3, 1.5) -- (2.9, 1.5) -- (2.1, 0.5) -- cycle;
        \draw[redline] (2.1, 0.5) -- (2.1, 0.3);
        
        \filldraw[funnel] (3.1, 1.5) -- (4.1, 1.5) -- (3.3, 0.5) -- cycle;
        \draw[redline] (3.3, 0.5) -- (3.3, 0.3);
        
        \filldraw[funnel] (4.3, 1.5) -- (5.7, 1.5) -- (5.1, 0.5) -- cycle;
        \draw[redline] (5.1, 0.5) -- (5.1, 0.3);
        
        \node[labeltext, scale = 0.8] at (2.1, -0.2) {$A_1$};
    \end{scope}

    \begin{scope}[xshift=6.2cm]
        
        \draw[redline]   (0.6, 2.0) -- (0.6, 0.3); 
        \draw[blackline] (1.2, 2.0) -- (1.2, 0.3);
        \draw[blackline] (1.8, 2.0) -- (1.8, 0.3);
        \draw[redline] (2.4, 2.0) -- (2.4, 0.3);
        \draw[blackline] (3.0, 2.0) -- (3.0, 0.3);
        
        \draw[gatebox] (0.9, 1.5) rectangle (2.7, 1.8);
        
        \filldraw[funnel] (0, 1.5) -- (1.5, 1.5) -- (0.6, 0.5) -- cycle;
        
        \filldraw[funnel] (2.1, 1.5) -- (3.5, 1.5) -- (2.4, 0.5) -- cycle;
        
        \node[labeltext, scale = 0.8] at (0.9, -0.2) {$A_2$};
    \end{scope}

    \begin{scope}[xshift=10.3cm]
        
        \draw[redline]   (0.6, 2.0) -- (0.6, 0.3); 
        \filldraw[funnel] (0, 1.5) -- (1.2, 1.5) -- (0.6, 0.5) -- cycle;
        
        \node[labeltext, scale = 0.8] at (0.6, -0.2) {$A_3$};
    \end{scope}

\end{tikzpicture}
    \caption{Illustrations of different kinds of input qubits $i\in I$ where the green areas indicate forward light-cones, the blue areas indicate gates in $L$, and the red lines represent input qubits.}
    \label{fig: A1,A2,A3 gate}
\end{figure}

After partitioning the input qubits into groups,
we choose one input qubit for each \CZGate\ in $L$, to form the set $S\subseteq I$.
We use distinct selection strategies for each group.
For each good gate $\CZGr$, we arbitrarily select one qubit whose forward light-cone is completely contained in $S_{\CZGr}$ and discard the remaining qubits associated with that gate. 
Since the light-cone of any input qubit is completely contained within the targets of at most one gate, the selection of distinct good gates does not lead to conflicts. 
Consequently, we retain at least $|A_1| / s$ qubits from $A_1$.

Next, define an acyclic graph $G = (V,E)$ where the vertex set $V$ corresponds to the qubits in $A_2$. An edge $(i,j) \in E$ exists if and only if $S_i$ and $S_j$ intersect with the same gate. 
Note that in a \DQAC\ circuit, the forward light-cone must be an interval. 
For any gate that does not fully cover a forward light-cone, it intersects with at most 2 forward light-cones belonging to the set $I$. 
(The light-cones in $I$ are mutually disjoint, and the gate allows for at most one intersection at the left boundary and one at the right.)
Thus, with \cref{prop: degree 2 independent set}$, \Delta(G) \leq 2$ which allows us to choose an independent set in $G\subseteq A_2$ with a size of at least $|A_2|/2$.

\begin{figure}[htp]
    \centering
\begin{tikzpicture}[
    funnel/.style={fill=green!20 }, 
    gatebox/.style={fill=blue!20, inner sep=0pt},
]

    \begin{scope}
        \draw[gatebox] (0.4, 1.5) rectangle (1.8, 1.8);
        \draw[gatebox] (2.2, 1.5) rectangle (3.6, 1.8);
        \draw[gatebox] (4.2, 1.5) rectangle (5.6, 1.8);
        
        \filldraw[funnel] (0, 1.5) -- (0.8, 1.5) -- (0.4, 0.5) -- cycle; 

        \filldraw[funnel] (1.2, 1.5) -- (2.8, 1.5) -- (1.6, 0.5) -- cycle;
        
        \filldraw[funnel] (3.4, 1.5) -- (4.6, 1.5) -- (3.6, 0.5) -- cycle;
        
        \filldraw[funnel] (4.8, 1.5) -- (5.8, 1.5) -- (5.8, 0.5) -- cycle;
    \end{scope}

     \node at (6.5, 1) {\large $\Rightarrow$};
        
    \begin{scope}[xshift = 7cm]
        \filldraw[funnel] (0, 1.8) -- (1.8, 1.8) -- (0.4, 0.5) -- cycle;
        \filldraw[funnel] (2.2, 1.8) -- (5.6, 1.8) -- (3.6, 0.5) -- cycle;
        
        \draw[gatebox, opacity=0.5] (0.4, 1.5) rectangle (1.8, 1.8);
        \draw[gatebox, opacity=0.5] (2.2, 1.5) rectangle (3.6, 1.8);
        \draw[gatebox, opacity=0.5] (4.2, 1.5) rectangle (5.6, 1.8);
        
        \filldraw[funnel, opacity=0.5] (0, 1.5) -- (0.8, 1.5) -- (0.4, 0.5) -- cycle; 
        \filldraw[funnel, opacity=0.5] (3.4, 1.5) -- (4.6, 1.5) -- (3.6, 0.5) -- cycle;
    \end{scope}

\end{tikzpicture}
    \caption{By selecting the first and third inputs in $A_2$, we obtain light-cones that remain mutually disjoint after the application of $L$.}
    \label{fig: A2 gate chain}
\end{figure}
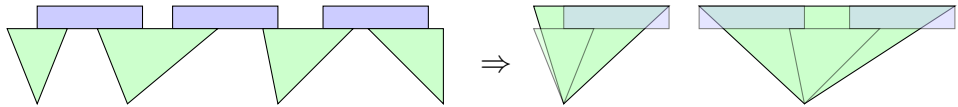

We keep all qubits in $A_3$. In total, we retain $\abs{S} \ge \abs{A_1}/s + \abs{A_2} / 2 + \abs{A_3} \geq \abs{I}/s$ qubits.
Furthermore, the forward light-cones of these selected input qubits in the circuit $D = L\cdot C$ are mutually disjoint.
\end{proof}

\begin{lemma} \label{cor: separable leads to small sepectral}
  Let $C$ be a $\QAC$ circuit acting on $n$ qubits.
  Let $X$ be a subset of these $n$ qubits,
  and $S_X$ be the corresponding forward light-cone of $X$.
  Suppose the qubits in $X^c$ are fixed to some quantum state $\ket{\phi}$.
  For any input $x\in\set{0,1}^X$,
  set $\rho^{(x)}$ to be
  \begin{equation*}
      \rho^{(x)} = \Tr_{S_X^c}\Br{C\ketbra{x, \phi}C^\dagger}.
  \end{equation*}
  We have
  \begin{align*}
      \sum_{x\in\set{0,1}^X}\rho^{(x)} \preceq \id.
  \end{align*}
  
\end{lemma}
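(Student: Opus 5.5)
The plan is to exploit the light-cone structure: qubits outside the forward light-cone $S_X$ are affected only by gates whose inputs do not depend on $x$. Then $\rho^{(x)}$ is obtained by applying a fixed isometry to $\ket{x}$, tracing out a fixed environment, and compressing. Concretely, I will decompose $C$ as $C = C_{\text{out}} \cdot C_{\text{in}}$. Here $C_{\text{in}}$ consists of all gates touching the light-cone: a gate at layer $t$ is placed in $C_{\text{in}}$ if its support meets the set of qubits reachable from $X$ by layer $t-1$. The remaining gates act only on qubits not yet reached. The key structural claim I would establish first is the following. If $W$ denotes the qubits outside $S_X$ at the end, then the reduced state on $S_X$ can be written as
\[
\rho^{(x)} = \Tr_{E}\Br{ V \ketbra{x} V^\dagger },
\]
where $V$ is an isometry from the $X$ register into $S_X \otimes E$. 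This isometry is built from $C$ and the fixed state $\ket{\phi}$, and it does not depend on $x$. The justification is as follows. We can absorb $\ket{\phi}$ into the isometry, $\ket{x}\mapsto C(\ket{x}\otimes\ket{\phi})$, which is clearly an isometry from $\mathbb{C}^{2^{|X|}}$ into the full space. Tracing out $S_X^c$ then gives exactly $\rho^{(x)}$. So in fact the light-cone structure is not even needed for the inequality; the isometry $V = C(\,\cdot\,\otimes\ket{\phi})$ with $E = S_X^c$ already suffices.

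With this representation, define the channel $\Phi(\sigma) = \Tr_{S_X^c}[V\sigma V^\dagger]$ from $X$ to $S_X$. Then
\[
\sum_{x}\rho^{(x)} = \Phi\Big(\sum_x \ketbra{x}\Big) = \Phi(\id_X).
\]
It remains to show $\Phi(\id_X)\preceq \id_{S_X}$. Since $VV^\dagger$ is a projection, $V\id_X V^\dagger = VV^\dagger \preceq \id_{S_X}\otimes\id_{S_X^c}$. I could try to pass this through the partial trace, but that yields only $\Phi(\id_X)\preceq 2^{|S_X^c|}\id$, which is too weak. This is the main obstacle, and it is exactly where the light-cone structure must be used.

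To fix this, I would use the decomposition above more carefully. The gates outside the light-cone act only on $S_X^c$, after the gates in $C_{\text{in}}$ (or commute appropriately), so they cancel under $\Tr_{S_X^c}$. What remains is the unitary $U_{\text{in}}$ formed by the light-cone gates. This unitary acts on $S_X$ together with some qubits $Y\subseteq X^c$ that enter the light-cone. Those qubits start in the state $\ket{\phi}$ and end inside $S_X$, so they are not traced out; any other qubits are not touched. The dimension-counting is then the point to handle: $U_{\text{in}}$ maps $X\cup Y$ to itself, and $S_X \supseteq X\cup Y$ after the light-cone closure. Conditioning on the rest of $\ket{\phi}$ (reducing to pure product pieces by convexity if $\phi$ is entangled across the cut) gives
\[
\sum_x \rho^{(x)} = \Tr_{\text{rest}}\Big[U_{\text{in}}\big(\id_X\otimes \phi_{S_X\setminus X}\big)U_{\text{in}}^\dagger\Big].
\]
Here the partial trace is over nothing in $S_X$. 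Since $\phi_{S_X\setminus X}\preceq\id$ and conjugation by a unitary preserves $\preceq$, this expression is $\preceq \id_{S_X}$.

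I expect the delicate step to be justifying that the traced-out part does not enlarge the operator. This requires $\phi$ restricted to $S_X\setminus X$ to be a genuine reduced state whose environment sits outside the light-cone and is untouched by $U_{\text{in}}$. That gives
\[
\Tr_{S_X^c}\big[(U_{\text{in}}\otimes W)(\id_X\otimes\ketbra{\phi})(\cdot)^\dagger\big] = U_{\text{in}}(\id_X\otimes\phi_{S_X\setminus X})U_{\text{in}}^\dagger,
\]
with the marginal of a state bounded by $\id$.
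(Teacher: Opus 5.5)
Your plan is the paper's: express $\sum_x\rho^{(x)}$ as a unitary supported on $S_X$ conjugating $\id_X\otimes(\text{a density matrix})$. Your split into in-cone and out-of-cone gates is also the paper's $G_X$, $G_A$. But the step that carries the argument is false as stated.

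You order the factors as $C=C_{\text{out}}\cdot C_{\text{in}}$, with the in-cone gates applied first. You then assert that the out-of-cone gates act only on $S_X^c$. That gives $C=U_{\text{in}}\otimes W$ across the cut $S_X\,|\,S_X^c$, with the relevant marginal taken from the initial $\ket{\phi}$.

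An out-of-cone gate at layer $t$ only avoids the qubits reached by layer $t-1$. The qubits it touches can be pulled into the cone later. So it may act on $K=S_X\setminus X$ and $S_X^c$ at once, and it does not commute with the later in-cone gate that pulls those qubits in.

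Here is a counterexample. Take qubits $q\in X$ and $a,b$ starting in $\ket{00}$. Let a layer-1 CZ gate on $\{a,b\}$, preceded by Hadamards, maximally entangle $a$ and $b$, and let a layer-2 gate act on $\{q,a\}$. Then $S_X=\{q,a\}$, $S_X^c=\{b\}$, and $\sum_x\rho^{(x)}=\tfrac12\id_{S_X}$. Your formula, with $\phi_{S_X\setminus X}=\ketbra{0}_a$, gives $U_{\text{in}}(\id_q\otimes\ketbra{0}_a)U_{\text{in}}^\dagger$, which is a rank-$2$ projector. So you correctly flagged the delicate step but resolved it wrongly: the environment of $K$ is not untouched.

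The repair is to reverse the order, as the paper does. An out-of-cone gate at layer $t$ has support disjoint from every in-cone gate at layers $\le t$. Hence all out-of-cone gates can be commuted to the front, giving $C=G_XG_A$, where $G_A$ acts on $K\cup S_X^c$ and $G_X$ acts only within $S_X$. Since $G_A$ never touches $X$, absorb it into the fixed state, $\ket{\phi'}=G_A\ket{\phi}$. Then $C\ket{x,\phi}=(G_X\otimes\id_{S_X^c})(\ket{x}\otimes\ket{\phi'})$, and
\[
\sum_{x}\rho^{(x)}=G_X\br{\id_X\otimes\Tr_{S_X^c}\Br{\ketbra{\phi'}}}G_X^\dagger\preceq\id,
\]
because the marginal of $\ket{\phi'}$ on $K$ is a density matrix.

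The paper reaches the same expression through a Schmidt decomposition of $\ket{\phi'}$ across $K\,|\,S_X^c$; the partial-trace identity is equivalent. No reduction to product pieces by convexity is needed, and an entangled pure state does not admit one anyway. With this correction your final step goes through unchanged, since it only used that the operator on $K$ is a state.
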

\begin{proof}[Proof of \cref{cor: separable leads to small sepectral}]
  Let $K = S_X\backslash X$ be the other qubits in the light-cone.
  Then the $n$ qubits are partitioned into three subsets, $X, K$ and $ S_X^c$, 
  where $S_X = X \cup K$ and $S_X^c$ denotes all the qubits outside the light-cone.
  The qubits in $X^c = K\cup S_X^c$ are fixed to be the state $\ket{\phi}_{K,S_X^c}$.

    Let $G_X$ be the gates in the circuit that are in the forward light-cone of $X$ and $G_A$ be the remaining gates.
    By the properties of light cones,
    we can first make the observation that
    $G_X$ acts on the qubits in $S_X$,
    and $G_A$ acts on qubits in $K$ and $S_X^c$.
    Moreover, we have $C = G_XG_A$.
    That is, we can arrange the gates such that we perform the gates with support contained in $G_A$ first, then apply the gates with support contained in $G_X$ afterward. 
    This is because  a gate $G \in G_A$ in layer $d$ commutes with any gate $G' \in G_X$
    in layer $d'\leq d$.
    Hence we have
    \begin{equation*}
      C\br{\ket{x}_X\otimes\ket{\phi}_{K,S_X^c}} =
      \br{G_X\otimes\id_{S_X^c}} \br{\ket{x}_X\otimes G_A\ket{\phi}_{K,S_X^c}}.
    \end{equation*}
    Applying Schmidt decomposition~\cite[Theorem 2.7]{Chuang_1997} to the state $G_A\ket{\phi}_{K,S_X^c}$ across the partition $K$ and $L$ we obtain
    \begin{equation*}
        G_A\ket{\phi}_{K,S_X^c} = \sum_{i}\sqrt{\lambda_i}\ket{u_i}_{K}\otimes\ket{v_i}_{S_X^c},
    \end{equation*}
    where each $\lambda_i\ge 0$ and they sum up to $1$,
    and $\set{\ket{u_i}_K}$ and $\set{\ket{v_i}_{S_X^c}}$ form two orthonormal sets.
    Applying $G_X$ we get
    \begin{equation*}
    C\ket{x,\phi} = \br{G_X\otimes\id_{S_X^c}} \br{\ket{x}_X\otimes G_A\ket{\phi}_{K,S_X^c}} = \sum_{i} \sqrt{\lambda_i} G_X(\ket{x}_X\ket{u_i}_K)\otimes \ket{v_i}_{S_X^c}.
    \end{equation*}
    Applying the partial trace,
    we obtain
    \begin{equation*}
        \rho^{(x)} = \Tr_{S_X^c}\Br{C\ketbra{x,\phi}C^\dagger} = \sum_i \lambda_i G_X\ketbra{x, u_i}G_X^\dagger.
    \end{equation*}
    Hence
  
    \begin{align*}
    \sum_{x\in\set{0,1}^X}\rho^{(x)} = \sum_{x,i} \lambda_i G_X\ketbra{x, u_i}G_X^\dagger \preceq \id\cdot\max_{i}\lambda_i\preceq \id. \\
    \end{align*}
\end{proof}

The following lemma asserts that if a \CZGate\ covers many forward light-cones of the qubits from a separable set, then it can be removed while incurring only a small error.
\begin{lemma}[Erasure lemma] \label{lemma: Erasure lemma}
    Let $C$ be a $\QAC$ circuit with $n$ input qubits and $a$ ancilla qubits initialized to $\ket{\phi}$.
    Let $I$ be a subset of input qubits
    such that $\abs{I} = k$ and $C$ is $I$-separable. 
    For each $i\in I$, let $S_i$ be the forward light-cone of qubit $i$.
    Let $\CZGr_T$ be a \CZGate\ acting on the qubits in $T$,
    such that $S_i\subseteq T$ for each $i\in I$.
    Recall that for a quantum circuit $C$, we use $\rho^x_C$ to denote the output state with input $x$,
    and $\CZGr_T\cdot C$ is the composed circuit which applies the $\CZGr_T$ gate after the circuit $C$.
    We have
%
    %
    \begin{align*}
        \E{x}{\norm{\rho_{\CZGr_T\cdot C}^x- \rho_{C}^x}_1} \leq 4 \cdot 2^{-k}.
    \end{align*}
\end{lemma}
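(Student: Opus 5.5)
The plan is to reduce the trace distance to the weight the output state places on the all-ones configuration of $T$, and then use $I$-separability to factor that weight over the disjoint light-cones. Throughout, $\norm{\cdot}_1$ is the \emph{normalized} Schatten $1$-norm fixed in the preliminaries, on the $2^{n+a}$-dimensional output space. This normalization is essential: the final $2^{-k}$ comes from the dimension factor, and I do not expect the unnormalized version of the statement to hold (see the last paragraph).

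\textbf{Step 1 (reduction to a pure-state overlap).} Write $\CZGr_T=\id-2P_T$, where $P_T=\ketbra{1^T}\otimes\id_{T^c}$. Let $\ket{\psi_x}=C\ket{x,\phi}$ and $p_x=\norm{P_T\ket{\psi_x}}^2$ (Euclidean norm). Then $\bra{\psi_x}\CZGr_T\ket{\psi_x}=1-2p_x$. The difference of the two pure output states has rank at most $2$, so its unnormalized trace norm equals
\begin{equation*}
2\sqrt{1-(1-2p_x)^2}\le 4\sqrt{p_x}.
\end{equation*}
Dividing by the dimension gives
\begin{equation*}
\norm{\rho^x_{\CZGr_T\cdot C}-\rho^x_C}_1\le 4\sqrt{p_x}\,2^{-(n+a)}.
\end{equation*}

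\textbf{Step 2 (factorization via separability).} Fix the inputs outside $I$. As in the proof of \cref{cor: separable leads to small sepectral}, write $C=G_XG_A$ with $X=I$. Because the cones $S_i$ are pairwise disjoint, no gate can lie in two cones, so $G_X=\bigotimes_{i\in I}G_i$ with each $G_i$ supported on $S_i$. Since $S_i\subseteq T$, projecting onto $\ket{1^{S_i}}$ for every $i$ gives
\begin{equation*}
\sqrt{p_x}\le\norm{\br{\textstyle\bigotimes_i\bra{1^{S_i}}G_i\ket{x_i}}\otimes\id\;G_A\ket{\phi'}}\le\prod_{i\in I}\norm{a_{i,x_i}},
\end{equation*}
where $a_{i,b}=\bra{1^{S_i}}G_i\ket{b}_i$ is a row vector on $S_i\setminus\{i\}$. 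Unitarity of $G_i$ gives $\norm{a_{i,0}}^2+\norm{a_{i,1}}^2\le 1$. Averaging coordinatewise, using independence of the $x_i$, yields
\begin{equation*}
\E{x}{\sqrt{p_x}}\le\prod_i\tfrac12\br{\norm{a_{i,0}}+\norm{a_{i,1}}}\le 2^{-k/2}.
\end{equation*}
This step is where separability is used, and it is the step I expect to need the most care. In particular, $G_A$ may entangle the cones through the ancilla, which is why the bound is taken in operator norm, factor by factor, rather than by assuming a product state.

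\textbf{Step 3 (conclusion).} Combining the two steps,
\begin{equation*}
\E{x}{\norm{\rho^x_{\CZGr_T\cdot C}-\rho^x_C}_1}\le 4\cdot 2^{-k/2}\cdot 2^{-(n+a)}\le 4\cdot 2^{-k},
\end{equation*}
since $n\ge k$. I would flag in the write-up that the normalization is what delivers $2^{-k}$. For the unnormalized trace distance, $2^{-k/2}$ is the truth: with each $G_i=H$ on a single qubit and $T=I$, one has $p_x=2^{-k}$ for every $x$, so the unnormalized distance is about $4\cdot 2^{-k/2}$.
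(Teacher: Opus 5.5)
Your proof is correct for the statement read literally, with $\norm{\cdot}_1$ the normalized Schatten norm. Under that reading, however, the lemma is vacuous, and your Step~2 and the separability hypothesis do no work. Any two density operators on $n+a$ qubits satisfy $\norm{\rho-\sigma}_1\le 2\cdot 2^{-(n+a)}$, and this is already at most $4\cdot 2^{-k}$ since $k\le n$.

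The version with content, and the one the paper relies on downstream, uses the unnormalized trace norm. For example, the proof of \cref{lemma: 1D-QAC0 subset lemma} uses $\abs{\Tr[\Pi(\rho^x_C-\rho^x_{\tilde{C}})]}\le\norm{\rho^x_C-\rho^x_{\tilde{C}}}_1$, which holds only in that sense. In that reading you are right on both counts: your Steps~1 and~2 give $4\cdot 2^{-k/2}$, and your Hadamard example shows that $4\cdot 2^{-k}$ is false.

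The paper's proof overclaims at the step $\norm{\ketbra{1^T}\rho^x_C}_1\le\bra{1^R}\Tr_{R^c}[\rho^x_C]\ket{1^R}$, where $R=\bigcup_i S_i$. For the pure output state the left side equals $\sqrt{p_x}$, while the right side is only guaranteed to be at least $p_x$. In your example $R=T$, and that step asserts $2^{-k/2}\le 2^{-k}$. The corrected bound $4\cdot 2^{-k/2}$ changes only constants downstream (roughly doubling $s$ in \cref{lemma: 1D-QAC0 separable lemma}). The contiguous-input argument in \cref{thm: 1D-QAC0 PARITY LB contiguous and infinite ancilla} already works with this square-rooted form.

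On the route itself, the two arguments differ in how they control the all-ones weight. The paper reduces via the triangle inequality to $4\,\E{x}{\norm{\ketbra{1^T}\rho^x_C}_1}$. It then bounds the weight on $R$ through the operator inequality $\sum_{x_I}\Tr_{R^c}[\rho^{x_Ix_{I'}}]\preceq\id$ from \cref{cor: separable leads to small sepectral}, which gives $\E{x}{p_x}\le 2^{-k}$. That inequality comes from a Schmidt decomposition across the light cone of $I$ and holds for any input set. So within this lemma the paper never actually uses separability, only $R\subseteq T$. Combined with Jensen, it yields the same $\E{x}{\sqrt{p_x}}\le 2^{-k/2}$ that you obtain.

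You instead use the exact rank-two formula $4\sqrt{p_x(1-p_x)}$ and use separability to factor $G_X=\bigotimes_i G_i$. This gives the pointwise product bound $\sqrt{p_x}\le\prod_i\norm{a_{i,x_i}}$, which you then average coordinate by coordinate. Your factorization is sound and more explicit. The cost is that it spends a hypothesis the averaged operator-inequality argument does not need.
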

\begin{proof}[Proof of \cref{lemma: Erasure lemma}]
    Let $I = \set{x_1,\cdots,x_k}$,
    and $I^\prime$ be the rest of the inputs.
    Notice that $R = \cup_{i=1}^k S_i \subseteq T$ is the  light-cone of $I$.
    Then, fixing any input $x_{I^\prime}\in\set{0,1}^{I^\prime}$,
    we have by \cref{cor: separable leads to small sepectral}
    \begin{equation}\label{eqn:rho}
        \sum_{x_I\sim\set{0,1}^I}\Tr_{R^c}\Br{\rho^{x_Ix_{I^\prime}}} \preceq \id.
    \end{equation}
    Hence
    \begin{align*}
         \expec{x\sim\set{0,1}^n}{\norm{\rho_{\CZGr_T \cdot C}^x- \rho_{C}^x}_1}
        &= \expec{x \sim \set{0,1}^n}{\norm{  \CZGr_T \rho_C^x \CZGr_T^{\dagger} - \rho_C^x }_1} \\
        &\leq \expec{x \sim \set{0,1}^n} {\norm{\CZGr_T \rho_C^x - \rho_C^x }_1 + {\norm{\rho_C^x - \rho_C^x \CZGr_T^{\dagger}}_1}} \\
        &= 4 \expec{x \sim \set{0,1}^n} {\norm{\ketbra{1^T} \rho_C^x }_1} \\
        &\leq 4 \expec{x \sim \set{0,1}^n} {\bra{1^R} \Tr_{R^c} \Br{\rho_C^x} \ket{1^R}} \\
        &= 4 \expec{\substack{x_I \sim \set{0,1}^I\\x_{I^\prime}\sim\set{0,1}^{I^\prime}}} {\abs{\bra{1^R} \Tr_{R^c} \Br{\rho^{x_Ix_{I^\prime}}} \ket{1^R}}}  \\
        &\leq 4 \cdot 2^{-k}.
    \end{align*}
    where the first inequality is the triangle inequality; the second inequality follows since $R\subseteq T$; the last inequality is by \eqref{eqn:rho}.
\end{proof}

Now we are ready to prove \cref{lemma: 1D-QAC0 separable lemma}.

\begin{proof}[Proof of \cref{lemma: 1D-QAC0 separable lemma}]
    We prove the lemma by the induction on the depth.
    For each layer,
    we perform two operations: first, we erase the ``large'' gates and  bound the error using \cref{lemma: Erasure lemma};
    second, we apply \cref{lemma: 1D-QAC0 structure lemma} to preserve a subset of inputs such that the circuit is separable in the sense of \cref{def: input-separable}.
    
    Suppose $C = L_dM_d\cdots L_1M_1L_0$,
    where $L_i$ are layers of single qubit unitaries,
    and $M_i$ are layer of multi-qubit \CZGate s.
    Let $C_{\leq i} = L_iM_i\cdots L_1M_1L_0$,
    and $C_{> i} = L_{d}M_d\cdots L_{i+1}M_{i+1}$,
    such that $C = C_{>i}C_{\le i}$  for any $i\in\set{1,2,\dots, n}$.
      We will prove that for any integer $t \ge 0$,
    there exists an $I_t$-separable circuit $\tilde{C}_{\leq t}$ where $|I_t|\geq n / \br{\log(n/\ve)}^t$,
    and we eventually choose $S=I_d$.
    Furthermore, the circuit satisfies
    \begin{align*}
        \expec{x}{\norm{\rho_{C_{>t}\cdot\tilde{C}_{\leq t}} - \rho_{C}}_1} \leq 16t\ve.
    \end{align*}

    For the base case, set $I_0 = I$ and $\tilde{C}_{\leq 0}  = C_{\le 0}= L_0$. 
    The above induction hypothesis holds trivially for $t=0$.
    Now, fix any $t\ge 0$ and consider the $(t+1)$'th layer.
    Choose $s = \log(n/\ve)$. 
    We erase every \CZGate\ in $M_{t+1}$ which intersects with at least $s$ forward light-cones of qubits in $I_t$.
    Since the \CZGate\ is one dimension,
    at least $s-2$ of these forward light-cones are completely contained in the support of \CZGate.
    Hence by \cref{lemma: Erasure lemma}, the error incurred by each erasure is at most $4 \cdot 2^{-(s-2)} = 16\ve / n$. 
    Since there exists at most $n$ \CZGate s satisfying the above condition,
    the total error is upper bounded by $16\ve$.
    Thus, if we denote $\tilde{M}_{t+1}$ as the layer where the large \CZGate s in $M_{t+1}$ are replaced by identity and $\tilde{C}_{\leq t+1} = L_{t+1}\tilde{M}_{t+1} \tilde{C}_{\leq t}$, we have

    \begin{align*}
     \expec{x}{\norm{\rho_{C_{>t+1}\cdot\tilde{C}_{\leq t+1}} - \rho_{C}}_1}
     &\leq 
      \expec{x}{\norm{\rho_{C_{>l}\cdot\tilde{C}_{\leq t}} - \rho_{C_{>t+1}\cdot\tilde{C}_{\leq t+1}}}_1} + \expec{x}{\norm{f_{C_{>t}\cdot\tilde{C}_{\leq t}} - f_{C}}_1}
      \\
      &\leq 16\ve + 16t\ve = 16(t+1)\ve.
    \end{align*}

    Note that every \CZGate\ in $\tilde{M}_{t+1}$ intersects with at most $s$ forward light-cones of qubits in $I_t$ of the circuit $\tilde{C}_t$.
    By \cref{lemma: 1D-QAC0 structure lemma},
    we also conclude that $\tilde{C}_{\leq t+1}$ is $I_{t+1}$-separable where $|I_{t+1}| \geq |I_t| / s \geq n / \br{\log(n/\ve)}^{t+1}$.
    This concludes the induction step.

    
\end{proof}

\subsection{Lower bounds on \parity\ and \majority} \label{subsec: 1D-QAC0 2}

In this subsection,
we show how to apply the local approximation results of \DQAC\ circuits from the previous subsection to demonstrate the average-case hardness of computing Boolean functions, such as \parity\ and \majority.

Here we recall some notations: 
Given a circuit $C$, define $g_C(x)$ as the output of the circuit $C$ with input $x$.
Note that $g_C(x)$ is not a Boolean function.
Instead, its output is a distribution on $\set{0,1}$.
Furthermore, $f_C$ is the function such that
for any input $x$,
we have that $g_C(x)$ outputs $1$ with probability $f_C(x)$
and outputs $0$ with probability $1-f_C(x)$.

We consider the case where
we allow the inputs to be organized arbitrarily in a $\QAC$ circuit.
In this case, we prove that to compute \parity\ we need at least $\Omega\br{\log n/\log\log n}$ depth.
\begin{theorem} \label{thm: 1D-QAC0 cannot compute PARITY}
    Let $n,d \geq 1$ be integers.
    Let $C$ be a depth-$d$ \DQAC\ circuit with input size $n$, then
    \begin{align*}
        \Pr_{x,C}  [g_C(x) = \CParity{n}(x)] \leq \frac{1}{2} + 4\sqrt{2}d \cdot 2^{-n^{1/d}/6}.
    \end{align*}

    In particular, to compute \CParityn\ with probability at least $\frac{2}{3}$ in the average case, we need a \DQAC\ circuit of depth $\Omega(\log n / \log \log n)$.
\end{theorem}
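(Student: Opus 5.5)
The plan is to combine \cref{lemma: 1D-QAC0 subset lemma} with the fact that \CParity{n} has no Fourier weight at degree at most $1$ after any restriction (\cref{prop: PARITY, MAJ deg 1}). Write the success probability as a correlation. Since $f_C(x)=\Pr[g_C(x)=1]$, we have
\begin{equation*}
\Pr_{x,C}[g_C(x)=\CParity{n}(x)] = \frac12 + \frac12\,\expec{x}{(2f_C(x)-1)(-1)^{\CParity{n}(x)}\cdot(-1)}.
\end{equation*}
Up to sign, the advantage is $\frac12\abs{\ip{F_C,\chi_{[n]}}}$ with $F_C = 1-2f_C$. Apply \cref{lemma: 1D-QAC0 subset lemma} with a parameter $\ve$ to be chosen. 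It gives an $f$ with $\norm{f-f_C}_2\le 4\sqrt{2d\ve}$ and a set $S$ with $\abs{S^c}\ge n/(\log(n/\ve))^d$. We split
\begin{equation*}
\ip{1-2f_C,\chi_{[n]}} = \ip{1-2f,\chi_{[n]}} + 2\ip{f-f_C,\chi_{[n]}}.
\end{equation*}
By Cauchy--Schwarz, the second term is at most $2\norm{f-f_C}_2 \le 8\sqrt{2d\ve}$.

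For the first term, condition on $z\in\set{0,1}^S$. The restriction $(1-2f)|_{S,z}$ depends on only one coordinate $x_i$ with $i\in S^c$, so it has degree at most $1$. The restriction $\chi_{[n]}|_{S,z}=\pm\chi_{S^c}$ is a character of degree $\abs{S^c}$. Whenever $\abs{S^c}\ge 2$, orthogonality of characters (Parseval, \cref{thm:Parseval}) makes the conditional inner product exactly $0$. Averaging over $z$, the first term vanishes. The resulting advantage bound is therefore $4\sqrt{2d\ve}$.

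Next choose $\ve$. We need $\abs{S^c}\ge 2$, which holds when $n/(\log(n/\ve))^d\ge 2$. Set $\log(n/\ve) = n^{1/d}/2^{1/d}$, so that $\ve = n\,2^{-(n/2)^{1/d}}$. The bound then becomes $4\sqrt{2d}\,\sqrt{n}\,2^{-(n/2)^{1/d}/2}$. To match the target form $4\sqrt2 d\cdot 2^{-n^{1/d}/6}$, absorb $\sqrt{n}$ and $\sqrt d\le d$ into the exponent by slackening the constant from $1/2$ to $1/6$. This works when $(n/2)^{1/d}$ dominates $\log n$. In the complementary regime, where $n^{1/d}=O(\log n)$, the claimed right-hand side already exceeds $1/2$ once constants are checked, so the bound is trivial there. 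Alternatively, we can pick $\log(n/\ve)=n^{1/d}/3$, which roughly gives $\ve\approx 2^{-n^{1/d}/3}$ up to the factor $n$, and handle the factor $n$ through the regime split.

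The main obstacle I expect is this bookkeeping. I need to choose $\ve$ so that three things hold simultaneously: $\abs{S^c}\ge 2$, the stray factor $n$ inside $\ve$ is absorbed, and the final constant $1/6$ comes out, with separate small cases where the inequality is vacuous. The structural part, namely the vanishing of the correlation after restriction, is immediate from the lemma.

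The corollary then follows directly. For the success probability to be at least $2/3$ we need $d\,2^{-n^{1/d}/6}=\Omega(1)$, which forces $n^{1/d}=O(\log d)$. This gives $d\log\log n\gtrsim \log n$, that is, $d=\Omega(\log n/\log\log n)$.
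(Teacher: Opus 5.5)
\textbf{Same approach as the paper, but one step is false.} Your argument follows the paper's proof: apply the subset lemma, split off $2\|f-f_C\|_2$ by Cauchy--Schwarz, and note that after fixing $z$ the approximant has degree at most $1$ while parity restricts to $\pm\chi_{S^c}$. The correlation therefore vanishes once $|S^c|\ge 2$. The paper writes this as a Fourier-weight split at $k=1$, which it reuses for the Majority bound; for parity it is the same computation, and your constant $4\sqrt{2d\varepsilon}$ matches.

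\textbf{The false step is your treatment of the ``complementary regime''.} It is not true that the additive term $4\sqrt2\,d\,2^{-n^{1/d}/6}$ is at least $1/2$ whenever $n^{1/d}=O(\log n)$. At $n^{1/d}=\log n$, i.e.\ $d=\log n/\log\log n$, it equals $4\sqrt2\,d\,n^{-1/6}\to 0$, so the claimed bound is far from vacuous there. No choice of $\varepsilon$ rescues this. For every $\varepsilon<1$ the lemma only guarantees $|S^c|\ge n/(\log(n/\varepsilon))^d<(n^{1/d}/\log n)^d\le 1$, which is not even one free variable. More generally, getting $|S^c|\ge 2$ together with $\varepsilon\le d\,2^{-n^{1/d}/3}$ (what the target constant requires) needs roughly $n^{1/d}\gtrsim\tfrac32\log n$. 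The stated bound, however, is already nontrivial once $n^{1/d}>6\log(8\sqrt2\,d)$. For large $d$ there is a window between these two thresholds that your argument does not cover.

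\textbf{The paper's proof has the same hole.} It asserts that $\varepsilon=2^{-n^{1/d}/3}$ ``ensures'' the free set has size at least $2^d$. Since $\log(n/\varepsilon)=\log n+n^{1/d}/3$, this actually requires $n^{1/d}\ge 6\log n$, and the other regime is never addressed. So what you and the paper actually establish is the displayed inequality under a hypothesis of the form $n^{1/d}\ge C\log n$.

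\textbf{The corollary still holds, but derive it by a case split.} Do not apply the inequality for all $(n,d)$. If $n^{1/d}<C\log n$, then $d>\log n/\log(C\log n)=\Omega(\log n/\log\log n)$ directly. Otherwise the bound applies, and success probability $2/3$ forces $n^{1/d}\le 6\log(24\sqrt2\,d)$. Together with $n^{1/d}\ge C\log n$ this gives $d\ge n^{\Omega(1)}$, so $d=\Omega(\log n/\log\log n)$ in both cases.
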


We also have a weak lower bound for \majority.
\begin{restatable}{theorem}{majoritylowerbound}\label{thm: 1D-QAC0 cannot compute MAJORITY}
    Let $n,d \geq 1$ be integers, and $C$ be a depth-$d$ \DQAC\ circuit with input size $n$. It holds that  
    \begin{align*}
        \Pr_{x,C}  [g_C(x) = \Majorityn(x)] \leq 1 - \Omega \br{\sqrt{\frac{1}{n}}} +  2\sqrt{2}d  \cdot 2^{- n^{1/d}/6}.
    \end{align*}
\end{restatable}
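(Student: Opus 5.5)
The plan is to mirror the parity argument: apply \cref{lemma: 1D-QAC0 subset lemma} and then exploit the low-degree Fourier weight of majority from \cref{prop: PARITY, MAJ deg 1}. Switch to $\pm1$ notation and write $M = \Majorityn:\set{\pm1}^n\to\set{\pm1}$. Set $F = 1-2f_C$, so $F(x)\in[-1,1]$. Then
\begin{equation*}
\Pr_{x,C}[g_C(x)=\Majorityn(x)] = \frac12 + \frac12\ip{F, M}.
\end{equation*}
It therefore suffices to show $\ip{F,M}\le 1-\Omega(1/\sqrt n) + 4\sqrt2 d\,2^{-n^{1/d}/6}$.

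First, choose $\ve$ so that the subset lemma's error $4\sqrt{2d\ve}$ becomes $4\sqrt2 d\, 2^{-n^{1/d}/6}$, while the surviving set $S^c$ is still nonempty. Concretely, take $\ve = d\,2^{-n^{1/d}/3}$. Then $\log(n/\ve) = O(n^{1/d})$, so $\abs{S^c}\ge n/(\log(n/\ve))^d\ge 1$ after adjusting constants; this is the same parameter choice used for parity. The lemma gives $f$ with $\norm{f-f_C}_2\le 4\sqrt{2d\ve}$ and a set $S$ such that every restriction $f|_{S,z}$ depends on at most one coordinate of $S^c$. With $G=1-2f$ and Cauchy--Schwarz (using $\norm{M}_2=1$),
\begin{equation*}
\ip{F,M}\le \ip{G,M} + 2\norm{f-f_C}_2,
\end{equation*}
which produces the additive error term. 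The constant $2\sqrt2 d$ in the statement suggests tracking a factor $1/2$ carefully; I would adjust constants here.

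Next, bound $\ip{G,M}$. Write it as $\E{z}{\ip{G|_{S,z}, M|_{S,z}}}$. Each $G|_{S,z}$ is a function of one variable $x_i$ with values in $[-1,1]$, so it has degree at most $1$ and $\norm{G|_{S,z}}_2\le1$. Hence
\begin{equation*}
\ip{G|_{S,z},M|_{S,z}} \le \sqrt{\Wgt{\le1}{M|_{S,z}}}.
\end{equation*}
The main obstacle is controlling $\Wgt{\le 1}{M|_{S,z}}$ on average, because a restriction of majority can be constant, and then the bound is trivial. The key quantity is the probability that the restricted function is non-constant, together with how much of it is captured by a single variable. Since the restricted function depends only on coordinates in $S^c$, I would reduce to the case $\abs{S^c}=m$ small, and m=1 is the worst case. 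In that case $G|_{S,z}$ can match $M|_{S,z}$ exactly whenever $M|_{S,z}$ is a dictator or constant.

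So the real bound must come from the probability that $M|_{S,z}$ is constant. For $\abs{S^c}=m\ge1$ kept free, majority is non-constant with probability $\Theta(m/\sqrt n)$. With $m\ge2$ free coordinates, a non-constant restriction of majority is a majority-type threshold on $m$ bits (of which at least one is non-dictator, e.g. when the count is balanced). By \cref{prop: PARITY, MAJ deg 1}, and more generally by the fact that a symmetric threshold on $m\ge2$ bits that is not constant has correlation at most $1-\Omega(1)$ with any single-variable function, each non-constant restriction loses a constant. Since non-constant restrictions occur with probability $\Omega(1/\sqrt n)$, the average gives $\ip{G,M}\le 1-\Omega(1/\sqrt n)$.

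If $\abs{S^c}=1$, I would instead take $S^c$ of size at least $2$ (shrink $\ve$'s constant slightly) and simply discard extra elements into $S$, which preserves the lemma's conclusion. Combining the two steps yields the stated bound.
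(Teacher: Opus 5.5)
Your argument is correct provided you discard down to exactly two free coordinates, as your last paragraph proposes. It differs from the paper's at the key step. The paper keeps all surviving free coordinates and makes their number $m$ odd by a parity adjustment. It then uses only the exactly balanced fixed assignments (probability at least $0.7/\sqrt{n}$), on which the restriction is $\Majority{m}$, and $\Wgt{\le 1}{\Majority{m}}\le 3/4$ by the monotonicity computation in the preliminaries.

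You instead shrink the free set to two coordinates. This is legitimate, since further restriction preserves ``depends on at most one variable''. You then use \emph{every} non-constant restriction. With $n-2$ fixed bits, majority is non-constant exactly when the fixed count is one of the two central values $(n-3)/2,(n-1)/2$ of $\mathrm{Bin}(n-2,1/2)$, an event of probability $\Theta(1/\sqrt{n})$. The restriction is then $\mathrm{AND}_2$ or $\mathrm{OR}_2$, each with $\Wgt{\le 1}{\cdot}=3/4$.

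Your route is more elementary: a two-bit computation replaces the level-1 weight of majority and the parity bookkeeping, and you need only two surviving free coordinates. The paper's choice of keeping the whole free set is the one that could be strengthened. Using all fixed assignments whose count is within $O(\sqrt{m})$ of balanced, rather than exactly balanced, would give a loss of order $\sqrt{m/n}$, which your reduction to two coordinates cannot see.

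Two points in your write-up need correcting.

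First, the ``more general'' fact you cite is false uniformly in $m$. A non-constant restriction of majority with $m$ free bits can be $\mathrm{AND}_m$, which has correlation $1-2^{1-m}$ with a constant function. So the loss for the extreme thresholds decays exponentially in $m$. Combining ``non-constant with probability $\Theta(m/\sqrt{n})$'' with ``constant loss per non-constant restriction'' is therefore wrong for growing $m$. The reduction to a bounded number of free coordinates is essential, not just a way around $m=1$.

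Second, shrinking $\varepsilon$ lowers the guaranteed free-set size $n/(\log(n/\varepsilon))^d$, so it goes the wrong way. No adjustment is needed, though. Take $\varepsilon=d\,2^{-n^{1/d}/3}/4$, which gives exactly the error term $2\sqrt{2}\,d\,2^{-n^{1/d}/6}$. With this choice the guarantee already exceeds $1$ (indeed it is at least $(3/2)^d$) in every regime where the stated bound is nontrivial.
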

Below we provide the proof for \cref{thm: 1D-QAC0 cannot compute PARITY}.
The proof of \cref{thm: 1D-QAC0 cannot compute MAJORITY} is deferred to \cref{sec: lower bound on MAJ}.

\begin{proof}[Proof of \cref{thm: 1D-QAC0 cannot compute PARITY}]
By \cref{lemma: 1D-QAC0 subset lemma}, there exists a function $f$ and a subset $S \subseteq I$ such that $|S| \geq n / \br{\log(n/\ve)}^d$ and $\norm{f - f_{C}}_2\leq 4\sqrt{2d\ve}$. Furthermore, for any partial assignment $z \in \{0,1\}^{S^c}$, the restricted function $f|_{S^c,z}$ depends on at most one index.

We now bound the probability that $g_C$ computes a Boolean function $h : \set{0,1}^n \to \set{0,1}$. 
We switch to the $\{-1, 1\}$ representation by defining $f_C' = 2f_C - 1$, $h' = 2h - 1$, and $f' = 2f - 1$.
We have
\begin{equation} \label{eq: function bias computation}
\begin{aligned} 
    &\abs{ 2\Pr_{x,C}[g_C(x) = h(x)] - 1 } \\
    &= \abs{ \E{x}{f_C'(x) h'(x)} } \\
    &\leq \abs{ \E{x}{f'(x)h'(x)} } + \abs{ \E{x}{(f'-f_C')(x)h'(x)} }.
\end{aligned}    
\end{equation}
For the first term in \cref{eq: function bias computation}, we decompose the expectation over the restriction $z \in \{0,1\}^{S^c}$:
\begin{align*}
    \E{x}{f'(x)h'(x)}
    &= \mathbb{E}_{z \sim \set{0,1}^{S^c}}\E{x_S \sim \set{0,1}^S}{ f'(x)h'(x)} \\
    &= \mathbb{E}_{z \sim \set{0,1}^{S^c}}\E{x_S \sim \set{0,1}^S}{ f'|_{S^c,z}(x_S)h'|_{S^c,z}(x_S)} \\
    &= \mathbb{E}_{z \sim \set{0,1}^{S^c}} \sum_{T \subseteq \set{0,1}^S} \widehat{f'|_{S^c,z}}(T) \cdot \widehat{h'|_{S^c,z}}(T) .
\end{align*}
Here, we split the Fourier sum into low-degree and high-degree components based on an integer threshold $k$:
\begin{align*}
    &\mathbb{E}_{z \sim \set{0,1}^{S^c}} \Br{ \sum_{|T| \leq k} \widehat{f'|_{S^c,z}}(T) \cdot \widehat{h'|_{S^c,z}}(T)
    + \sum_{|T| > k} \widehat{f'|_{S^c,z}}(T) \cdot \widehat{h'|_{S^c,z}}(T)} \\
    &\leq \mathbb{E}_{z \sim \set{0,1}^{S^c}} 
    \Br{ 
    \sqrt{ \Wgt{\leq k}{f'|_{S^c,z}}\cdot \Wgt{\leq k}{h'|_{S^c,z}} } + 
    \sqrt{ \Wgt{> k}{f'|_{S^c,z}}\cdot \Wgt{> k}{h'|_{S^c,z}} } 
    } \\
    &\leq \mathbb{E}_{z \sim \set{0,1}^{S^c}} 
    \Br{ \sqrt{ \Wgt{\leq k}{h'|_{S^c,z}} } + \sqrt{ \Wgt{> k}{f'|_{S^c,z}} } }
\end{align*}
where we use the fact that $h'$ and $f'$ are bounded in $[-1,1]$ and thus $\norm{h'}, \norm{f'} \leq 1$.
For the second term in \cref{eq: function bias computation},
\begin{align*}
    \abs{ \E{x}{(f'-f_C')(x)h'(x)} } \leq \norm{f' - f_C'}_2 \norm{h'}_2 \leq 2\norm{f - f_C}_2.
\end{align*}
Combining these bounds yields:
\begin{equation} \label{eq: function bias computation final}
\begin{aligned}
    &\abs{ 2\Pr_{x,C}[g_C(x) = h(x)] - 1 } \\
    &\leq 
    \abs{ \mathbb{E}_{z \sim \set{0,1}^{S^c}} 
    \Br{ \sqrt{ \Wgt{\leq k}{h'|_{S^c,z}} } + \sqrt{ \Wgt{> k}{f'|_{S^c,z}} } } } + 2\norm{f - f_C}_2.
\end{aligned}
\end{equation}

We set $\ve = 2^{-n^{1/d}/3}$. 
This choice implies $\norm{f- f_C}_2 \leq 4\sqrt{2 d \ve} \le 4\sqrt{2}d\cdot 2^{-n^{1/d}/6}$ and ensures $|S| \geq 2^d$.
Now consider the case $h = \CParity{n}$ and set $k = 1$. Under any restriction $z$, the function reduces to $\CParity{n}|_{S^c,z} = \pm \CParity{|S|}$.

Plugging this into \cref{eq: function bias computation final},
\begin{align*}
    \Wgt{\leq 1}{h'|_{S^c,z}} = \Wgt{> 1}{f'|_{S^c,z}} = 0
\end{align*}
and thus the Fourier terms vanish, leaving only the approximation error:
\begin{align*}
     \abs{ 2\Pr_{x,C}[g_C(x) = h(x)] - 1 } \leq 8\sqrt{2}d \cdot 2^{- n^{1/d}/6}.
\end{align*}
This implies
\begin{align*}
    \Pr_{x,C}[g_C(x) = \CParity{n}(x)] \leq \frac{1}{2} + 4\sqrt{2}d \cdot 2^{- n^{1/d}/6}.
\end{align*}

\end{proof}

We also considered whether the above argument can be extended to \DDQACz\ circuits. For general \DDQACz\
 circuits, there are counterexamples showing that this argument breaks down, even in the width-2 case. We therefore impose an additional structural assumption, namely that each gate affects only a bounded number of the relevant light-cones. Under this assumption, we show that \DDQACz\ circuits cannot compute PARITY. The detailed discussion is deferred to \cref{app: limited 2D-QAC0 LB}.

\subsection{Lower bound on PARITY in \texorpdfstring{\DQAC}{1D-QAC0}  with contiguous inputs} \label{subsec: 1D-QAC0 3}

In this subsection, we focus on the case where we have $\DQAC$ circuits with inputs
arranged adjacently on a continuous interval,
where we denote as $I = [n]$.
Under this setting, we establish a near-linear bound. 
The proof is analogous to the case involving non-contiguous inputs. 
However, to achieve a near-linear bound, we must employ a stronger form of gate erasure. 
In the previous subsection, we erased gates that are fully contained in the light-cones of $\log n$ input qubits; Now, we directly erase almost every gate with a size of $\log n$. 
We further observe that light-cones in \DQAC\ circuits
can expand only at the two ends of an interval,
hence the size of the light-cones expand linearly,
in contrast with the multiplicative expansion of general $\QAC$ circuits.

\begin{theorem}\label{thm: 1D-QAC0 PARITY LB contiguous and infinite ancilla}
    Let $n,d \geq 1$ be integers and $0 < \ve < 1$.
    Let $C$ be a depth-$d$ \DQAC\ circuit with contiguous input qubits indexed by $I=[n]$,
    and ancilla qubits indexed by $A$.
    Then,
    \begin{align*}
        \Pr_{x,C}  [g_C(x) = \CParity{n}(x)] \leq \frac{1}{2} + 8dn \cdot 2^{-n/10d}.
    \end{align*}

    In particular, to compute \CParityn\ with probability at least $\frac{2}{3}$ in the average case, we need a \DQAC\ circuit of depth $\Omega(n / \log n)$.
\end{theorem}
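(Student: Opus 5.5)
We follow the template of \cref{thm: 1D-QAC0 cannot compute PARITY}: build, layer by layer, a separable approximation $\tilde C$ of $C$ keeping a large set $S\subseteq I$ of live inputs, then apply the Fourier argument of \cref{eq: function bias computation final} with $h=\CParity{n}$ and $k=1$. The new ingredient is that with contiguous inputs we keep a contiguous block of live inputs and erase \emph{every} gate of size at least $s=\Theta(n/d)$ meeting that block. Only $O(1)$ live inputs are lost per layer, instead of a multiplicative factor $s$.

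The induction hypothesis after $t$ layers is as follows. There is a contiguous interval $J_t\subseteq[n]$ with $|J_t|\ge n-2ts$. There is a modified circuit $\tilde C_{\le t}$ in which every surviving \CZGate\ meeting the light-cone of $J_t$ has size less than $s$. We also require $\expec{x}{\norm{\rho_{C_{>t}\tilde C_{\le t}}^x-\rho_C^x}_1}\le 16t\,n\,2^{-\Omega(s)}$.

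To pass from layer $t$ to $t+1$, we consider each gate of $M_{t+1}$ whose support $T$ meets the current light-cone of the interval $J_t$. If $|T|\ge s$, then $T$ is a long interval, and by the linear-growth property of light-cones in \DQAC, $T$ fully contains the light-cones of many input qubits. The point is that light-cones of single inputs have width at most about $ts'$, where $s'$ bounds the sizes of the retained gates, so a gate of size $s$ covers $\Omega(s/(\text{light-cone width}))$ of them.

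This is the step I expect to be the main obstacle. It needs a separable subset of those inputs to invoke \cref{lemma: Erasure lemma}. I would take every $m$-th input of $T\cap J_t$, with $m$ exceeding twice the maximal light-cone width, so their cones are disjoint and lie inside $T$. This gives $k=\Omega(s/m)$, and the erasure costs $4\cdot2^{-k}$. The counting of $s$ against light-cone width against $d$ must be balanced so that $k\ge n/(10d)$. Choosing retained-gate size $\le 2$ except near the ends of the block, and $s\approx n/(5d)$, is what I would try first, so that light-cones grow by $O(1)$ per layer in the interior. There are at most $n$ erased gates per layer, giving an error of $4n\,2^{-n/10d}$ per layer.

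Gates of size less than $s$ that straddle an endpoint of $J_t$ are handled by shrinking $J_t$ by $s$ at each end. Inputs in the interior block then have light-cones that evolve only through small gates; the interior light-cones grow by $O(1)$ per layer. After $d$ layers the whole remaining interval behaves like a $\QNCz$-type circuit of bounded light-cone. By choosing a sparse, evenly spaced subset $S$ of the final interval, whose cones are disjoint as in \cref{lemma: 1D-QAC0 structure lemma}, and fixing all other inputs, the output qubit depends on at most one input of $S$, with $|S|\ge2$.

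The conclusion follows exactly as in the proof of \cref{thm: 1D-QAC0 cannot compute PARITY}. Restricted to any $z\in\set{0,1}^{S^c}$, parity becomes $\pm\CParity{|S|}$, which has zero weight at degree $\le1$, while $f|_{S^c,z}$ has degree $\le1$. This leaves only the approximation error: the total error $16dn2^{-n/10d}$ gives $\norm{f-f_C}_2$ of order $\sqrt{dn}\,2^{-n/20d}$, or linearly $8dn2^{-n/10d}$ after the bookkeeping on constants. Hence we get the stated bound $\frac12+8dn\cdot2^{-n/10d}$. The depth consequence $d=\Omega(n/\log n)$ is immediate.
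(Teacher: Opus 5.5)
There is a genuine gap at exactly the step you flag, and tuning constants will not close it. \cref{lemma: Erasure lemma} charges $4\cdot 2^{-k}$ only when the erased gate \emph{fully contains} the forward light-cones of $k$ separable inputs. You retain gates of size up to $s-1\approx n/(5d)$, so after one layer an interior input's light-cone can already have width $\Theta(s)$. For example, let layer-one gates tile the block in intervals of length $s-1$, and let a layer-two gate of length $s$ be offset by half an interval: it contains no complete light-cone at all. So a gate of size about $s$ contains $O(1)$ complete cones, and erasing it may cost a constant. Your fix (keep only gates of size at most $2$ in the interior) fails for the same reason: it requires erasing interior gates of size $3,\dots,s-1$, each of which can cost constant error. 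More generally, any threshold schedule forcing every erased gate to contain $\Omega(\log n)$ complete cones lets cone widths grow by a $\log n$ factor per layer. That is the $(\log n)^d$ behaviour of \cref{lemma: 1D-QAC0 separable lemma}, not the $n/d$ you need. There is also a second problem: you decide erasure by $|T|\ge s$ among gates meeting the light-cone of $J_t$. A long gate that lies mostly on ancillas and only touches the edge of that cone contains no complete input light-cone, so it cannot be erased cheaply. Your boundary rule does not cover it either, since it only treats straddling gates of size less than $s$.

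The missing idea, which is how the paper argues, is to drop separability and keep the live block \emph{closed} under the circuit. At layer $t+1$, first erase every gate with $|T\cap I_t|\ge s$. Then delete from $I_t$ the parts touched by the at most two surviving gates that straddle $I_t$ and $I_t^c$. These meet $I_t$ in fewer than $s$ qubits each, so $|I_{t+1}|\ge |I_t|-2s$, and every gate touching $I_{t+1}$ lies inside it. Since only unitaries internal to the block ever act on it, the input-averaged reduced state stays maximally mixed: $\E{x}{\Tr_{I_t^c}\rho^x}=2^{-|I_t|}\id$. Hence for $S=T\cap I_t$ one gets $\E{x}{\bra{1^S}\Tr_{S^c}\rho^x\ket{1^S}}=2^{-|S|}$. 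This bounds the erasure cost directly, with no light-cone containment; the relevant quantity is the gate's intersection with the block, not its size. With $s=n/(5d)$, each layer costs $O(n2^{-s/2})$. At the end, the backward light-cone of the output qubit meets $I_d$ in at most about $2ds<|I_d|$ positions. So the approximating circuit's output is independent of some input bit and agrees with parity with probability exactly $1/2$; no sparse separable subset is needed for the final step.
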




\begin{proof}[Proof of \cref{thm: 1D-QAC0 PARITY LB contiguous and infinite ancilla}]
    Suppose the ancilla are initialized in the state $\ket{\phi}_A$.
    We prove the theorem by induction.
    Suppose $C = L_dM_d\cdots L_1M_1L_0$ where $L_i$ are layers of single qubit unitaries,
    and $M_i$ are layers of multi-qubit \CZGate s.
    Denote $C_{\leq i} = L_iM_{i}\cdots L_1M_1L_0$,
    and $C_{>i} = L_dM_d\cdots L_{i+1}M_{i+1}$.

    Let $s > 0$ be a parameter to be fixed later. 
    The inductive hypothesis is that, 
    for each layer $t$, there exists a depth-$t$ \DQAC\ circuit $\tilde{C}_{\leq t}$ with a contiguous set of indices $I_t \subseteq [n]$ such that
    \begin{enumerate}[(1)]
        \item $\norm{f_{C_{>t}C_{\leq t}} - f_{C_{>t}\tilde{C}_{\leq t}}}_2 \leq 4tn \cdot 2^{-s / 2}$,
        \item $|I_t| \geq n - 2t \cdot s$,
        \item $\expec{x}{\Tr_{I_t^c} \Br{ \br{\rho^x_{\tilde{C}_{\leq t}}}^2 }} = 2^{-|I_t|} \cdot \id$, where $\rho^x_{\tilde{C}_{\leq t}}$ denotes the output state of $\tilde{C}_{\leq t}$ when on input $\ket{x}_I \otimes \ket{\phi}_A$,
        \item for every $k \leq t$, each gate in layer $k$ acts on at most $s$ input qubits from $I_k$.
    \end{enumerate} 

\begin{figure}[htp]
    \centering

\begin{tikzpicture}[
    thick,
    solid box/.style={anchor=west, draw=black, fill=white, rectangle, minimum height=0.3cm, inner sep=0pt},
    dashed box/.style={anchor=west, draw=black, dashed, fill=white, rectangle, minimum height=0.3cm, inner sep=0pt},
    brace/.style={decoration={brace, mirror, amplitude=5pt, raise=4pt}, decorate, thick}
]

    \def\rowDist{1.0}   

    \node[solid box, minimum width=2.4cm] at (0, 0) {};
    \node[solid box, minimum width=0.8cm] at (2.6, 0) {};
    \node[solid box, minimum width=0.8cm] at (3.6, 0) {};
    \node[dashed box, minimum width=2.4cm] at (4.6, 0) {};
    \node[solid box, minimum width=2.4cm] at (7.2, 0) {};
    
    \draw[black, thick, |-|] (2.0, -0.5) -- (7.6, -0.5) 
        node[midway, above, fill=white, text=black, yshift=-0.3cm] { $I_0$};

    \draw[black, thick, |-|] (0, -0.5) -- (1.9, -0.5) 
        node[midway, above, fill=white, text=black, yshift=-0.3cm] { $A$};

    \draw[black, thick, |-|] (7.7, -0.5) -- (9.6, -0.5) 
        node[midway, above, fill=white, text=black, yshift=-0.3cm] { $A$};

    \begin{scope}[yshift=\rowDist cm]
        \node[solid box, minimum width=3.2cm] at (0, 0) {};
        \node[solid box, minimum width=0.8cm] at (3.4, 0) {};
        \node[solid box, minimum width=0.8cm] at (4.4, 0) {};
        \node[solid box, minimum width=0.8cm] at (5.4, 0) {};
        \node[solid box, minimum width=3.2cm] at (6.4, 0) {};

     \draw[black, thick, |-|] (2.4, -0.5) -- (7.2, -0.5) 
        node[midway, above, fill=white, text=black, yshift=-0.3cm] { $I_1$};

    \end{scope}

    \begin{scope}[yshift=2*\rowDist cm]
        \node[dashed box, minimum width=4.6cm] at (0, 0) {};
        \node[solid box, minimum width=0.8cm] at (5.3, 0) {};
        \node[solid box, minimum width=3.0cm] at (6.7, 0) {};

        \draw[black, thick, |-|] (3.2, -0.5) -- (6.4, -0.5) 
        node[midway, above, fill=white, text=black, yshift=-0.3cm] { $I_2$};

    \end{scope}

    \begin{scope}[yshift=3*\rowDist cm]

        \draw[black, thick, |-|] (3.2, -0.5) -- (6.4, -0.5) 
        node[midway, above, fill=white, text=black, yshift=-0.3cm] { $I_3$};
    \end{scope}

\end{tikzpicture}

    \caption{The circuit proceeds from bottom to top. Gates depicted with dashed lines represent those erased from the circuit. $I_{t+1}$ is obtained from $I_t$ by removing the part that interacts with $I_t^{c}$ in layer $t+1$.}
    \label{fig: I1,I2,I3}
\end{figure}
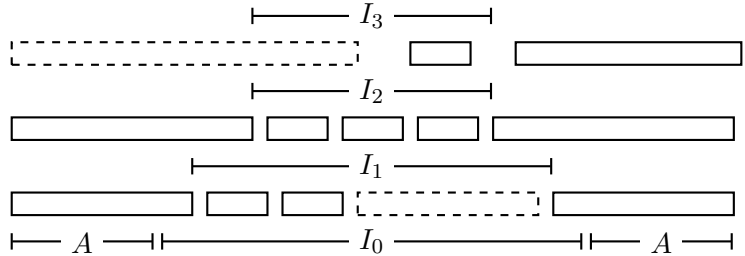
    The base case $t = 0$ holds by choosing $I_0 = [n]$ and $\tilde{C}_{\le 0} = C_{\leq 0} = L_0$. Conditions (1), (2), and (4) hold trivially. To verify condition (3), note that
\begin{align*}
\expec{x}{\Tr_{I_0^c} \Br{ \br{\rho^x_{\tilde{C}_{\leq 0}}}^2 }} = \expec{x}{ L_0\ketbra{x}L_0^\dagger} = 2^{-n} \cdot \id.
\end{align*}

    Now fix any $t$ and consider the $(t+1)$-th layer. 
    First, we handle the large gates to ensure condition (4).
    Consider a gate in layer $t+1$ with support $T$. Let $S$ denote the intersection of $T$ with the set $I_t$. Suppose $|S| \ge s$.
    Using condition (3) and the fact that a completely mixed state is still completely mixed after taking a partial trace, we have
    \begin{align*}
        \E{x}{\Tr_{S^c} \Br{ \br{\rho^x_{\tilde{C}_{\leq t}}}^2 }} &= 
        \Tr_{S^c\backslash I_t^c} \Br{ \E{x}{\Tr_{I_t^c} \Br{ \br{\rho^x_{\tilde{C}_{\leq t}}}^2 }} } \\
        &= \Tr_{S^c\backslash I_t^c} \Br{ 2^{-|I_t|} \cdot \id} \\ 
        &= 2^{-|S|} \cdot \id.
    \end{align*}
    This implies
    \begin{align*}
        \E{x}{\left\langle 1^S \left| \Tr_{S^c} \Br{ \br{\rho^x_{\tilde{C}_{\leq t}}}^2 } \right| 1^S\right\rangle} 
        =  \left\langle 1^S \left| \E{x}{\Tr_{S^c} \Br{ \br{\rho^x_{\tilde{C}_{\leq t}}}^2 }} \right| 1^S\right\rangle 
        = 2^{-|S|}.
    \end{align*}
    Therefore removing such a gate induces an error of at most $4\cdot 2^{-k/2}$.
    This is because for any quantum channel $\Phi$ that outputs a single classical bit in $\set{0,1}$, we have
    \begin{align*} 
        \expec{x\sim\set{0,1}^n}{\abs{\Phi(\rho_{\tilde{C}_{\le t}}^x) - \Phi(\CZGr \rho_{\tilde{C}_{\le t}}^x \CZGr^{\dagger})}^2}
        &\le \expec{x \sim \set{0,1}^n}{\norm{\rho_{\tilde{C}_{\le t}}^x - \CZGr \rho_{\tilde{C}_{\le t}}^x \CZGr^{\dagger}}_1^2} \\
        &\leq 2\expec{x \sim \set{0,1}^n} {\norm{\rho_{\tilde{C}_{\le t}}^x - \CZGr \rho_{\tilde{C}_{\le t}}^x }_1^2 + {\norm{\rho_{\tilde{C}_{\le t}}^x - \rho_{\tilde{C}_{\le t}}^x \CZGr^{\dagger}}_1^2}} \\
        &= 16 \expec{x \sim \set{0,1}^n} {\norm{\ketbra{1^T} \rho_{\tilde{C}_{\le t}}^x }_1^2} \\
        &\le 16 \expec{x \sim \set{0,1}^n} {\bra{1^S} \Tr_{S^c} \Br{\rho_{\tilde{C}_{\le t}}^x} \ket{1^S}^2} \\
        &= 16 \expec{\substack{x_I \sim \set{0,1}^I\\x_{I^\prime}\sim\set{0,1}^{I^\prime}}} {\bra{1^S} \Tr_{S^c} \Br{\rho_{\tilde{C}_{\le t}}^{x_Ix_{I^\prime}}} \ket{1^S}^2}  \\
        &\leq 16 \expec{\substack{x_I \sim \set{0,1}^I\\x_{I^\prime}\sim\set{0,1}^{I^\prime}}}{ \bra{1^S} \br{\Tr_{S^c} \Br{\rho_{\tilde{C}_{\le t}}^{x_Ix_{I^\prime}}}}^2 \ket{1^S} } \\
        &=  16 \expec{x_{I^\prime}\sim \set{0,1}^{I^\prime}}{ \bra{1^S} \E{x_I \sim \set{0,1}^I}{ \br{\Tr_{S^c} \Br{\rho_{\tilde{C}_{\le t}}^{x_Ix_{I^\prime}}}}^2 } \ket{1^S}} \\
        &\leq 16 \cdot 2^{-k}.
    \end{align*}
    Since there are at most $n$ such gates, removing all gates whose support on $I_t$ has size at least $s$ contributes a total error of $4n \cdot 2^{-s/2}$. Let the resulting circuit be the $\tilde{C}_{\leq t+1}$. Conditions (1) and (4) are now satisfied.

    Next, to ensure condition (3) holds, we remove the parts of the system that interact with $I_t^c$, and we choose the remaining part of $I_t$ as $I_{t+1}$. In this layer, $I_{t+1}$ does not interact with any other part, so it stays in a completely mixed state.
 In the \DQAC\ circuit, there are at most 2 gates in layer $t+1$ that act on both $I_t$ and $I_t^c$ (one at each boundary of the contiguous set $I_t$) From condition (4), these two gates intersect $I_t$ on at most $2s$ qubits combined.  This construction also ensures condition (2) holds since $|I_{t+1}| \ge |I_t| - 2s \geq n - 2(t+1)\cdot s$.

    After $d$ layers, we have $|I_d| \geq n - 2d\cdot s$ and every gate in the approximated circuit involves at most $s$ input qubits from $I_d$.
    Consequently, for any output qubit, its backward light-cone contains at most $2d \cdot s$ indices from $I_d$ since the  size of backward light-cone in \DQAC\ has a linear expansion.

    As long as $2d \cdot s < |I_d|$, the output depends on strictly fewer than $n$ bits, meaning it computes \CParityn\ with probability at most $1/2$. Setting $s = n / 5d$, the condition $2d \cdot s < |I_d|$ holds. Combining this with the approximation error, we conclude:
    \begin{align*}
        \Pr_{x,C}  [g_C(x) = \CParity{n}(x)] \leq \frac{1}{2} + 8dn \cdot 2^{-n/10d}.
    \end{align*}
\end{proof}

\subsection{Lower Bound on Input-Dependent Cat State Synthesis in \texorpdfstring{\DQAC}{1D-QAC0}}

In this subsection, we study the computational power of \DQAC\ in synthesizing input-dependent quantum states.
We focus on the input-dependent cat state.
Specifically, we aim to construct a quantum circuit $C$ such that
\begin{align}
    C|x\rangle_I |0\rangle_A = \ket{\Cat_x}_{I} \ket{\psi_x}_A,
\end{align}
where $\ket{\Cat_x} = \frac{1}{\sqrt{2}} (|x\rangle + |\bar{x}\rangle)$ is the input-dependent cat state,
and $\ket{\psi_x}_A$ is an arbitrary state.

Here we assume that the cat state is synthesized in-place, i.e.,  the input qubits and output qubits are at the same position.
We leave more general case in future work.


The hardness of computing \parity\ does not automatically imply the hardness of this task.
This is because we do not have \DQACz\ reduction from computing the unitary \Parityn\ to this state synthesis problem. 
We need a new proof to demonstrate that \DQACz\ cannot synthesize the input-dependent cat state.
Below we present a lower bound not only for synthesizing the input-dependent cat state, but also for synthesizing the input-dependent nekomata state which is defined as 
\begin{align*}
    \ket{\sigma^x} = \frac{1}{\sqrt{2}} (\ket{x} \ket{\psi_{x,0}} + \ket{\bar{x}} \ket{\psi_{x,1}})
\end{align*}
where $\ket{\psi_{x,0}}$ and $\ket{\psi_{x,1}}$ are some arbitrary states.

\begin{theorem}\label{thm: 1D QAC0 can not input related cat state}
    Let $n,d \geq 1$ be integers and $0 < \ve < 1$.
     Let $C$ be a depth-$d$ \DQAC\ circuit with input qubits indexed by $I$ where $|I|=n$,
    and ancilla qubits indexed by $A$.
     Then, for any $x\in\set{0,1}^I$ and input-dependent nekomata state $\ket{\sigma^x} = \frac{1}{\sqrt{2}} (\ket{x} \ket{\psi_{x,0}} + \ket{\bar{x}} \ket{\psi_{x,1}})$,
     let $\ket{\rho^x_C}$ be the output of the circuit $C$ given input $x$,
     we have
     \begin{align*}
         \E{x}{ \norm{ \Tr_{A}[\rho_C^x] - \ketbra{\sigma^x} }_1 } \geq \frac{1}{8} - 2^{-\Omega(n/t^{d}-1)} - 16d\ve 
     \end{align*}
     where $t = \log (n / \ve)$.
\end{theorem}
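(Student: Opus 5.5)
We first replace $C$ by the separable circuit $\tilde{C}$ from \cref{lemma: 1D-QAC0 separable lemma}, applied with $s=t=\log(n/\ve)$. This gives a set $S\subseteq I$ with $|S|\geq n/t^d$ and
\begin{equation*}
\expec{x}{\norm{\rho_{\tilde{C}}^x-\rho_C^x}_1}\le 16d\ve .
\end{equation*}
Partial trace is contractive, so the triangle inequality gives
\begin{equation*}
\expec{x}{\norm{\Tr_A[\rho_C^x]-\ketbra{\sigma^x}}_1}\ \ge\ \expec{x}{\norm{\Tr_A[\rho_{\tilde{C}}^x]-\ketbra{\sigma^x}}_1}-16d\ve .
\end{equation*}
It therefore suffices to show that the $S$-separable circuit $\tilde{C}$ is at distance at least $1/8-2^{-\Omega(|S|-1)}$ on average.

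For this we use a two-qubit witness. Because the cat state is synthesized in place, input qubit $i$ is also output position $i$. Pick two qubits $i\neq j\in S$. The forward light-cones $S_i$ and $S_j$ in $\tilde{C}$ are disjoint. Hence, for each fixing $z$ of the inputs outside $\{i,j\}$, the reduced state of $\tilde{C}$ on any output set $O$ that meets $S_i$ only at $i$ and $S_j$ only at $j$ depends on $x_i$ alone through the qubits in $S_i$ and on $x_j$ alone through those in $S_j$. Equivalently, the correlation between the $S_i$-part and the $S_j$-part of the output cannot carry information about $x_i\oplus x_j$.

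The target state behaves differently. Tracing the nekomata state down to the output positions $i,j$ and measuring $Z_iZ_j$ gives $(-1)^{x_i\oplus x_j}$ deterministically, since both branches $x$ and $\bar{x}$ share the same relative parity. I will use the witness $O_{ij}=Z_iZ_j\,\chi(x_i\oplus x_j)$, where $\chi(b)=(-1)^b$.

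Under the target, $O_{ij}$ has expectation $1$. To bound it under $\tilde{C}$, write its output on $S_i\cup S_j\cup R$ as $G_i\otimes G_j$ acting on the light-cone regions together with the shared, $x$-independent ancilla state $G_A\ket{\phi}$, as in the proof of \cref{cor: separable leads to small sepectral}. Then
\begin{equation*}
\langle Z_iZ_j\rangle_{x}=\Tr\Br{(A_{x_i}\otimes B_{x_j})\,\tau},
\end{equation*}
where $A_{x_i}$ and $B_{x_j}$ are Heisenberg-evolved $Z$ operators with norm at most $1$ and $\tau$ is fixed. Averaging $\chi(x_i\oplus x_j)\langle Z_iZ_j\rangle$ over uniform $x_i,x_j$ gives
\begin{equation*}
\Tr\Br{\br{\tfrac{A_0-A_1}{2}\otimes\tfrac{B_0-B_1}{2}}\tau}.
\end{equation*}
Its absolute value is not automatically small; by a CHSH/Tsirelson-type bound it can be as large as $1$. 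So a single pair does not suffice.

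To fix this, use many pairs and the quantity $\langle \bigotimes_{i\in S}Z_i\rangle\,\chi(\bigoplus_{i\in S}x_i)$. Under the target its value is $1$ when $|S|$ is even. Under $\tilde{C}$, the same product decomposition expresses it as
\begin{equation*}
\Tr\Big[\Big(\bigotimes_{i\in S}\tfrac{A^{(i)}_0-A^{(i)}_1}{2}\Big)\tau\Big].
\end{equation*}
This is the point where an exponentially small bound must come from. The fact I would exploit is that the input qubit $i$ itself enters $S_i$ as a fresh product register. Applying \cref{cor: separable leads to small sepectral} to $X=S$, the states summed over all $x_S$ are dominated by $\id$. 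I expect a Hölder argument with \cref{prop: Holder}, similar to the proof of \cref{lemma: Erasure lemma}, to give
\begin{equation*}
\abs{\expec{x_S}{\chi(x_S)\,\langle Z^{\otimes S}\rangle_x}}\le 2^{-\Omega(|S|-1)}.
\end{equation*}
Combined with the target value $1$, this yields a trace-distance gap. The constant $1/8$ then comes from converting the expectation of a norm-$1$ observable into a trace distance and from handling the odd-$|S|$ case by dropping one element, which explains the $-1$ in the exponent.

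I expect this parity-correlation bound to be the main obstacle. The intermediate idea that separability alone kills two-point correlations is false because of the shared entangled ancilla, so the decay really has to come from the fact that the $|S|$ inputs are uniformly random fresh qubits. If the Hölder route does not give exponential decay, my fallback is to measure all output positions in $S$ in the computational basis. In the target, the outcomes come from the distribution $\tfrac12(\delta_{x_S}+\delta_{\bar{x}_S})$, whose parity relative to $x_S$ is deterministic. I would then argue via the light-cone structure that $\tilde{C}$ produces outcome bits whose joint parity with $x_S$ is nearly unbiased, which bounds the total-variation distance of the measured outcomes from below.
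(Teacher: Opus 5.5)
Your proposal has a genuine gap. The parity-correlation estimate you aim for is false, and entangled ancillas are not the reason. Suppose $C$ consists only of single-qubit gates, e.g.\ the identity. Then $C$ is $S$-separable for every $S$, nothing is erased, and its output on input $x$ is $\ket{x}$. Hence $\langle Z^{\otimes S}\rangle_x=\chi(x_S)$, so $\mathbb{E}_{x_S}\bigl[\chi(x_S)\langle Z^{\otimes S}\rangle_x\bigr]=1$, which is exactly the target value. The same circuit gives $Z_iZ_j\,\chi(x_i\oplus x_j)\equiv 1$, so your two-qubit witness also fails for a more basic reason than Tsirelson. Your fallback fails too, because here the parity of the measured bits relative to $x_S$ is deterministic.

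More generally, no witness of your type can work, i.e.\ no $\mathbb{E}_x\Tr[W_x\rho^x]$ with $W_x$ diagonal on $I$ and chosen independently of the circuit. On the target it equals $\tfrac12\bigl(\mathbb{E}_x\bra{x}W_x\ket{x}+\mathbb{E}_x\bra{\bar x}W_x\ket{\bar x}\bigr)$. This is the average of its values on the identity circuit and on a single layer of $X$ gates, both trivially separable. So no one-sided gap between the target and all separable circuits is possible.

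What is missing is a nonlinear statistic together with genuine independence of the output bits. The paper uses $\mathbb{E}_x\bigl[\Tr[\Pi^I_x\rho^x]\cdot\Tr[\Pi^I_{\bar x}\rho^x]\bigr]$. This equals $1/4$ on the nekomata state, since each of $x,\bar x$ carries weight $1/2$, and it changes by at most $2\norm{\rho-\sigma}_1$ under perturbation.

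Forward separability alone does not make the outputs independent; your own decomposition shows that the shared environment state $\tau$ can correlate them. So the paper uses \cref{lem : 1D QAC0 state subset lemma} to keep every other element of $S$ along the line. This gives a set $T$ with $|T|\ge|S|/2$ whose backward light-cones are disjoint, because 1D light-cones are intervals.

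Since the initial state $\ket{x}\ket{0}$ is a product state, the computational-basis outcomes of $\tilde C$ on $T$ are independent, with some marginals $p_i$. Therefore $\Tr[\Pi^T_{x_T}\rho^x_{\tilde C}]\cdot\Tr[\Pi^T_{\bar x_T}\rho^x_{\tilde C}]=\prod_{i\in T}p_i(1-p_i)\le 4^{-|T|}$, and this upper-bounds the same product taken with $\Pi^I$. Comparing $1/4$ with $4^{-|T|}+32d\ve$ gives the constant $1/8$ and the exponentially small term.
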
 

\begin{remark}
We note that this bound does not give a lower bound for generating cat state in \DQACz. When given a cat state $\ket{\Cat}$, we can construct $\ket{\Cat_x}$ through the following way:
\begin{align*}
    (\operatorname{CNOT}^{\otimes n}) (|x\rangle \ket{\Cat}) = \ket{x} \ket{\Cat_x}
\end{align*} where the $i$-th $\operatorname{CNOT}$ controls on $x_i$ and targets at the $i$-th qubit in the $\ket{\Cat}$.
However, these $\operatorname{CNOT}$ gates can not be arranged in a $\DQAC$ circuit.
\end{remark}

The core idea of the proof is analogous to that used for the \parity\ function. A more refined analysis is required to establish stronger properties.
In the context of computing \parity, the crucial property is that measurement outcomes at $t$ positions depend on at most $t$ inputs.
However, these measurements may also act on some qubits that are already correlated,
e.g., some fixed EPR pairs independent of the $t$ inputs.
Hence, the outcomes may still be correlated to some external source.
This is insufficient to derive a contradiction for the state synthesis task. 
To address this, we demonstrate that by imposing further restrictions, the measurement outcomes at $t$ positions become \textit{independently} correlated with $t$ inputs.

\begin{lemma} \label{lem : 1D QAC0 state subset lemma}
  Let $C$ be a $\DQAC$ circuit working on $n$ qubits.
  Let $I$ be a subset of the qubits such that $C$ is $I$-separable.
  Then there exists a subset $\tilde{I}\subseteq I$,
  such that $\abs{\tilde{I}} \ge \abs{I}/2$,
  and the backward light-cones of qubits in $\tilde{I}$ are disjoint.
\end{lemma}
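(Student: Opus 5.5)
The plan is to reduce the statement to \cref{prop: degree 2 independent set}. List the qubits of $I$ by position as $i_1<i_2<\dots<i_m$. Define a graph $G$ on $I$ with an edge $(i_a,i_c)$ whenever the backward light-cones $T_{i_a}$ and $T_{i_c}$ intersect. The goal is to show that every edge of $G$ joins two consecutive qubits $i_a,i_{a+1}$. Then $G$ is a subgraph of a path, hence acyclic with maximum degree at most $2$. Taking alternate vertices gives an independent set $\tilde I$ with $\abs{\tilde I}\ge \abs{I}/2$, and its backward light-cones are pairwise disjoint by construction.

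First, I would record the one-dimensional structure. For any qubit $k$ and any layer $t$, let $S_k^{(t)}$ be the set of qubits reachable from $k$ after $t$ layers. Each gate acts on a contiguous interval, so by induction on $t$ every $S_k^{(t)}$ is an interval containing $k$, and these intervals grow monotonically in $t$. Next, $I$-separability says the final cones $S_{i_a}$ are pairwise disjoint, so the layer-$t$ intervals $S_{i_a}^{(t)}$ are also pairwise disjoint. These intervals contain $i_a$, so they are linearly ordered in the same order as $i_1<\dots<i_m$, at every layer.

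The key step is to show that $T_{i_a}\cap T_{i_c}=\emptyset$ whenever $c\ge a+2$. Suppose instead that some qubit $k$ satisfies $i_a,i_c\in S_k$. Consider the layer-by-layer intervals $S_k^{(t)}$. At $t=0$ this interval is $\{k\}$. At the end it contains qubits strictly to the left and strictly to the right of the interval $S_{i_{a+1}}^{(d)}$, since that interval lies strictly between the positions where the paths reach $i_a$ and $i_c$. Take the first layer $t$ at which $S_k^{(t)}$ meets or straddles $S_{i_{a+1}}^{(t)}$. The gate responsible is an interval gate acting on a qubit of $S_k^{(t-1)}$, and it reaches past that position. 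Such a gate must touch $S_{i_{a+1}}^{(t-1)}$, because it is contiguous, so it merges its whole support into $S_{i_{a+1}}^{(t)}$. From there I would derive that $S_{i_{a+1}}$ contains the paths leading to $i_a$ or to $i_c$ through the neighbouring cones, contradicting disjointness of $S_{i_a},S_{i_{a+1}},S_{i_c}$. More concretely, I would track the two paths from $k$ that end at $i_a$ and $i_c$. Each path moves continuously along the line in the sense that every step stays inside one gate interval. One of the two paths must cross the region occupied by $S_{i_{a+1}}^{(t)}$ at some layer, and at the crossing layer the gate used lies in that cone.

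I expect this crossing argument to be the main obstacle. The subtle points are the direction of time, since backward light-cones are defined via forward paths, and the fact that a path may jump over a cone in a single long gate. My first attempt would use the observation that a single gate jumping over $S_{i_{a+1}}^{(t-1)}$ necessarily contains it, and hence lies in that cone. After that, the path position at each later layer lies inside $S_{i_{a+1}}$, which forces the endpoint $i_a$ or $i_c$ into $S_{i_{a+1}}$, a contradiction. If this direct path-tracking proves awkward, I would fall back to the weaker claim that $G$ has maximum degree at most $2$ and is a subgraph of the path graph. That claim is all \cref{prop: degree 2 independent set} needs.
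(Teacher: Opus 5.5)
Your proposal is correct and shares the paper's skeleton: sort $I$, keep every other element, and rule out $T_{i_a}\cap T_{i_c}\neq\emptyset$ for $c\ge a+2$ through a contradiction with separability routed via the middle qubit $i_{a+1}$. The difference is in how the key step is proved.

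\textbf{The paper's route.} Backward light-cones of a \DQAC\ circuit are also intervals, since they are forward cones of the time-reversed circuit. So if $T_{i_a}$ and $T_{i_c}$ meet, their union is an interval containing $i_{a+1}$. Hence $i_a\in S_{i_{a+1}}$ or $i_c\in S_{i_{a+1}}$. This contradicts disjointness of forward cones, because each qubit lies in its own cone.

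\textbf{Your route.} You stay with forward cones and use a crossing argument, which works once made precise:
\begin{itemize}
\item If $k<i_{a+1}$, follow the path from $k$ to $i_c$. If $k>i_{a+1}$, follow the path to $i_a$. The case $k=i_{a+1}$ is immediate.
\item Suppose the path never lies in $S^{(t)}_{i_{a+1}}$. Take the first layer $t$ at which it is to the right of that interval; at layer $t-1$ it was to the left.
\item The step between these layers then used an interval gate containing all of $S^{(t-1)}_{i_{a+1}}$. That gate's whole support, including the new position, therefore lies in $S^{(t)}_{i_{a+1}}$, which is a contradiction.
\item Once the path is inside the cone, forward cones are closed under forward steps, so $i_c\in S_{i_{a+1}}$.
\end{itemize}
In effect you re-derive, for this one instance, the interval property of backward cones that the paper gets for free by reversing time. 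Your argument is longer but never leaves forward cones.

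\textbf{Two small points.} Your first formulation, the first layer at which $S_k^{(t)}$ meets $S^{(t)}_{i_{a+1}}$, gives nothing by itself: an arbitrary qubit's cone may overlap $S_{i_{a+1}}$ freely. Only the path-tracking version does the job. Also, your ``fallback'' is not weaker, since $G$ being a subgraph of the path on $i_1,\dots,i_m$ is exactly the key claim.

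\textbf{A shorter finish than either.} For distinct $i,k\in I$ one has $k\notin T_i$, since otherwise $i\in S_i\cap S_k$. So the interval $T_{i_a}$ lies strictly between $i_{a-1}$ and $i_{a+1}$. This makes the backward cones of non-consecutive elements disjoint at once.
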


\begin{proof}[Proof of \cref{lem : 1D QAC0 state subset lemma}]
  For $i\in I$, let $T_i$ be the backward light-cone of $i$.
  Suppose $\abs{I} = k$, and $I = \set{x_1, \dots, x_k}$, such that $x_1\le x_2\le\dots\le x_k$.
  Let $\tilde{I}$ be the odd indices in $I$.
  That is, let $\tilde{I} = \set{x_1, x_3, \dots}\subseteq I$.
  Clearly $\abs{\tilde{I}} \ge \abs{I}/2$.
  Now suppose on the contrary,
  there exists two indices $i< j\in\tilde{I}$ such that $T_i\cap T_j\neq\emptyset$.
  Remember that for a $\DQAC$ circuit,
  the multi-qubit gates acts locally on a continuous interval of qubits.
  Hence the fact that $T_i\cap T_j\neq \emptyset$ implies that $T_i\cap T_j=[l,r]$ for some indexes $l\le r$.
  By the choice of $\tilde{I}$,
  there exists an index $k\in I$ such that $k\in[i+1, j-1]\subseteq[l,r]=T_i\cap T_j$,
  which implies $k\in T_i$ or $k\in T_j$.
  Without loss of generality, assume $k\in T_i$.
  By definition, this is equivalent to the fact that $i$ is in the forward light-cone of $k$.
  However, since both $i$ and $k$ are in $I$,
  their forward light-cones must be disjoint. This leads to a contradiction.
\end{proof}

\begin{proof}[Proof of \cref{thm: 1D QAC0 can not input related cat state}]
    With \cref{lemma: 1D-QAC0 separable lemma} and \cref{lem : 1D QAC0 state subset lemma}
    there exists a $T$-separable \DQAC\ circuit $\tilde{C}$ such that 
    \begin{itemize}
        \item $|T| \geq \frac{n}{2t^d}$;
        \item $\E{x}{\norm{\rho_{\tilde{C}}^x - \rho_{C}^x}_1} \leq 16 d\ve$;
        \item qubits in $T$ have disjoint backward light-cones.
    \end{itemize}

    Let $\bar{x}$ denote $\bar{x}_1 \cdots \bar{x}_n$.
    Define $\Pi_x^I = \ketbra{x}_I \otimes \id_{I^c}$ for $x \in \set{0,1}^I$ and $\Pi_x^T = \ketbra{x}_T \otimes \id_{T^c}$ for $x \in \set{0,1}^T$. 
    For a qubit $i\in T$,
    the measurement outcome on qubit $i$ completely depends on the qubits in its backward light-cone.
    Hence for the quantum state $\rho_{\tilde{C}}^x$,
    for any input $x\in\set{0,1}^I$,
    the measurement outcomes on each qubit in $T$ are completely independent.
    Now

    \begin{align*}
        &\expec{x}{  \Tr\Br{\Pi_x^I \rho_C^x} \cdot \Tr\Br{\Pi_{\bar{x}}^I \rho_C^x}  } \\
        &\leq \expec{x}{  \Tr\Br{\Pi_x^I \rho_{\tilde{C}}^x} \cdot \Tr\Br{\Pi_{\bar{x}}^I \rho_{\tilde{C}}^x}  } \\
        &+ \expec{x}{ \Tr\Br{\Pi_x^I (\rho_C^x - \rho_{\tilde{C}}^x)} \cdot  \Tr\Br{\Pi_{\bar{x}}^I \rho_C^x}  }
        + \expec{x}{ \abs{\Tr\Br{\Pi_{\bar{x}}^I (\rho_C^x - \rho_{\tilde{C}}^x)}} \cdot  \Tr\Br{\Pi_{x}^I \rho_{\tilde{C}}^x}  }
        \\
        &\leq \expec{x}{  \Tr\Br{\Pi_x^I \rho_{\tilde{C}}^x} \cdot \Tr\Br{\Pi_{\bar{x}}^I \rho_{\tilde{C}}^x}  } + 2\norm{ \rho_C^x - \rho_{\tilde{C}}^x}_1 \\
        &\leq \expec{x}{  \Tr\Br{\Pi_x^I \rho_{\tilde{C}}^x} \cdot \Tr\Br{\Pi_{\bar{x}}^I \rho_{\tilde{C}}^x} } + 32d\ve \\
        &\leq \expec{\substack{x_T\sim\set{0,1}^T\\x_{T^c}\sim\set{0,1}^{T^c}}}{  \Tr\Br{\Pi^T_{x_T} \rho_{\tilde{C}}^{x_Tx_{T^c}}} \cdot \Tr\Br{\Pi^T_{\bar{x}_T} \rho_{\tilde{C}}^{x_Tx_{T^c}}}  } + 32d\ve \\
        &\leq 4^{-|T|} + 32d\ve.
    \end{align*}
    The last inequality holds since,
    for any fixed input $x=x_Tx_{T^c}\in\set{0,1}^I$,
    the computational basis measurement outcomes on qubits $i\in T$ are mutually independent.
    Let $p_i$ be the probability that the computational basis measurement on qubit $i$ is $1$,
    we then have 
    \begin{equation*}
    \Tr\Br{\Pi^T_{x_T} \rho_{\tilde{C}}^{x_Tx_{T^c}}} \cdot \Tr\Br{\Pi^T_{\bar{x}_T} \rho_{\tilde{C}}^{x_Tx_{T^c}}} = \prod_{i\in T} p_i\cdot (1-p_i) \le 4^{-\abs{T}}.
    \end{equation*}
    On the other hand, let $\sigma^x = \ketbra{\sigma^x}$,
     \begin{align*}
        \E{x}{  \Tr\Br{\Pi_x^I \sigma^x} \cdot \Tr\Br{\Pi_{\bar{x}}^I \sigma^x}  } = \frac{1}{4}.
    \end{align*}
    From
    \begin{align*}
        \E{x}{  \Tr\Br{\Pi_x^I \sigma^x} \cdot \Tr\Br{\Pi_{\bar{x}}^I \sigma^x}  } 
        \leq \E{x}{  \Tr\Br{\Pi_x^I \rho_C^x} \cdot \Tr\Br{\Pi_{\bar{x}}^I \rho_C^x} } + 2\norm{ \rho_C^x - \sigma^x}_1,
    \end{align*}
    we conclude
    \begin{align*}
        \E{x}{\norm{ \rho_C^x - \sigma^x}_1} \geq \frac{1}{8} - 2^{-2|T| - 1} - 16d\ve.
    \end{align*}
\end{proof}

\bibliographystyle{alpha}
\bibliography{references}

@inproceedings{ADOY24,
author = {Anshu, Anurag and Dong, Yangjing and Ou, Fengning and Yao, Penghui},
title = {On the Computational Power of {QAC0} with Barely Superlinear Ancillae},
year = {2025},
isbn = {9798400715105},
publisher = {Association for Computing Machinery},
address = {New York, NY, USA},
url = {https://doi.org/10.1145/3717823.3718189},
doi = {10.1145/3717823.3718189},
booktitle = {Proceedings of the 57th Annual ACM Symposium on Theory of Computing},
pages = {1476--1487},
numpages = {12},
location = {Prague, Czechia},
series = {STOC '25}
}

@inproceedings{NPVY24,
author = {Nadimpalli, Shivam and Parham, Natalie and Vasconcelos, Francisca and Yuen, Henry},
title = {On the Pauli Spectrum of {QAC0}},
year = {2024},
isbn = {9798400703836},
publisher = {Association for Computing Machinery},
address = {New York, NY, USA},
url = {https://doi.org/10.1145/3618260.3649662},
doi = {10.1145/3618260.3649662},
booktitle = {Proceedings of the 56th Annual ACM Symposium on Theory of Computing},
pages = {1498--1506},
numpages = {9},
location = {Vancouver, BC, Canada},
series = {STOC 2024}
}

@inproceedings{vasconcelos2024learning,
    author ={Vasconcelos, Francisca and Huang, Hsin-Yuan} ,
    title = {Learning shallow quantum circuits with many-qubit gates} ,
    booktitle = {Proceedings of the 38th Annual Conference on Learning Theory (COLT)} ,
    year = 2025
}

@book{ODonnell2014, place={Cambridge}, title={Analysis of Boolean Functions}, publisher={Cambridge University Press}, author={O'Donnell, Ryan}, year={2014}}

@article{moor,
author = {Green, Frederic and Homer, Steve and Moore, Cristopher and Pollett, Christopher},
year = {2001},
month = {12},
pages = {},
title = {{Counting, Fanout, And The Complexity Of Quantum {ACC}}},
volume = {2},
journal = {Quantum Information and Computation},
doi = {10.26421/QIC2.1-3}
}

@article{moore1999quantum,
  title={Quantum circuits: Fanout, parity, and counting},
  author={Moore, Cristopher},
  journal={arXiv preprint quant-ph/9903046},
  year={1999}
}

@book{watrous2018theory,
  title={The theory of quantum information},
  author={Watrous, John},
  year={2018},
  publisher={Cambridge university press}
}

@article{10.1016/j.ipl.2011.05.002,
author = {Bera, Debajyoti},
title = {A lower bound method for quantum circuits},
year = {2011},
issue_date = {August, 2011},
publisher = {Elsevier North-Holland, Inc.},
address = {USA},
volume = {111},
number = {15},
issn = {0020-0190},
url = {https://doi.org/10.1016/j.ipl.2011.05.002},
doi = {10.1016/j.ipl.2011.05.002},
journal = {Inf. Process. Lett.},
month = {aug},
pages = {723--726},
numpages = {4}
}

@InProceedings{rosenthal:LIPIcs.ITCS.2021.32,
  author =	{Rosenthal, Gregory},
  title =	{{Bounds on the QAC$^0$ Complexity of Approximating Parity}},
  booktitle =	{12th Innovations in Theoretical Computer Science Conference (ITCS 2021)},
  pages =	{32:1--32:20},
  series =	{Leibniz International Proceedings in Informatics (LIPIcs)},
  ISBN =	{978-3-95977-177-1},
  ISSN =	{1868-8969},
  year =	{2021},
  volume =	{185},
  editor =	{Lee, James R.},
  publisher =	{Schloss Dagstuhl -- Leibniz-Zentrum f{\"u}r Informatik},
  address =	{Dagstuhl, Germany},
  URL =		{https://drops.dagstuhl.de/entities/document/10.4230/LIPIcs.ITCS.2021.32},
  URN =		{urn:nbn:de:0030-drops-135713},
  doi =		{10.4230/LIPIcs.ITCS.2021.32}
}

@article{10.5555/2011679.2011682,
author = {Fang, M. and Fenner, S. and Green, F. and Homer, S. and Zhang, Y.},
title = {Quantum lower bounds for fanout},
year = {2006},
issue_date = {January 2006},
publisher = {Rinton Press, Incorporated},
address = {Paramus, NJ},
volume = {6},
number = {1},
issn = {1533-7146},
journal = {Quantum Info. Comput.},
month = {jan},
pages = {46--57},
numpages = {12}
}

@article{DBLP:journals/corr/abs-2005-12169,
  author       = {Daniel Pad{\'{e}} and
                  Stephen A. Fenner and
                  Daniel Grier and
                  Thomas Thierauf},
  title        = {Depth-2 {QAC} circuits cannot simulate quantum parity},
  journal      = {CoRR},
  volume       = {abs/2005.12169},
  year         = {2020},
  url          = {https://arxiv.org/abs/2005.12169},
  eprinttype    = {arXiv},
  eprint       = {2005.12169},
  bibsource    = {dblp computer science bibliography, https://dblp.org}
}

@article{
doi:10.1126/science.aar3106,
author = {Sergey Bravyi  and David Gosset  and Robert K{\"o}nig },
title = {Quantum advantage with shallow circuits},
journal = {Science},
volume = {362},
number = {6412},
pages = {308-311},
year = {2018},
doi = {10.1126/science.aar3106},
URL = {https://www.science.org/doi/abs/10.1126/science.aar3106},
eprint = {https://www.science.org/doi/pdf/10.1126/science.aar3106}}

@inproceedings{10.1145/3313276.3316404,
author = {Watts, Adam Bene and Kothari, Robin and Schaeffer, Luke and Tal, Avishay},
title = {Exponential separation between shallow quantum circuits and unbounded fan-in shallow classical circuits},
year = {2019},
isbn = {9781450367059},
publisher = {Association for Computing Machinery},
address = {New York, NY, USA},
url = {https://doi.org/10.1145/3313276.3316404},
doi = {10.1145/3313276.3316404},
booktitle = {Proceedings of the 51st Annual ACM SIGACT Symposium on Theory of Computing},
pages = {515--526},
numpages = {12},
location = {Phoenix, AZ, USA},
series = {STOC 2019}
}

@inproceedings{10.1145/12130.12132,
author = {H{\aa}stad, Johan},
title = {Almost optimal lower bounds for small depth circuits},
year = {1986},
isbn = {0897911938},
publisher = {Association for Computing Machinery},
address = {New York, NY, USA},
url = {https://doi.org/10.1145/12130.12132},
doi = {10.1145/12130.12132},
booktitle = {Proceedings of the Eighteenth Annual ACM Symposium on Theory of Computing},
pages = {6--20},
numpages = {15},
location = {Berkeley, California, USA},
series = {STOC '86}
}

@article{hoyer2005quantum,
  title={Quantum fan-out is powerful},
  author={H{\o}yer, Peter and {\v{S}}palek, Robert},
  journal={Theory of computing},
  volume={1},
  number={1},
  pages={81--103},
  year={2005},
  publisher={Theory of Computing Exchange}
}

@article{Chuang_1997,
   title={Prescription for experimental determination of the dynamics of a quantum black box},
   volume={44},
   ISSN={1362-3044},
   url={http://dx.doi.org/10.1080/09500349708231894},
   DOI={10.1080/09500349708231894},
   number={11--12},
   journal={Journal of Modern Optics},
   publisher={Informa UK Limited},
   author={Chuang, Isaac L. and Nielsen, M. A.},
   year={1997},
   month=nov, pages={2455--2467} }

@article{nikolaeva2025scalable,
  title={Scalable improvement of the generalized {T}offoli gate realization using trapped-ion-based qutrits},
  author={Nikolaeva, Anastasiia S and Zalivako, Ilia V and Borisenko, Alexander S and Semenin, Nikita V and Galstyan, Kristina P and Korolkov, Andrey E and Kiktenko, Evgeniy O and Khabarova, Ksenia Yu and Semerikov, Ilya A and Fedorov, Aleksey K and others},
  journal={Physical Review Letters},
  volume={135},
  number={6},
  pages={060601},
  year={2025},
  publisher={APS}
}

@article{goel2021native,
  title={Native multiqubit {T}offoli gates on ion trap quantum computers},
  author={Goel, Nilesh and Freericks, JK},
  journal={Quantum Science and Technology},
  volume={6},
  number={4},
  pages={044010},
  year={2021},
  publisher={IOP Publishing}
}

@article{rasmussen2020single,
  title={Single-step implementation of high-fidelity n-bit {T}offoli gates},
  author={Rasmussen, SE and Groenland, K and Gerritsma, R and Schoutens, K and Zinner, NT},
  journal={Physical Review A},
  volume={101},
  number={2},
  pages={022308},
  year={2020},
  publisher={APS}
}

@article{Gokhale2020QuantumFC,
  title={Quantum Fan-out: Circuit Optimizations and Technology Modeling},
  author={Pranav Gokhale and Samantha Koretsky and Shilin Huang and Swarnadeep Majumder and Andrew Drucker and Kenneth R. Brown and Frederic T. Chong},
  journal={2021 IEEE International Conference on Quantum Computing and Engineering (QCE)},
  year={2020},
  pages={276-290},
  url={https://api.semanticscholar.org/CorpusID:220403558}
}

@article{grier2026mathsfqac0containsmathsftc0with,
  title={$\mathsf{QAC}^0$ Contains $\mathsf{TC}^0$ (with Many Copies of the Input)},
  author={Grier, Daniel and Morris, Jackson and Wu, Kewen},
  journal={arXiv preprint arXiv:2601.03243},
  year={2026}
}

@misc{WillowSpec,
  author = {Google Quantum AI},
  title = {Willow Spec Sheet},
  howpublished = {\url{https://quantumai.google/static/site-assets/downloads/willow-spec-sheet.pdf}},
year = {2026},
  note = {Accessed: 2026-01-16}
}

@inproceedings{10.1145/3357713.3384332,
author = {Grier, Daniel and Schaeffer, Luke},
title = {Interactive shallow Clifford circuits: Quantum advantage against {NC$^1$} and beyond},
year = {2020},
isbn = {9781450369794},
publisher = {Association for Computing Machinery},
address = {New York, NY, USA},
url = {https://doi.org/10.1145/3357713.3384332},
doi = {10.1145/3357713.3384332},
booktitle = {Proceedings of the 52nd Annual ACM SIGACT Symposium on Theory of Computing},
pages = {875--888},
numpages = {14},
location = {Chicago, IL, USA},
series = {STOC 2020}
}

@InProceedings{watts2024unconditionalquantumadvantagesampling,
  author =	{Bene Watts, Adam and Parham, Natalie},
  title =	{{Unconditional Quantum Advantage for Sampling with Shallow Circuits}},
  booktitle =	{17th Innovations in Theoretical Computer Science Conference (ITCS 2026)},
  pages =	{17:1--17:12},
  series =	{Leibniz International Proceedings in Informatics (LIPIcs)},
  ISBN =	{978-3-95977-410-9},
  ISSN =	{1868-8969},
  year =	{2026},
  volume =	{362},
  editor =	{Saraf, Shubhangi},
  publisher =	{Schloss Dagstuhl -- Leibniz-Zentrum f{\"u}r Informatik},
  address =	{Dagstuhl, Germany},
  URL =		{https://drops.dagstuhl.de/entities/document/10.4230/LIPIcs.ITCS.2026.17},
  URN =		{urn:nbn:de:0030-drops-253048},
  doi =		{10.4230/LIPIcs.ITCS.2026.17}
}

@InProceedings{grier2025quantumadvantagesamplingshallow,
  author =	{Grier, Daniel and Kane, Daniel M. and Morris, Jackson and Ostuni, Anthony and Wu, Kewen},
  title =	{{Quantum Advantage from Sampling Shallow Circuits: Beyond Hardness of Marginals}},
  booktitle =	{17th Innovations in Theoretical Computer Science Conference (ITCS 2026)},
  pages =	{73:1--73:14},
  series =	{Leibniz International Proceedings in Informatics (LIPIcs)},
  ISBN =	{978-3-95977-410-9},
  ISSN =	{1868-8969},
  year =	{2026},
  volume =	{362},
  editor =	{Saraf, Shubhangi},
  publisher =	{Schloss Dagstuhl -- Leibniz-Zentrum f{\"u}r Informatik},
  address =	{Dagstuhl, Germany},
  URL =		{https://drops.dagstuhl.de/entities/document/10.4230/LIPIcs.ITCS.2026.73},
  URN =		{urn:nbn:de:0030-drops-253607},
  doi =		{10.4230/LIPIcs.ITCS.2026.73}
}

@article{fenner2025tightboundsdepth2qaccircuits,
  title={Tight bounds on depth-2 {QAC}-circuits computing parity},
  author={Fenner, Stephen and Grier, Daniel and Pad{\'e}, Daniel and Thierauf, Thomas},
  journal={arXiv preprint arXiv:2504.06433},
  year={2025}
}

@article{joshi2025improvedlowerboundsqac0,
  title={Improved Lower Bounds for {QAC0}},
  author={Joshi, Malvika Raj and Tal, Avishay and Vasconcelos, Francisca and Wright, John},
  journal={arXiv preprint arXiv:2512.14643},
  year={2025}
}

@inproceedings{Grier:2024xxt,
author = {Grier, Daniel and Morris, Jackson},
title = {Quantum Threshold Is Powerful},
year = {2026},
isbn = {9783959773799},
publisher = {Schloss Dagstuhl--Leibniz-Zentrum fuer Informatik},
address = {Dagstuhl, DEU},
url = {https://doi.org/10.4230/LIPIcs.CCC.2025.3},
doi = {10.4230/LIPIcs.CCC.2025.3},
booktitle = {Proceedings of the 40th Computational Complexity Conference},
articleno = {3},
series = {CCC '25}
}

@article{schuch2009computational,
  title={Computational complexity of interacting electrons and fundamental limitations of density functional theory},
  author={Schuch, Norbert and Verstraete, Frank},
  journal={Nature physics},
  volume={5},
  number={10},
  pages={732--735},
  year={2009},
  publisher={Nature Publishing Group UK London}
}

@article{10.5555/3179553.3179559,
author = {Piddock, Stephen and Montanaro, Ashley},
title = {The complexity of antiferromagnetic interactions and {2D} lattices},
year = {2017},
issue_date = {June 2017},
publisher = {Rinton Press, Incorporated},
address = {Paramus, NJ},
volume = {17},
number = {7--8},
issn = {1533-7146},
journal = {Quantum Info. Comput.},
month = jun,
pages = {636--672},
numpages = {37}
}

@article{dong2025linearsizeqac0channelslearning,
  title={Linear-Size {QAC0} Channels: Learning, Testing and Hardness},
  author={Dong, Yangjing and Ou, Fengning and Yao, Penghui},
  journal={arXiv preprint arXiv:2510.00593},
  year={2025}
}

@article{vasconcelos2026constantdepthunitarypreparationdicke,
  title={Constant-Depth Unitary Preparation of Dicke States},
  author={Vasconcelos, Francisca and Joshi, Malvika Raj},
  journal={arXiv preprint arXiv:2601.10693},
  year={2026}
}

@InProceedings{foxman2025randomunitariesconstantquantum,
  author =	{Foxman, Ben and Parham, Natalie and Vasconcelos, Francisca and Yuen, Henry},
  title =	{{Random Unitaries in Constant (Quantum) Time}},
  booktitle =	{17th Innovations in Theoretical Computer Science Conference (ITCS 2026)},
  pages =	{61:1--61:25},
  series =	{Leibniz International Proceedings in Informatics (LIPIcs)},
  ISBN =	{978-3-95977-410-9},
  ISSN =	{1868-8969},
  year =	{2026},
  volume =	{362},
  editor =	{Saraf, Shubhangi},
  publisher =	{Schloss Dagstuhl -- Leibniz-Zentrum f{\"u}r Informatik},
  address =	{Dagstuhl, Germany},
  URL =		{https://drops.dagstuhl.de/entities/document/10.4230/LIPIcs.ITCS.2026.61},
  URN =		{urn:nbn:de:0030-drops-253481},
  doi =		{10.4230/LIPIcs.ITCS.2026.61}
}

@inproceedings{10.1145/3618260.3649722,
author = {Huang, Hsin-Yuan and Liu, Yunchao and Broughton, Michael and Kim, Isaac and Anshu, Anurag and Landau, Zeph and McClean, Jarrod R.},
title = {Learning Shallow Quantum Circuits},
year = {2024},
isbn = {9798400703836},
publisher = {Association for Computing Machinery},
address = {New York, NY, USA},
url = {https://doi.org/10.1145/3618260.3649722},
doi = {10.1145/3618260.3649722},
booktitle = {Proceedings of the 56th Annual ACM Symposium on Theory of Computing},
pages = {1343--1351},
numpages = {9},
location = {Vancouver, BC, Canada},
series = {STOC 2024}
}

@INPROCEEDINGS{9719811,
  author={Coble, Nolan J. and Coudron, Matthew},
  booktitle={2021 IEEE 62nd Annual Symposium on Foundations of Computer Science (FOCS)}, 
  title={Quasi-polynomial Time Approximation of Output Probabilities of Geometrically-local, Shallow Quantum Circuits}, 
  year={2022},
  volume={},
  number={},
  pages={598-609},
  doi={10.1109/FOCS52979.2021.00065}}

@inproceedings{10.5555/646517.696323,
author = {H\o{}yer, Peter and Spalek, Robert},
title = {Quantum Circuits with Unbounded Fan-out},
year = {2003},
isbn = {3540006230},
publisher = {Springer-Verlag},
address = {Berlin, Heidelberg},
booktitle = {Proceedings of the 20th Annual Symposium on Theoretical Aspects of Computer Science},
pages = {234--246},
numpages = {13},
series = {STACS '03}
}

@INPROCEEDINGS{6597759,
  author={Takahashi, Yasuhiro and Tani, Seiichiro},
  booktitle={2013 IEEE Conference on Computational Complexity}, 
  title={Collapse of the Hierarchy of Constant-Depth Exact Quantum Circuits}, 
  year={2013},
  volume={},
  number={},
  pages={168-178},
  doi={10.1109/CCC.2013.25}}

\begin{appendices}
\appendix

\section{Exact Amplitude Amplification for Parity}\label{sec:exact-amplitude-amplification-of-parity}

We restate \cref{thm: QAC0 B-PARITY to E-PARITY} below and provide the proof.

\ExactAmplitudeAmplify*

\begin{proof}[Proof of \cref{thm: QAC0 B-PARITY to E-PARITY}]
  By \cite[Theorem 3.1]{rosenthal:LIPIcs.ITCS.2021.32},
  there is a $\QAC$ circuit $C_1$ with the same topology as $C$ that solves $p$-approximate clean \ket{\Cat_n}.
  That is, $C_1\ket{0^{n+a}}$ and $\ket{\Cat_n, 0^a}$ have fidelity at least $p$.
  Suppose
  \begin{equation*}
      C_1\ket{0^{n+a}} = a\ket{0^{n+a}} + b\ket{1^n0^a} + c\ket{\omega},
  \end{equation*}
  where $\abs{a}^2+\abs{b}^2+\abs{c}^2=1$,
  and the state $\ket{\omega}$ is orthogonal to $\ket{0^{n+a}}$ and $\ket{1^n0^a}$.
  By the assumption, we have
  \begin{equation*}
      \abs{a+b}^2 \ge 2p.
  \end{equation*}
  Since $\abs{a}\le 1, \abs{b}\le 1$, and $p > 1/2$,
  this implies that
  \begin{equation*}
      \abs{a} \ge \sqrt{2p} - 1 \quad\text{and}\quad\abs{b} \ge \sqrt{2p} - 1.
  \end{equation*}
  Take a two-qubit unitary $Q$ which satisfies
  \begin{equation*}
      Q\ket{00} = \alpha\ket{00} + \beta\ket{11} + \sqrt{1-\abs{\alpha}^2-\abs{\beta}^2}\ket{01},
  \end{equation*}
  where $\alpha, \beta$ are complex numbers satisfying $\abs{\alpha}^2+\abs{\beta}^2\le 1$, that will be set later.
  Append two new ancilla qubits $\ket{00}$,
  and applying $Q$, we get
  \begin{equation*}
      QC_1\ket{0^2 0^{n+a}} = Q \ket{0^2} \otimes C_1\ket{0^{n+a}} = a\alpha \ket{0^{n+2+a}} + b \beta\ket{1^{n+2}0^a} + c^\prime\ket{\omega^\prime}.
  \end{equation*}
  Now choose $\alpha$ and $\beta$ such that
  \begin{equation*}
      a\alpha = b\beta = \sin \br{\frac{\pi}{4k+2}}
  \end{equation*} where $k = \lceil \frac{\sqrt{2}}{4\pi(\sqrt{2p} - 1)} \rceil$.
  This is achievable because, with $\sin x \leq x$, we know
  \begin{equation*}
      a\alpha = b\beta = \sin \br{\frac{\pi}{4k+2}} \leq\frac{\pi}{4k + 2} \leq  \sqrt{p} - \frac{1}{\sqrt{2}},
  \end{equation*}
  and thus
  \begin{equation*}
      \abs{\alpha}^2 +\abs{\beta}^2 \leq \br{\sqrt{\alpha}-\frac{1}{\sqrt{2}}}^2 / \abs{a}^2 + \br{\sqrt{\beta}-\frac{1}{\sqrt{2}}}^2 / \abs{b}^2 \le 1/2 + 1/2 = 1.
  \end{equation*}
  Define $C_2$ as 
  \begin{equation*}
      C_2 = \operatorname{CNOT}_{0^{n+2}} \operatorname{CNOT}_{1^{n+2}} (QC_1 \otimes X)
  \end{equation*}
  where $\operatorname{CNOT}_{0^{n+2}}$ (resp. $\operatorname{CNOT}_{1^{n+2}}$) is the gate controlled by the first $n+2$ qubits and targeted at the last qubit.
  Then,
  \begin{equation*}
      C_2\ket{0^{n+a+3}} = \sin\alpha \br{\frac{1}{\sqrt{2}}\ket{0^{n+2+a}}+\frac{1}{\sqrt{2}}\ket{1^{n+2}0^{a}}}\ket{0} + \cos\alpha\ket{\omega^\prime}\ket{1},
  \end{equation*}
  where $\alpha = \frac{\pi}{4k + 2}$ for $k = \lceil \frac{\sqrt{2}}{4\pi(\sqrt{2p} - 1)} \rceil$.
  By \cite[Theorem 7]{grier2026mathsfqac0containsmathsftc0with},
  we have a \QAC\ circuit $C_3$ of depth $O(dk)$ that exactly solves the clean \ket{\Cat_n} problem.
  And by \cite[Theorem 3.1]{rosenthal:LIPIcs.ITCS.2021.32}, there is a \QAC\ circuit $C_4$ with depth $O(dk)$ and ancilla size $O(a)$ that exactly solves the clean \Parityn\ problem.
\end{proof}

\section{\texorpdfstring{\DQAC}{1D-QAC0} Circuit Lower Bound for \majority} \label{sec: lower bound on MAJ}



\majoritylowerbound*

\begin{proof}[Proof of \cref{thm: 1D-QAC0 cannot compute MAJORITY}]

Recall \cref{eq: function bias computation final},
\begin{align*}
    &\abs{ 2\Pr_{x,C}[g_C(x) = h(x)] - 1 } \\
    &\leq 
    \abs{ \mathbb{E}_{z \sim \set{0,1}^{S^c}} 
    \Br{ \sqrt{ \Wgt{\leq k}{h'|_{S^c,z}} } + \sqrt{ \Wgt{> k}{f'|_{S^c,z}} } } } + 2\norm{f - f_C}_2.
\end{align*}

Set $h = \Majority{n}$, $k = 1$ and $\ve = 2^{-n^{1/d}/3}$. 
This implies $\norm{f- f_C}_2 \leq 2\sqrt{2d} \cdot 2^{-n^{1/d}/6}$ and $|S| \geq \br{2\log(n/\varepsilon)}^d \geq 2^d$. Without loss of generality, we assume $|S^c|$ is even.
We call an assignment $z \in \set{0,1}^{S^c}$ good if it is balanced, i.e., $|\set{i : z_i = 1}| = |\set{i : z_i = 0}|$. Under such an assignment, the restricted function becomes $\Majority{n}|_{S^c,z} = \Majority{|S|}$.

By Stirling's approximation, the probability that a random $z$ is good satisfies:
\begin{align*}
    \Pr_{z \sim \set{0,1}^{S^c}} \Br{z\text{ is good}} = \binom{|S^c|}{|S^c|/2} / 2^{|S^c|} \geq 0.7 \cdot \sqrt{\frac{1}{n}}.
\end{align*}
Now, using the fact that $\Wgt{\leq 1}{\Majority{m}} \leq \frac{3}{4}$ for any odd $m$, we have
\begin{align*}
    &\abs{ \mathbb{E}_{z \sim \set{0,1}^{S^c}} \sqrt{ \Wgt{\leq k}{h'|_{S^c,z}} } } \\
    &= \abs{
     \Pr \Br{z \text{ is good}} \cdot \E{z \text{ is good}}{\sqrt{ \Wgt{\leq k}{h|_{S^c,z}} }} +
    \Pr \Br{z \text{ is bad}} \cdot \E{z \text{ is bad}}{\sqrt{ \Wgt{\leq k}{h|_{S^c,z}} }}
    } \\
    &\leq 0.7 \cdot \frac{\sqrt{3}}{2} \cdot \sqrt{\frac{1}{n}} + \br{1 - 0.7 \cdot \sqrt{\frac{1}{n}}}
    \\ 
    &\leq 1 - 0.09  \cdot \sqrt{\frac{1}{n}}.
\end{align*}

Thus by \cref{eq: function bias computation final}, we conclude
\begin{align*}
    \Pr_{x,C}[g_C(x) = \Majority{n}(x)] \leq 1 - 0.045 \cdot \sqrt{\frac{1}{n}} +  2\sqrt{2d}  \cdot 2^{-n^{1/d}/6}.
\end{align*}

\end{proof}

\section{Approximation of \texorpdfstring{\DQAC}{1D-QAC0} unitary and distribution} \label{app: 1D-QAC0 unitary & distribution}

In this section, we present approximation results for \DQAC\ circuits in terms of both probability distributions and unitaries.
With \cref{lemma: 1D-QAC0 separable lemma}, we know that for any depth-$d$ \DQAC\ circuit $C$ with input set $I$ and $0 < \ve < 1$, there is an approximating \DQAC\ circuit $\tilde{C}$ such that
\begin{align*}
    \E{x}{\norm{\rho_{\tilde{C}}^x - \rho_{C}^x}_1} \leq 16 d\ve.
\end{align*}

Suppose $C$ specifies an output set that contains all input qubits, along with a set of measurements. For a given input $x$, the distribution obtained by measuring this output set is defined as $f_C(x)$.
We say $C$ computes an input-dependent distribution $f_C : \set{0,1}^n \to \Delta(\set{0,1}^m)$.

\begin{prop} \label{prop: 1D-QAC0 distribution}
    Let $n,d \geq 1$ be integers and $0 < \ve < 1$.
    Let $C$ be a depth-$d$ \DQAC\ circuit with input qubits indexed by $I$ where $|I|=n$,
    and ancilla qubits indexed by $A$.
    Let $\tilde{C}$ be the circuit defined in \cref{lemma: 1D-QAC0 separable lemma}.
    Suppose $C$ (resp. $\tilde{C}$) computes an input-dependent distribution $f_C : \set{0,1}^n \to \Delta(\set{0,1}^m)$ (resp. $f_{\tilde{C}}$).
    Then,
    \begin{align*}
        \E{x}{D_{\operatorname{TV}}(f_C(x), f_{\tilde{C}}(x))} \leq O(d\ve).
    \end{align*}
\end{prop}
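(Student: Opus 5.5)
The bound follows directly from \cref{lemma: 1D-QAC0 separable lemma} together with the contractivity of the trace distance under measurement channels. Fix an input $x$. The distribution $f_C(x)$ comes from applying a fixed measurement, described by a POVM $\set{P_y}_{y\in\set{0,1}^m}$ on the output set, to the state $\rho_C^x$. The distribution $f_{\tilde{C}}(x)$ comes from the same POVM applied to $\rho_{\tilde{C}}^x$. Here I use that $\tilde{C}$ is obtained from $C$ only by erasing gates, so its output set and final measurements are those of $C$. The approach is therefore to compare the two distributions pointwise in $x$ and then average.

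First, for each fixed $x$, I would write
\begin{align*}
D_{\operatorname{TV}}(f_C(x), f_{\tilde{C}}(x)) = \frac{1}{2}\sum_{y}\abs{\Tr\Br{P_y(\rho_C^x - \rho_{\tilde{C}}^x)}}.
\end{align*}
I would bound this by the unnormalized trace distance $\frac{1}{2}\normsub{\rho_C^x - \rho_{\tilde{C}}^x}{\text{TD}}$, the standard data-processing property of the trace norm under a measurement channel. To prove it, split $\rho_C^x-\rho_{\tilde{C}}^x = Q_+ - Q_-$ into positive and negative parts. Then $\abs{\Tr[P_y(Q_+-Q_-)]}\le \Tr[P_yQ_+]+\Tr[P_yQ_-]$, and summing over $y$ with $\sum_y P_y=\id$ gives $\Tr Q_+ + \Tr Q_- = \normsub{\rho_C^x - \rho_{\tilde{C}}^x}{\text{TD}}$.

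Second, I would take the expectation over $x$ and invoke \cref{lemma: 1D-QAC0 separable lemma}, which gives $\expec{x}{\norm{\rho_{\tilde{C}}^x - \rho_{C}^x}_1} \le 16d\ve$. This yields $\E{x}{D_{\operatorname{TV}}} \le 8d\ve = O(d\ve)$.

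The only point needing care is the norm convention. The paper's $\norm{\cdot}_1$ is nominally the normalized Schatten norm, but the proof of \cref{lemma: 1D-QAC0 separable lemma} (via \cref{lemma: Erasure lemma}) actually bounds the unnormalized trace distance between states; for example, the paper uses $\norm{\rho-\sigma}_1\le 2$ for states. I would state explicitly that the $16d\ve$ bound is read in that unnormalized sense, which is how it is used in the proof of \cref{lemma: 1D-QAC0 subset lemma}. I would also note that the two circuits share the same measured output set, so one POVM serves both. Beyond these points, the argument is routine.
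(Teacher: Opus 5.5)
Your proposal is correct and follows the paper's proof. The paper also bounds $\frac{1}{2}\sum_y \abs{\Tr\Br{\Pi_y(\rho_C^x-\rho_{\tilde{C}}^x)}}$ by $\frac{1}{2}\norm{\rho_C^x-\rho_{\tilde{C}}^x}_1$, citing $\sum_y\abs{\bra{y}\sigma\ket{y}}\le\norm{\sigma}_1$ for Hermitian $\sigma$ (which your positive/negative-part argument proves for an arbitrary POVM), and then averages over $x$ using \cref{lemma: 1D-QAC0 separable lemma} to get $8d\ve$; your caveat that $\norm{\cdot}_1$ must be read as the unnormalized trace norm is apt, since that is how this part of the paper uses it.
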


\begin{proof}[Proof of \cref{prop: 1D-QAC0 distribution}]
    Suppose the final output qubits set is $S$ and the measurement set is $\Pi_y = \ketbra{y}_S \otimes \id_{S^c}$.
    \begin{align*}
        \E{x}{D_{\operatorname{TV}}(f_C(x), f_{\tilde{C}}(x))} 
        &= \frac{1}{2} \E{x}{ \sum_y \abs{\Tr\Br{\Pi_y \br{\rho^x_C - \rho^x_{\tilde{C}}}}}} \\
        &\leq \frac{1}{2} \E{x}{\norm{\rho_{\tilde{C}}^x - \rho_{C}^x}_1} \\
        &\leq 8d\ve
    \end{align*}
    where we use the fact that for any Hermitian $\sigma$,
    \begin{align*}
        \sum_y \abs{\langle y|\sigma|y\rangle} \leq \norm{\sigma}_1.
    \end{align*}
\end{proof}

Below, a circuit $C$ computes a unitary $U$ if
$C(\ket{0}_I \ket{0}_A) = (U \ket{0}_I) \otimes \ket{0}_A$.

\begin{prop} \label{prop: 1D-QAC0 unitary}
    Let $n,d \geq 1$ be integers and $0 < \ve < 1$.
    Let $C$ be a depth-$d$ \DQAC\ circuit with input qubits indexed by $I$ where $|I|=n$,
    and ancilla qubits indexed by $A$.
    Suppose $C$ computes a unitary $U$ and $\tilde{C}$ computes a unitary $V$ where $\tilde{C}$ is defined in \cref{lemma: 1D-QAC0 separable lemma}.
     Then, there exists a diagonal phase matrix $D$ such that 
     \begin{align*}
         \norm{U - V \cdot D}_2^2 \leq O(d\ve).
     \end{align*}
\end{prop}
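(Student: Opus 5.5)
The plan is to turn the averaged trace-distance guarantee of \cref{lemma: 1D-QAC0 separable lemma} into a bound column by column in the computational basis. The only freedom we need is a per-column phase, and the diagonal matrix $D$ supplies it. Throughout, I read ``$C$ computes $U$'' as holding for every classical input, i.e.\ $C\ket{x}_I\ket{0}_A = (U\ket{x}_I)\otimes\ket{0}_A$ for all $x\in\set{0,1}^n$. Likewise $\tilde{C}\ket{x}_I\ket{0}_A = (V\ket{x}_I)\otimes\ket{0}_A$, as the hypothesis of the proposition provides. I also use that $\norm{\cdot}_1$ in \cref{lemma: 1D-QAC0 separable lemma} is the unnormalized trace norm, as the erasure estimates there implicitly assume (the bound $\norm{\rho-\sigma}_1\le 2$ for states is used in that section).

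Step 1 (per-input fidelity). For each $x$, set $\ket{u_x}=U\ket{x}$ and $\ket{v_x}=V\ket{x}$. Then $\rho_C^x=\ketbra{u_x}\otimes\ketbra{0}_A$ and $\rho_{\tilde C}^x=\ketbra{v_x}\otimes\ketbra{0}_A$. Since the ancilla factor is identical in both, the pure-state formula for the trace distance gives
\begin{equation*}
\norm{\rho_C^x-\rho_{\tilde C}^x}_1 = 2\sqrt{1-\abs{\braket{v_x}{u_x}}^2}.
\end{equation*}

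Step 2 (choose the phases). Let $D=\sum_x e^{i\theta_x}\ketbra{x}$, where $e^{i\theta_x}$ is the phase of $\braket{v_x}{u_x}$, so that $\bra{x}D^\dagger V^\dagger U\ket{x}=\abs{\braket{v_x}{u_x}}$. Then
\begin{equation*}
\norm{(U-VD)\ket{x}}^2 = 2-2\abs{\braket{v_x}{u_x}} \le 2\br{1-\abs{\braket{v_x}{u_x}}^2} = \tfrac12\norm{\rho_C^x-\rho_{\tilde C}^x}_1^2 \le \norm{\rho_C^x-\rho_{\tilde C}^x}_1 .
\end{equation*}
The first inequality is $1-t\le 1-t^2$ for $t\in[0,1]$. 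The last one uses that the trace distance between two states is at most $2$.

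Step 3 (average). The normalized Schatten $2$-norm gives $\norm{U-VD}_2^2 = 2^{-n}\Tr\Br{(U-VD)^\dagger(U-VD)} = \expec{x}{\norm{(U-VD)\ket{x}}^2}$. By Step 2 and \cref{lemma: 1D-QAC0 separable lemma} this is at most $\expec{x}{\norm{\rho_C^x-\rho_{\tilde C}^x}_1}\le 16d\ve = O(d\ve)$.

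The main obstacle is interpretive rather than technical. We must make sure that the approximating circuit $\tilde C$ from \cref{lemma: 1D-QAC0 separable lemma} really returns the ancilla to $\ket{0}_A$ on every basis input. The proposition takes this as a hypothesis, and we rely on it. If the ancilla output were entangled, Step 1 would have to be replaced by a fidelity bound on the reduced states, and $D$ could no longer simply be read off from inner products. Everything else is the elementary conversion between trace distance and phase-aligned Euclidean distance.
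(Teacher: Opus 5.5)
Your proposal is correct and follows essentially the same route as the paper. Both use the pure-state trace-distance formula, choose $D$ as the phases of the diagonal of $W=V^\dagger U$, and average the normalized $2$-norm over columns; your per-column quantity $2-2\abs{W_{xx}}$ is exactly the paper's diagonal term $(1-\abs{W_{xx}})^2$ plus off-diagonal term $1-\abs{W_{xx}}^2$, and both arguments bound it by $2\sqrt{1-\abs{W_{xx}}^2}$ to reach $16d\ve$.
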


\begin{proof}[Proof of \cref{prop: 1D-QAC0 unitary}]
    Note that $\rho_C^x$ and $\rho_{\tilde{C}}^x$ are pure states by the assumption, thus
    \begin{align*}
        \E{x}{\norm{\rho_{\tilde{C}}^x - \rho_{C}^x}_1} 
        = \E{x}{2\sqrt{1 - \abs{\braket{\rho_{\tilde{C}}^x}{\rho_{C}^x}}^2 } } \leq 16d\ve
    \end{align*} which implies
    \begin{align*}
        \E{x}{\sqrt{1 - \abs{ \left\langle x \abs{V^{\dagger}U} x\right\rangle}^2 } } \leq 8d\ve .
    \end{align*}
    Define $W = V^{\dagger}U$ and write $W_{xx} = |W_{xx}| 
    \cdot e^{i \theta_x}$. Let $D = \operatorname{diag}(e^{i \theta_0}, \cdots, e^{i \theta_x}, \cdots)$. 
    For $a \in [0,1]$, we have $a \leq \sqrt{a}$ and 
    \begin{align*}
        (1-a)^2 \leq 1-a \leq \sqrt{(1-a)(1+a)} = \sqrt{1-a^2}.
    \end{align*}    
    Now we can bound the 2-norm, 
    \begin{align*}
        \norm{U - V \cdot D}_2^2 &= \norm{W - D}_2^2 \\
        &= \E{x}{ |W_{xx} - D_{xx}|^2} + \E{x}{\sum_{y \neq x} |W_{xy}|^2} \\
        &= \E{x}{ \br{1 - \abs{W_{xx}}}^2} +  \E{x}{1 -  |W_{xx}|^2} \\
        &\leq 2\E{x}{\sqrt{1 - \abs{W_{xx}}^2 } } \\
        &= 16d\ve.
    \end{align*}
    
\end{proof}

\section{Lower bound on PARITY in limited \texorpdfstring{\DDQACz}{Grid-QAC0} } \label{app: limited 2D-QAC0 LB}

In this section, we will explain the challenge when extending the current techniques to \DDQAC\ circuits. Prior to that, we need the following concept.

\begin{definition}
  Let $C$ be a quantum circuit with qubits indexed by $I$,
  and the forward light-cones denoted by $S_i$ for each qubit $i\in I$.
  Let $\CZGr$ be a multi-qubit \CZGate\ acting on qubits $T$.
  The \textit{weight} of the gate $\CZGr$
  is defined to be the number of forward light-cones where $S_i$ and $T$ intersect.
  Formally, the weight is
  $\abs{\set{i: S_i\cap T\neq\emptyset}}$.
\end{definition}

For the $\DQAC$ circuit lower bounds,
we used the fact that removing a quantum gate with large weight incurs small error.
Hence we can remove the gates with a large weight, and then apply a light-cone argument.
In this section we show that for the two-dimension case,
erasing gates with large weights in \DDQAC\ circuits may incur a large error.
Hence the techniques used for $\DQAC$ circuits do not work.
Furthermore, we provide a proof that constant-depth \DDQAC\ circuits cannot compute the \parity\ function, when every gate has a small weight.

Now, we provide an example in which we construct a \DDQAC\ circuit $D$ and show that removing a gate with large weight from $D$ results in large error.
\begin{example} \label{ex: large gate is nb}
Let $0 < \delta < 1$ be a parameter to be fixed later.
Consider a $(2, n)$-$\DDQAC$ circuit composed of two rows.
The qubits in the first row store the inputs of $C$, and the qubits in the second row store the ancilla.
We denote the sets of qubits in the first and second rows by $(1, [n])$ and $(2, [n])$, respectively.
Assume that $n = 2k$ for some positive integer $k$.
We construct $C$ such that
\begin{align*}
    \ket{\phi_x} = C(\ket{x}_{(1, [n])}\ket{0}_{(2, [n])}) = \br{\bigotimes_{1 \leq i \leq k} |x_{2i}\rangle_{(1, 2i)}} \otimes  \br{\bigotimes_{0 \leq  i < k} |\varphi_{x_{2i+1}}\rangle_{(1, 2i+1), (2, 2i+1), (2, 2i+2)} }
\end{align*}
where the 3-qubit state $\ket{\varphi_z}$ on indices $a,b,c$ is defined as:
\begin{align*}
\ket{\varphi_z}_{a,b,c} = \sqrt{\delta} \ket{z}_a \ket{11}_{bc} + \sqrt{1-\delta} \ket{z}_a \ket{0}_b \ket{z}_c.
\end{align*}

Next, we apply a $\operatorname{CZ}$ gate acting on the registers $(2, [n])$ and define $D = \operatorname{CZ}_{(2, [n])} \cdot C$. 
The $\operatorname{CZ}$ gate in $D$ has a weight $n$.
The error incurred when removing the $\operatorname{CZ}$ gate is
\begin{align*}
    \E{x}{\norm{\phi_x - \operatorname{CZ}_{(2, [n])} \phi_x  \operatorname{CZ}_{(2, [n])} }_1 }. 
\end{align*}

To give a lower bound of the error, we focus on the reduced density matrix of $\phi_x$. $\phi_x$ is a pure state,
\begin{align*}
    \Tr_{(1,[n])} \Br{\phi_x}  = \bigotimes_{0 \leq i < k} \ketbra{\psi_{x_{2i+1}}}
\end{align*}
where $\ket{\psi_z} = \sqrt{\delta} \ket{11} + \sqrt{1-\delta} \ket{0 z}$.
Denote $ \Tr_{(1,[n])} \Br{\phi_x}= \ketbra{\varphi_x}$. Applying the $\operatorname{CZ}$ gate flips the phase of the $\ket{11}^{\otimes k}$.
\begin{align*}
    \norm{ \Tr_{(1,[n])} \Br{\phi_x} -  \operatorname{CZ}_{(2,[n])} \Tr_{(1,[n])} \Br{\phi_x}   \operatorname{CZ}_{(2,[n])} }_1 &= 2 \sqrt{1 - \abs{\braket{\varphi_x}{\operatorname{CZ}_{(2,[n])} | \varphi_x}}^2} \\
    &= 2 \sqrt{1 - (1-2\delta^k)^2} \\
    &= 4 \sqrt{\delta^k (1-\delta^k)}
\end{align*}

For sufficiently large $k$, choosing $\delta = 1-\frac{1}{k}$ provides:
\begin{align*}
     &\E{x}{\norm{\phi_x - \operatorname{CZ}_{(2, [n])} \phi_x  \operatorname{CZ}_{(2, [n])} }_1 } \\
    &\geq \E{x}{\norm{ \Tr_{(1,[n])} \Br{\phi_x} -  \operatorname{CZ}_{(2,[n])} \Tr_{(1,[n])} \Br{\phi_x}   \operatorname{CZ}_{(2,[n])} }_1 } \\
    &= 4 \sqrt{\delta^k (1-\delta^k)} > 1.
\end{align*}

\end{example}

In \cref{ex: large gate is nb}, we have demonstrated the difficulty of directly erasing gates with large weights in \DDQAC\ circuits.
Nevertheless, we show that if all gates have small weights, then a \DDQAC\ circuit with constant width becomes almost separable and thus cannot compute the \parity\ function with probability larger than $1/2$, which is the probability we achieve from a random guess.
We note that the small weight gates property does not mean the circuit itself is trivial. Despite the small weight of the gates, the size of light-cones within the circuit remains unbounded. A single light-cone may intersect with a large number of gates, provided that each of these gates intersects with only a few light-cones.

\begin{lemma} \label{lem: limited 2D-QAC0 cannot compute parity}
    Let $n, w \geq 1$ and $s \geq 3$ be integers.
    Let $C$ be a width-$w$ \DDQAC\ circuit with input qubits indexed by $I$ where $|I| = n$.
    If every \CZGate\ in $C$ has weight at most $s$ and 
    \begin{align*}
        \br{(2s)^w \cdot (2w+1)^{w(w-1)/2}} < n^{1/d},
    \end{align*}
    then,
    \begin{align*}
        \Pr_{x,C}  [g_C(x) = \CParity{n}(x)] = \frac{1}{2}.
    \end{align*}
\end{lemma}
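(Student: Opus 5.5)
We will mimic the proof of \cref{lemma: 1D-QAC0 separable lemma}, but with no erasure step: since every \CZGate\ already has weight at most $s$, the only task is the structural one. Concretely, we will show by induction on the layer index $t$ that there is a set $I_t\subseteq I$ with
\[
\abs{I_t} \ge \frac{n}{\br{(2s)^w (2w+1)^{w(w-1)/2}}^{t}}
\]
such that $C_{\le t}$ is $I_t$-separable, starting from $I_0=I$. After $d$ layers the hypothesis gives $\abs{I_d}>1$, i.e.\ $\abs{I_d}\ge 2$. The first output qubit lies in at most one forward light-cone $S_i$ with $i\in I_d$. Hence fixing all inputs outside $I_d$, the output bit is independent of at least one variable in $I_d$. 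Averaging over that variable then gives exactly success probability $1/2$ for $\CParity{n}$, since flipping it flips parity. No approximation error appears.

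The inductive step is a two-dimensional analogue of \cref{lemma: 1D-QAC0 structure lemma}. Given the $I_t$-separable circuit and the next layer $L$, we build a conflict graph $G$ on $I_t$. Two input qubits $i,j$ are adjacent if some gate of $L$ meets both $S_i$ and $S_j$. An independent set in $G$ is exactly a set that stays separable after $L$, because if no gate touches two surviving cones then the new cones remain disjoint. So it suffices to bound the average degree of $G$ (or to find a structured large independent set) by $(2s)^w(2w+1)^{w(w-1)/2}$. Weight at most $s$ means each gate joins at most $s$ cones, so each gate contributes a clique of size at most $s$. The difficulty is that a single cone may touch many gates, so degrees are not obviously bounded.

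Here the width $w$ enters. We would try to track, inductively, a geometric invariant of the surviving light-cones: each cone restricted to each row is a union of a bounded number of intervals, and cones are ordered along the rows. A row gate (contiguous in one row) that touches several cones without containing them only ``bridges'' neighbours in that row. A column gate spans at most $w$ qubits, so it touches at most $w$ cones and merges cones across rows. We would process rows one at a time, choosing an independent set of the row-conflicts, which are of path type up to weight $s$ (losing a factor $2s$ per row, giving the $(2s)^w$). Between each pair of rows we then handle column-gate interactions, losing a factor $2w+1$ per pair (giving $(2w+1)^{w(w-1)/2}$). The main obstacle is this step: controlling how row gates of weight $s$ chain cones together transitively within a row, and ensuring the column interactions between a pair of rows only cause bounded-degree conflicts. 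We would first try to prove that, after restriction, each surviving cone meets each row in a single interval. Then conflicts within a row become an interval-overlap structure with clique size $\le s$, so a greedy left-to-right selection keeps a $1/(2s)$ fraction. Conflicts between rows $r,r'$ induced by column gates would have degree at most $2w$, so a greedy choice keeps a $1/(2w+1)$ fraction, yielding the claimed factor per layer.
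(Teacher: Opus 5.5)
Your outer argument coincides with the paper's. You iterate a one-layer structure step $d$ times with loss $F=(2s)^w(2w+1)^{w(w-1)/2}$ per layer, get $|I_d|\ge n/F^d>1$, and finish with a light-cone argument. Your ending is correct, and more explicit than the paper's ``degree-$1$'' remark: the output qubit lies in at most one cone $S_i$ with $i\in I_d$, so for some $j\in I_d$ its distribution does not depend on $x_j$ once the other inputs are fixed, and averaging over $x_j$ gives exactly $1/2$. Your reduction of the one-layer step to finding an independent set in the conflict graph is also right.

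\textbf{The gap.} The one-layer step is the entire content of the paper's separate 2D structure lemma, and you never prove it. Worse, the invariants you plan to track (a single interval, or boundedly many, per row) are false. Restriction only discards cones and never reshapes a surviving one, and a single cone can meet a row in many intervals even when every gate has weight $1$. In width $2$, take a row gate on $(2,1),\dots,(2,2m+1)$ followed by column gates on $\{(1,2k),(2,2k)\}$. This turns the cone of the input at $(2,1)$ into a comb meeting row $1$ in $m$ separate points. Both of your counting claims were conditional on that invariant, and both fail without it. Add singleton input cones at $(1,2k+1)$ and a further layer of row gates on $\{(1,2k),(1,2k+1)\}$, each of weight $2$. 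Then the comb conflicts with $m$ other cones. The same picture with column gates between two rows works in width $3$. So the conflict graph can have unbounded maximum degree, and neither the interval-greedy $1/(2s)$ count nor the ``degree at most $2w$'' count applies.

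\textbf{What is needed} is a sparsity (degeneracy) bound that does not depend on cone shapes. The paper supplies this step by handling the $w$ row classes and the $\binom{w}{2}$ pairwise column classes one at a time; that is where its factors $(2s)^w$ and $(2w+1)^{w(w-1)/2}$ come from. You could simply invoke that lemma. A self-contained alternative runs as follows.
\begin{itemize}
\item Cones and gate supports are connected subsets of the $w\times N$ lattice, and each family is pairwise disjoint.
\item So the cone--gate incidence graph is a minor of the $w\times N\times 2$ grid (contract cones in one copy and gates in the other), and it has treewidth at most $2w$.
\item In every bag of a tree decomposition, replace each gate by its at most $s$ incident cones. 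This yields a tree decomposition of the conflict graph of width less than $(2w+1)s$.
\item Hence there is an independent set of size at least $|I|/((2w+1)s)\ge |I|/F$ for $w\ge 2$. For $w=1$ the incidence graph is a forest, giving $|I|/(2s)$.
\end{itemize}
With this step in hand, the rest of your proof goes through.
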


We use the following structure lemma to prove the above result.

\begin{lemma}[\DDQAC\ structure lemma]  \label{lemma: 2D-QAC0 structure lemma}
    Let $n,w \geq 1$ and $s \geq 3$ be integers.
    Let $I$ be a subset of input qubits, $C$ be an $I$-separable \DDQAC\ circuit with $|I| = n$ ,and $L$ be a one-layer \DDQAC\ circuit.
    If every \CZGate\ in $L$ has weight at most $s$, then, there exists a subset $S \subseteq I$ such that the circuit $D = L \cdot C$ is $S$-separable and $|S| \geq |I| / \br{(2s)^w \cdot (2w+1)^{w(w-1)/2}}$.
\end{lemma}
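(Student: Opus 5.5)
Since $I$-separability only concerns pairwise disjointness of forward light-cones, I would reduce the lemma to finding a large independent set in a conflict graph. Let $S_i$ be the forward light-cone of $i\in I$ in $C$. In $D=L\cdot C$ the new light-cone of $i$ is $S_i$ together with the supports of all gates of $L$ meeting $S_i$. Two new light-cones intersect exactly when $S_i$ and $S_j$ meet a common gate of $L$, since the $S_i$ are disjoint and the gates of $L$ are disjoint. Define the graph $G$ on $I$ with an edge $(i,j)$ whenever some gate of $L$ meets both $S_i$ and $S_j$. Any independent set of $G$ is a valid $S$. Each gate has weight at most $s$, so the hyperedges have size at most $s$. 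The difficulty is that a single light-cone may meet many gates, so the degree of $G$ is not bounded a priori. As in \cref{lemma: 1D-QAC0 structure lemma}, the geometry has to be used.

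I would proceed by induction on the width $w$, processing the rows one at a time. For $w=1$ the statement should follow directly from the argument of \cref{lemma: 1D-QAC0 structure lemma}, which loses at most a factor $s$. So at least $|I|/(2s)$ survive, matching the bound $(2s)^1(3)^0$. For general $w$, I would first deal with the gates lying in a single row $r$, i.e.\ the horizontal gates of row $r$ together with gate pieces restricted to row $r$. The light-cones intersected with row $r$ form a family of sets on a line. Gates in that row are intervals, and each interval meets at most $s$ light-cones. The aim is to show that the conflict graph induced by these interval gates has an independent set of size a $1/(2s)$ fraction. To do this I would order light-cones by their leftmost cell in the row and greedily keep one, discarding its neighbors. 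Interval gates that are not fully inside one cone contribute at most two boundary conflicts per cone-endpoint region, which bounds the loss. Repeating this for each of the $w$ rows gives the factor $(2s)^w$.

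Vertical gates, or column segments, are the other source of conflicts. A column has only $w$ cells, so a vertical gate touches at most $w$ rows. A light-cone in a width-$w$ lattice, restricted to row pairs $(r,r')$, can be linked through vertical gates to light-cones that sit in the neighboring columns. I would bound the number of conflicts per row pair by showing that, after the row-wise thinning, each light-cone conflicts with at most $2w$ others through vertical gates spanning rows $r$ and $r'$. Then I would take an independent set in a graph of degree at most $2w$, keeping a $1/(2w+1)$ fraction. There are $w(w-1)/2$ row pairs, and doing this once per pair gives the factor $(2w+1)^{w(w-1)/2}$. Finally I would check that the selected light-cones pairwise never share a gate of $L$, so $D$ is $S$-separable.

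The main obstacle is the degree bound for the vertical conflicts. Light-cones in 2D need not be intervals or even convex, so a cone may weave between rows. I expect to have to argue that, within a fixed pair of rows, the disjoint light-cones restricted to those rows interleave in a planar, essentially linear order. From that order, each cone could be adjacent only to boundedly many others, at most $2w$, via columns. If this fails, I would fall back on charging each conflict to the column where it occurs and using the weight bound $s$ to control multiplicity, at the cost of worse constants.
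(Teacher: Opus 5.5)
\textbf{There is a genuine gap: both combinatorial steps are missing, and the mechanisms you sketch fail.} Your architecture matches the paper's. You reduce to an independent set in the conflict graph, thin sequentially over the $w$ rows (horizontal gates) and the $w(w-1)/2$ row pairs (column gates cut into two-row pieces), and multiply the losses. But the two steps that carry the lemma are not supplied.

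One configuration on a $3\times N$ lattice shows the failures. Let $C$ have depth $2$. Layer $1$ is a gate on all of row $1$ plus gates on $\{(3,c),(3,c+1)\}$ for odd $c\ge 3$. Layer $2$ is a column gate on rows $1$--$3$ of column $1$, on rows $1$--$2$ of each odd column $c\ge 3$, and on rows $2$--$3$ of each even column $\ge 4$. The forward light-cone of $(1,1)$ is $A$: row $1$, together with all odd cells of row $2$ and the cell $(3,1)$. For odd $c\ge 3$, the light-cone of $(3,c)$ is $B_c=\{(3,c),(3,c+1),(2,c+1)\}$.

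\textbf{Row step.} Take $L$ to be the weight-$2$ gates $\{(2,c),(2,c+1)\}$. $A$ has the leftmost cell in row $2$ and conflicts with every $B_c$. Your leftmost-first greedy therefore returns one cone out of about $N/2$, although the $B_c$ are pairwise non-conflicting. The ``two boundary conflicts per endpoint region'' count is the 1D accounting. It needs each cone to meet the row in a single interval, which fails in 2D. The paper instead splits off the cones lying entirely inside the row from those leaving it, and keeps the larger class (whence $2s$). Cones inside the row have interval traces, so the 1D structure lemma applies to them with loss $s$. Be aware that the paper's claim that the second class has a max-degree-$2$ conflict graph also fails on this example, so that half needs a genuine argument too.

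\textbf{Vertical step.} You give no argument, and the degree bound you aim for is false. Take $L$ to be the weight-$2$ column gates $\{(2,c),(3,c)\}$. Every cone meets both rows $2$ and $3$, and $A$ again conflicts with every $B_c$. The missing idea is the column projection:
\begin{itemize}
\item A connected cone $S$ has an interval projection $P_S$ and meets every column in it. So at most $w$ cones have a given column in their projection.
\item Two cones joined by a gate in column $c$ both contain $c$ in their projections. So one of them spans an endpoint column of the other.
\end{itemize}
Charging to the column $c$ itself, as in your fallback, gains nothing, since one cone can have conflicts in unboundedly many columns.

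The paper splits the cones four ways by which of the two rows they meet and keeps the largest class. Only the class meeting both rows has internal conflicts. For that class the paper asserts maximum degree $2w$, which the example refutes. What the projection argument actually yields is weaker. Orient each conflict toward the cone that spans an endpoint column of the other; every out-degree is then at most $2(w-1)$. This gives $4(w-1)$-degeneracy and an independent set of a $1/(4w-3)$ fraction, which equals $1/(2w+1)$ only for $w=2$. The four-way split also costs a further factor $4$ per pair that the stated bound omits.

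So a max-degree argument tuned to the exact constant $(2w+1)^{w(w-1)/2}$ cannot succeed. A correct version of your plan should use degeneracy and settle for weaker constants.
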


\begin{proof}[Proof of \cref{lem: limited 2D-QAC0 cannot compute parity}]
Applying the \cref{lemma: 2D-QAC0 structure lemma} $d$ times, we know that $C$ is a $T$-separable circuit for some subset of input qubits $T$ where $|T| \geq n / \br{(2s)^w \cdot (2w+1)^{w(w-1)/2}}^d > 1$.
Restricting on $T$, the circuit computes a degree-$1$ function.
On average inputs, this computes $\operatorname{Parity}_{T}$ with probability exactly $1/2$.
\end{proof} 

We first provide the proof of \cref{lemma: 2D-QAC0 structure lemma} for the case where the circuit width is 2.

\begin{lemma}[Width-2 \DDQAC\ structure lemma]  \label{lemma: width-2 2D-QAC0 structure lemma}
    Let $n \geq 1$ and $s \geq 3$ be integers.
    Let $I$ be a subset of input qubits, $C$ be an $I$-separable width-2 \DDQAC\ circuit with $|I| = n$ and $L$ be a one-layer width-2 \DDQAC\ circuit.
    If every \CZGate\ in $L$ has weight $\leq s$, then, there exists a subset $S \subseteq I$ such that the circuit $D = L\cdot C$ is $S$-separable and $|S| \geq |I| / 8s^2$.
\end{lemma}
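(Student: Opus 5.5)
The plan follows the proof of \cref{lemma: 1D-QAC0 structure lemma}, adapted to a $2\times n$ grid. The first step is a structural fact. In a width-2 circuit every multi-qubit \CZGate\ acts either on a contiguous interval of one row or on a vertical pair $\{(1,j),(2,j)\}$. I would show by induction on depth that every forward light-cone $S_i$ of $C$ is a union of at most two row-intervals, one in each row, and that their column projections overlap or are adjacent. This holds because each gate either extends an interval within its row or copies a column to the other row.

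Given this, the disjoint light-cones $\{S_i\}_{i\in I}$ behave almost one-dimensionally, except that each light-cone has up to two ``rows'' of boundary. I would build a conflict graph $G$ on $I$: put an edge $(i,j)$ whenever some gate $\CZGr$ of $L$ meets both $S_i$ and $S_j$. Keeping an independent set $S$ of $G$ suffices for $S$-separability of $D=L\cdot C$. The new forward light-cone of $i$ is $S_i$ together with the supports of the gates of $L$ touching $S_i$, and for independent $i,j$ these sets are disjoint. Gates are disjoint within a layer, so two enlarged cones can only meet through a shared gate, and independence rules that out.

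The core step is bounding degrees. In the 1D proof, gates of weight up to $s$ were handled by selecting one qubit per ``good'' gate, which cost a factor $s$; degree-2 boundary gates cost a factor $2$. Here I would argue directly that the maximum degree of $G$ is at most $O(s^2)$, or at least that $G$ is $O(s^2)$-degenerate. Fix $i$. The gates of $L$ meeting $S_i$ are of three kinds:
\begin{enumerate}
    \item horizontal gates in row 1;
    \item horizontal gates in row 2;
    \item vertical 2-qubit gates.
\end{enumerate}
A horizontal gate meeting $S_i$ but not contained in it must cross one of the (at most four) interval endpoints of $S_i$. Gates contained in $S_i$ create no edges.

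So at most four horizontal gates create neighbours, each contributing at most $s-1$ neighbours by the weight bound. Vertical gates, however, can be numerous along the interval of $S_i$. Each connects $S_i$ to a light-cone occupying the other row at that column. Because light-cones are disjoint and row-intervals, the cones occupying the other row beneath $S_i$'s interval form a consecutive family. The count of these is not bounded by $s$ directly, which is the difficulty.

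To handle it, I would use a two-level selection rather than bounding degree. First, group light-cones into vertical chains: a sequence of cones alternating between rows and linked by vertical gates. Within such a chain, cones are ordered left to right along each row. A cone meeting many cones in the opposite row through vertical gates acts like a ``good gate'' in the 1D argument. I would assign each such cone to one representative among them and discard the rest. The discarded cones are at most $s$ per retained one if the weight bound limits how many cones a horizontal gate merged with vertical links can touch. Otherwise the loss is at most a factor $2s$ per row, via interval orderings as in \cref{prop: degree 2 independent set}.

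The remaining graph would then have bounded degree, roughly $4s$, and a greedy independent set gives a further factor $O(s)$. Together this yields $|S|\ge |I|/8s^2$ after tuning the constants (factor $2s$ from each row's interval structure, times $2$ for parity alternation). The main obstacle I expect is precisely the vertical-gate step: showing that the number of cones linked to a single cone via vertical gates can be charged to the weight-$s$ horizontal gates or to a $2s$-per-row ordering. If a direct charging fails, I would instead split $I$ by which row contains the left endpoint of $S_i$. Within each class, I would apply the 1D argument (\cref{lemma: 1D-QAC0 structure lemma}, losing $s$) row by row. This gives $s\cdot s$ times a small constant for the two classes and two alternations, which still totals $8s^2$.
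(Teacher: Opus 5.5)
Your reduction to an independent set in the conflict graph is sound. The argument that follows has a genuine gap, and it starts with the structural fact, which is false. Forward light-cones here are connected subsets of the $2\times n$ grid, but they need not be one interval per row: a vertical gate copies a single column, not an interval, to the other row. For example, put inputs at $(1,1)$ and at $(2,2),(2,4),\dots,(2,2m)$, apply a row-$1$ gate on columns $[1,2m]$, then vertical gates on the odd columns. The cone of $(1,1)$ is row $1$ on $[1,2m]$ plus $\{(2,1),(2,3),\dots,(2,2m-1)\}$, and the singleton cones $\{(2,2k)\}$ sit in its gaps. Now let $L$ consist of the weight-$2$ row-$2$ gates on $\{(2,2k-1),(2,2k)\}$. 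That cone is joined to $m$ others through $m$ distinct horizontal gates, so your ``at most four endpoints'' bound fails for horizontal gates as well, not only for vertical ones.

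For vertical gates, your remedy of keeping one representative among the cones attached to a high-degree cone loses an unbounded factor on the simplest star: a row-$1$ interval cone over $m$ singleton cones, joined by $m$ vertical gates. There the right choice is to drop the centre and keep all $m$ leaves. Your fallback (split by the row of the left endpoint, then run the 1D lemma row by row) again presupposes interval structure and says nothing about vertical gates.

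The missing idea is that you need no degree bound at all; you need separation in the strip. The paper splits $L$ into its row-$1$ gates, row-$2$ gates and vertical gates, and handles these sub-layers one after another. This is legitimate because the gates of one layer are disjoint, so the sequentially computed light-cones coincide with those of $L\cdot C$.

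\textbf{Vertical sub-layer.} If two distinct cones occupy the two cells of a column, that column separates the strip, and every other (connected, disjoint) cone lies entirely on one side of it. This rules out cycles, so the conflict graph is a forest, hence bipartite. The larger colour class keeps at least half the current cones, however large the degrees are.

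\textbf{Row sub-layer.} Cones lying inside that row are genuine intervals there, so \cref{lemma: 1D-QAC0 structure lemma} applies with loss $s$. Cones spanning both rows each contain a full column, so they are linearly ordered. Each gate then meets a consecutive run of at most $s$ of them, and taking every $s$-th one again costs only a factor $s$. Cones in the other row are untouched and can all be kept. This gives a factor $2s$ per row.

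The total is $2s\cdot 2s\cdot 2=8s^2$. Your final tally has the same shape, but without these structural facts it is not supported.
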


\begin{proof}[Proof of \cref{lemma: width-2 2D-QAC0 structure lemma}]

We decompose the layer $L$ into three distinct sets of gates: $A_1, A_2$, and $B$. Here, $A_1$ and $A_2$ contain all gates acting on the first and second rows, respectively, while $B$ contains all gates acting on the columns.

Recall that in \DDQAC\ circuits, every light-cone forms a connected component. Using this property, we check how the separability of an $I$-separable circuit $C$ is preserved after applying these gate sets.

For a given set $S$ and a layer of gates, we define a structure called the intersection graph $G=(V, E)$. In this graph, the vertices $V$ correspond to the light-cones in $S$, and an edge connects two vertices if and only if their corresponding light-cones are acted on by the same gate. For the sake of brevity, we define a set $S'$ to be a separable set (with respect to $C'$) if the circuit $C'$ is $S'$-separable.

We analyze the preservation of separability through a case-by-case analysis:

\textbf{Case 1:} The layer consists of gates acting on a single row.
We partition $S$ into two subsets, $I_1$ and $I_2$. Let $I_1$ contain the indices whose light-cones are entirely contained within the current row, and $I_2$ contain those whose light-cones span across rows. 
There exists at least one of $I_1,I_2$ of a size at least $|I|/2$.
If $I_1$ is the larger set, the problem reduces to the 1-dimensional case. We can then select a subset of size at least $|I_1|/s \geq |I|/2s$ such that the circuit remains separable on this subset after applying the row gates.
If $I_2$ is the larger set, we consider the intersection graph. The graph is acyclic and has a maximum degree of $2$. Consequently, there exists a separable set for the new circuit of size $|I_2| / 2 \geq |I|/4$.

\textbf{Case 2:}  The layer consists of gates acting on columns.
We consider the intersection graph. The graph is acyclic and thus bipartite. Choose a 2-color scheme for $G$, we can select the larger color class, which guarantees a separable set of size at least $|I|/2$.

By  applying this selection process  for $A_1, A_2$ and $B$, we extract a final separable set of size at least $|I|/8s^2$. This completes the proof.

\end{proof}

We now generalize the proof to the case of arbitrary constant width. 

\begin{proof}[Proof of \cref{lemma: 2D-QAC0 structure lemma}]

We begin by classifying the gates.

Unlike the width-2 case, we treat a gate acting on a column as a collection of size-2 gates. The rationale is as follows: if we can ensure that the light-cones remain disjoint under the action of these decomposed size-2 gates, they necessarily remain disjoint under the original column gate. Specifically, a gate acting on rows $(r_1, c), \dots, (r_2, c)$ is conceptualized as a sequence of gates acting pairwise on $(i, c)$ and $(j, c)$, where $r_1 \leq i < j \leq r_2$.

Based on this decomposition, we classify the gates into two kinds of sets: $\{A_i\}_{i \in [w]}$ and $\{B_{i,j}\}_{i,j \in [w]}$. Here, $A_i$ contains the  gates acting on the $i$-th row, while $B_{i,j}$ contain the gates acting vertically that involve specifically the $i$-th and $j$-th rows.

We proceed with a case-by-case analysis similar to the previous proof:

\textbf{Case 1:} The layer consists of gates acting on a single row.
The case here is identical to the width-2 setting. By applying the 1-dimensional argument, we can extract a separable set of size $|I|/2s$.

\textbf{Case 2:} The layer consists of gates vertically acting on two rows .
The case is different from the width-2 case, as the resulting intersection graph is no longer guaranteed to be acyclic. 

To address this, we partition the index set $I$ into four categories: $I_0, I_i, I_j,$ and $I_{ij}$.
Let $I_0$ be the set of indices where the light-cone intersects neither row $i$ nor row $j$.
And let $I_i$ (resp. $I_j$) be the set of indices where the light-cone intersects only with row $i$ (resp. row $j$).
Finally let $I_{ij}$ be the set of indices where the light-cone intersects with both rows $i$ and $j$.
At least one of these four sets must have a size greater than $|I|/4$. If the largest set is $I_0, I_i,$ or $I_j$, we simply select that set to complete the proof.

Now, assume that $|I_{ij}| \geq |I|/4$. In this case, we consider the projection of the light-cones onto the column indices. Let $P_{S} = \{k : \exists l, (l,k) \in S\}$. An observation is that for any two light-cones $S_1$ and $S_2$ that intersect in the intersection graph, it must hold that $P_{S_1} \subsetneq P_{S_2}$.
Consider the columns corresponding to the left and right endpoints of $S_1$. There are at most $2w$ light-cones that can span across these specific columns. Since the condition $P_{S_1} \subsetneq P_{S_2}$ implies that $S_2$ must cross one of the endpoints column of $S_1$, the maximum degree in the intersection graph is bounded by $2w$. Consequently, we can find an independent set of size $|I|/(2w+1)$.

Combining these results, we conclude that there exists a separable set of size at least
\begin{align*}
|I|/\br{(2w+1)^{w(w-1)/2} \cdot (2s)^w}.
\end{align*}
\end{proof}

\end{appendices}

\end{document}